\documentclass[11pt]{article}
\usepackage[margin=1in]{geometry}
\usepackage{complexity}
\usepackage{mathtools,amsmath,amssymb}
\usepackage{xspace}
\RequirePackage[colorlinks=true]{hyperref}
\hypersetup{
  linkcolor=[rgb]{0,0,0.5},
  citecolor=[rgb]{0, 0.5, 0},
  urlcolor=[rgb]{0.5, 0, 0}
}
\usepackage{dsfont}

\usepackage{amsthm}
\usepackage{thmtools,thm-restate}

\numberwithin{equation}{section}
\declaretheoremstyle[bodyfont=\it,qed=\qedsymbol]{noproofstyle}

\declaretheorem[name=Observation,numbered=no]{observation*}

\declaretheorem[numberlike=equation]{theorem}

\declaretheorem[name=Theorem,numbered=no]{theorem*}

\declaretheorem[numberlike=equation]{lemma}
\declaretheorem[name=Lemma,numbered=no]{lemma*}

\declaretheorem[numberlike=equation]{corollary}
\declaretheorem[name=Corollary,numbered=no]{corollary*}

\declaretheorem[numberlike=equation]{proposition}
\declaretheorem[name=Proposition,numbered=no]{proposition*}

\declaretheorem[numberlike=equation]{claim}
\declaretheorem[name=Claim,numbered=no]{claim*}

\declaretheorem[name=Conjecture,numbered=no]{conjecture*}

\declaretheorem[numberlike=equation]{question}
\declaretheorem[name=Question,numbered=no]{question*}

\declaretheoremstyle[bodyfont=\it,qed=$\lozenge$]{defstyle} 

\declaretheorem[numberlike=equation,style=defstyle]{definition}
\declaretheorem[unnumbered,name=Definition,style=defstyle]{definition*}

\declaretheorem[unnumbered,name=Example,style=defstyle]{example*}

\declaretheorem[unnumbered,name=Notation=defstyle]{notation*}

\declaretheorem[numberlike=equation,style=defstyle]{construction}
\declaretheorem[unnumbered,name=Construction,style=defstyle]{construction*}

\declaretheorem[unnumbered,name=Remark,style=defstyle]{remark*}


\usepackage{nth}
\usepackage{intcalc}
\usepackage{etoolbox}
\usepackage{xstring}
\hypersetup{
}

\usepackage{ifpdf}
\ifpdf
\else
\usepackage[quadpoints=false]{hypdvips}
\fi

\newcommand{\shortECCC}[2]{\texttt{\href{http://eccc.weizmann.ac.il/report/\ifnumcomp{#1}{>}{93}{19}{20}#1/#2/}{eccc:TR#1-#2}}}

\newcommand{\parseECCC}[1]{
\StrSubstitute{#1}{TR}{}[\tmpstring]%
\IfSubStr{\tmpstring}{/}{ 
\StrBefore{\tmpstring}{/}[\ecccyear]%
\StrBehind{\tmpstring}{/}[\ecccreport]%
}{
\StrBefore{\tmpstring}{-}[\ecccyear]%
\StrBehind{\tmpstring}{-}[\ecccreport]%
}%
\shortECCC{\ecccyear}{\ecccreport}}

	\renewcommand{\vec}[1]{{\mathbf{#1}}}

	\makeatletter
	\newcommand{\va}{{\vec{a}}\@ifnextchar{^}{\!\:}{}}
	\newcommand{\vb}{{\vec{b}}\@ifnextchar{^}{\!\:}{}}
	\newcommand{\vc}{{\vec{c}}\@ifnextchar{^}{\!\:}{}}
	\newcommand{\vd}{{\vec{d}}\@ifnextchar{^}{\!\:}{}}
	\newcommand{\ve}{{\vec{e}}\@ifnextchar{^}{\!\:}{}}
	\newcommand{\vy}{{\vec{y}}\@ifnextchar{^}{\!\:}{}}
	\newcommand{\vs}{{\vec{s}}\@ifnextchar{^}{\!\:}{}}
	\newcommand{\vt}{{\vec{t}}\@ifnextchar{^}{\!\:}{}}
	\newcommand{\vx}{{\vec{x}}\@ifnextchar{^}{}{}}		
	\newcommand{\vz}{{\vec{z}}\@ifnextchar{^}{\!\:}{}}

	\newcommand{\vY}{{\vec{Y}}\@ifnextchar{^}{\!\:}{}}
	\newcommand{\vX}{{\vec{X}}\@ifnextchar{^}{}{}}		
	\newcommand{\vZ}{{\vec{Z}}\@ifnextchar{^}{\!\:}{}}
	\newcommand{\vG}{{\vec{G}}\@ifnextchar{^}{\!\:}{}}
	
	\makeatother

\newcommand{\cF}{{\mathcal{F}}}

\newcommand{\cL}{{\mathcal{L}}}
\newcommand{\cU}{{\mathcal{U}}}
\newcommand{\cW}{{\mathcal{W}}}

\newcommand{\F}{\mathbb{F}}
\newcommand{\N}{\mathbb{N}}
\newcommand{\Z}{\mathbb{Z}}
\renewcommand{\E}{\mathbb{E}}

\newcommand{\ind}{\mathds{1}}
\newcommand{\set}[1]{\left\{#1\right\}}
\newcommand{\setdef}[2]{\set{#1 : #2}}
\newcommand{\abs}[1]{\left|#1\right|}
\newcommand{\floor}[1]{\left\lfloor#1\right\rfloor}
\newcommand{\ip}[1]{\left\langle#1\right\rangle}
\newcommand{\me}{\mathrm{e}}
\DeclareMathOperator{\rank}{rank}

\def\epsilon{\varepsilon}

\date{}

\title{Unbalancing Sets and an Almost Quadratic Lower Bound 
for Syntactically Multilinear Arithmetic Circuits}
\author{
Noga Alon\thanks{Sackler School of Mathematics
and Blavatnik School of
Computer Science, Tel Aviv University, Tel Aviv 6997801, Israel
and CMSA, Harvard University, Cambridge, MA 02138, USA.
Email: \texttt{nogaa@tau.ac.il}.  Research supported in part by
an ISF grant and by a GIF grant.
}
\and 
Mrinal Kumar\thanks{Center for Mathematical Sciences and Applications, Harvard University, Cambridge, Massachusetts, USA. Email: \texttt{mrinalkumar08@gmail.com}. Part of this work was done while visiting Tel Aviv University.}
\and%
Ben Lee Volk\thanks{Blavatnik School of Computer Science, Tel Aviv University, Tel Aviv, Israel, Email: \texttt{benleevolk@gmail.com}. The research leading to these results has received funding from the Israel Science Foundation (grant number 552/16).}
}
\begin{document}
\maketitle

\begin{abstract}
We prove a lower bound of $\Omega(n^2/\log^2 n)$ on the size of any
syntactically multilinear arithmetic circuit computing some explicit
multilinear polynomial $f(x_1, \ldots, x_n)$. Our approach expands and
improves upon a result of Raz, Shpilka and Yehudayoff (\cite{RSY08}),
who proved a lower bound of $\Omega(n^{4/3}/\log^2 n)$ for the same
polynomial. Our improvement follows from an asymptotically optimal
lower bound for a generalized version of Galvin's problem in extremal
set theory.
\end{abstract}

\thispagestyle{empty}
\newpage
\pagenumbering{arabic}

\section{Introduction}
\label{sec:intro}
An arithmetic circuit is one of the most natural and standard
computational models for computing multivariate polynomials. 
Such circuits provide
a succinct representation of multivariate polynomials, and in some sense,
they can be thought of as algebraic analogs of boolean circuits. Formally,
an arithmetic circuit over a field $\F$ and a set of variables $X =
\{x_1, x_2, \ldots, x_n\}$ is a directed acyclic graph in which every
vertex has in-degree either zero or two. The vertices of in-degree zero
(called \emph{leaves}) are labeled by variables in $X$ or elements of
$\F$, and the vertices of in-degree two are labeled by either $+$ (called
\emph{sum} gates) or $\times$ (called \emph{product} gates). A circuit
can have one or more vertices of out degree zero, known as the output
gates. The polynomial computed by a vertex in any\footnote{Throughout
this paper, we will use the terms gates and vertices interchangeably.}
given circuit is naturally defined in an inductive way: a leaf computes
the polynomial which is equal to its label. A sum gate computes the
polynomial which is the sum of the polynomials computed at its children
and a product gate computes the polynomial which is the product of the
polynomials at its children. The polynomials computed by a circuit are
the polynomials computed by its output gates. The size of an arithmetic
circuit is the number of vertices in it.

It is not hard to show (see, e.g., \cite{ckw11}) that a random
polynomial of degree $d = \poly(n)$ in $n$ variables cannot be computed
by an arithmetic circuit of size $\poly(n)$ with overwhelmingly high
probability. A fundamental problem in this area of research  is to
prove a similar super-polynomial lower bound for an \emph{explicit}
polynomial family. Unfortunately,  the problem continues to remain
wide open and the current best lower bound known for general arithmetic
circuits\footnote{In the rest of the paper, when we say a lower bound, we
always mean it for an explicit polynomial family.} is an $\Omega(n\log n)$
lower bound due to Strassen~\cite{Str73} and Baur and Strassen~\cite{BS83}
from more than three decades ago. The absence of substantial progress on
this general question has led to focus on the question of proving better
lower bounds for restricted and more structured subclasses of arithmetic
circuits. Arithmetic formulas~\cite{k85}, non-commutative arithmetic
circuits~\cite{nis91}, algebraic branching programs~\cite{k17}, and
low depth arithmetic circuits~\cite{nw1997, grigoriev98, gr00, Raz10a,
gkks13, FLMS13, KLSS, KS14, KumarSapth15} are some such subclasses
which have been studied from this perspective. For an overview of the
definition of these models and the state of art for lower bounds for them,
we refer the reader to the surveys of Shpilka and Yehudayoff~\cite{sy}
and Saptharishi~\cite{github}.

Several of the most important polynomials in algebraic complexity
and in mathematics in general are multilinear. Notable examples
include the determinant, the permanent, and the elementary symmetric
polynomials. Therefore, one subclass which has received a lot of attention
in the last two decades and will be the focus of this paper is the class
of \emph{multilinear} arithmetic circuits.

\subsection{Multilinear arithmetic circuits}
For an arithmetic circuit $\Psi$ and a vertex $v$ in $\Psi$, we denote
by $X_v$ the set of variables $x_i$ such that there is a directed path
from a leaf labeled by $x_i$ to $v$; in this case, we also say that $v$
\emph{depends} on $x_i$\footnote{We remark that this is a syntactic
notion of dependency, since it is possible that every monomial with
$x_i$ might get canceled in the intermediate computation and might not
eventually appear in the polynomial computed at $v$.}. A polynomial $P$
is said to be multilinear if the individual degree of every variable in
$P$ is at most one.

An arithmetic circuit $\Psi$ is said to be \emph{syntactically}
multilinear if for every multiplication gate $v$ in $\Psi$ with children
$u$ and $w$, the sets of variables $X_u$ and $X_w$ are disjoint. We say
that $\Psi$ is \emph{semantically} multilinear if the polynomial computed
at every vertex is a multilinear polynomial. Observe that if $\Psi$
is a syntactically multilinear circuit, then it is also semantically
multilinear. However, it is not clear if every semantically multilinear
circuit can be efficiently simulated by a syntactically multilinear
circuit.

A multilinear circuit is a natural model for computing multilinear
polynomials, but it is not necessarily the most efficient one. Indeed,
it is remarkable that all the constructions of polynomial size arithmetic
circuits for the determinant \cite{Csanky76, Berk84, MV97}, which are
fundamentally different from one another, nevertheless share the property
of being \emph{non}-multilinear, namely, they involve non-multilinear
intermediate computations which eventually cancel out. There are no
subexponential-size multilinear circuits known for the determinant,
and one may very well conjecture these do not exist at all.

Multilinear circuits were first studied by Nisan and
Wigderson~\cite{nw1997}. Subsequently, Raz~\cite{raz2004} defined the
notion of multilinear formulas\footnote{For formulas, it is known that
syntactic multilinearity and semantically multilinearity are equivalent
(See, e.g., \cite{raz2004}).} and showed that any multilinear formula
computing the determinant or the permanent of an $n\times n$ variable
matrix must have super-polynomial size. In a follow up work~\cite{Raz06},
Raz further strengthed the results in~\cite{raz2004} and showed that
there is a family of multilinear polynomials in $n$ variables which can
be computed by a $\poly(n)$ size syntactically multilinear 
arithmetic circuits but
require multilinear formulas of size $n^{\Omega(\log n)}$.

Building on the ideas and techniques developed in~\cite{raz2004}, Raz
and Yehudayoff~\cite{raz-yehudayoff} showed an exponential lower bound
for syntactically multilinear circuits of constant depth. Interestingly,
they also showed a super-polynomial separation between depth $\Delta$ and
depth $\Delta+1$ syntactically multilinear circuits for constant $\Delta$.

In spite of the aforementioned progress on the question of lower bounds
for multilinear formulas and bounded depth syntactically multilinear
circuits, there was no $\Omega(n^{1+\epsilon})$ lower bounds known for
general syntactically multilinear circuits for any constant $\epsilon >
0$. In fact, the results in~\cite{Raz06} show that the main technical
idea underlying the results in~\cite{raz2004, Raz06, raz-yehudayoff}
is unlikely to directly give a super-polynomial lower bound for general
syntactically multilinear circuits. However, a weaker super-linear lower
bound still seemed conceivable via similar techniques.

Raz, Shpilka and Yehudayoff~\cite{RSY08} showed that this is indeed the
case. By a sophisticated and careful application of the techniques
in~\cite{raz2004} along with several additional  ideas, they
established an
$\Omega\left( \frac{n^{4/3}}{\log^2 n}\right)$  lower bound for an
explicit $n$ variate polynomial. Since then, this has remained the
best lower bound known for syntactically multilinear circuits. In this
paper, we improve this result by showing an almost quadratic lower
bound for syntactically multilinear circuits for an explicit $n$ variate
polynomial. In fact, the family of hard polynomials in this paper is the
same as the one used in~\cite{RSY08}. We now formally state our result.

\begin{theorem}
\label{thm:lower-bound-informal}
There is an explicit family of polynomials $\{f_n\}$, where
$f_n$ is an $n$ variate multilinear polynomial, such
that any syntactically multilinear arithmetic circuit computing $f_n$
must have size at least $\Omega(n^2/\log^2 n)$.
\end{theorem}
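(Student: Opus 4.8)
The plan is to follow the framework of Raz, Shpilka and Yehudayoff~\cite{RSY08}, isolating the combinatorial bottleneck in their argument and replacing it with an asymptotically optimal bound for the relevant set-theoretic problem. Recall that the method of~\cite{raz2004, RSY08} reduces proving a size lower bound for syntactically multilinear circuits to exhibiting a multilinear polynomial $f_n$ of high \emph{partial derivative rank} with respect to many balanced partitions of the variables: one picks a random restriction and a random partition of the surviving variables into two equal halves $Y$ and $Z$, argues that a small circuit becomes, after the restriction, a sum of few ``simple'' products in which the variable sets on the two sides are very unbalanced with respect to the partition, and then bounds the rank of the associated partial-derivatives matrix of each such product. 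The key quantitative gain in~\cite{RSY08} over earlier work is that they only need the partition to be ``unbalancing'' in a weaker, more robust sense, which is what opens the door to an almost-quadratic bound.

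First I would set up the hard polynomial $f_n$ exactly as in~\cite{RSY08} --- a multilinear polynomial defined through a suitable explicit bipartite or layered structure whose partial-derivative matrices have full rank under the identity partition --- and recall the reduction lemma from~\cite{raz2004, RSY08} that turns a syntactically multilinear circuit of size $s$ into, after a random restriction keeping about $n/\mathrm{polylog}(n)$ variables, a polynomial of the form $\sum_i g_i h_i$ with $s$ terms, where for each $i$ the variable sets $X_{g_i}, X_{h_i}$ partition the surviving variables and one of them is ``small'' (of size at most, say, $k$) relative to a random balanced partition. Second, I would phrase the combinatorial core: one is given a universe that is randomly split into two halves, and a family of $s$ subsets, and one wants that with noticeable probability \emph{every} set in the family is badly unbalanced across the split --- this is precisely the generalized Galvin problem alluded to in the abstract. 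The heart of the improvement is the claim (which the excerpt promises earlier in the paper, and which I would invoke as a black box here) that the optimal size of a ``balancing'' family is $\Theta(n^2)$ rather than the $\Theta(n^{4/3})$-type bound implicit in~\cite{RSY08}; equivalently, any family of $o(n^2/\mathrm{polylog})$ sets fails to be balancing, so a circuit of that size admits a partition making all its parts unbalanced.

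Third, conditioned on such a good partition, I would run the rank argument: the partial-derivatives matrix $M_P(f_n)$ of the restricted hard polynomial has full (exponential) rank by construction, while each unbalanced product $g_i h_i$ contributes a matrix of rank at most $2^{k}$ where $k$ is the small side, so $2^{n'} \le s \cdot 2^{k}$ with $n'$ the number of surviving variables; choosing parameters so that $n' - k = \Omega(\log n)$ or more forces $s = \Omega(n^2/\log^2 n)$. I would then undo the conditioning --- since the partition is good with probability bounded away from $0$ over the random restriction and random split, and since rank is preserved in expectation by an averaging/union-bound argument standard in this area --- to conclude the bound for $f_n$ itself. The main obstacle, and the place where real work is needed, is the interface between the two parts: one must verify that the notion of ``unbalanced'' delivered by the optimal Galvin-type bound is exactly the notion under which the rank-of-a-product estimate degrades enough, and that the random-restriction machinery of~\cite{raz2004,RSY08} can be tuned (in the sizes of the small side $k$, the number of surviving variables, and the failure probabilities) so that the two bounds compose without losing more than polylogarithmic factors. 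Matching the parameters of the extremal set-theoretic statement to the parameters demanded by the circuit reduction is the crux; the rank computation for a single product and the final averaging are comparatively routine and follow~\cite{raz2004} closely.
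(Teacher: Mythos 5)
Your proposal correctly identifies the partial-derivative-matrix method and the role of an unbalancing result for set families, but it misses the central structural step of the paper and as a result misattributes where the quadratic gain comes from. Concretely, there is no random restriction in this argument, and the claimed $\Theta(n^2)$ bound for the "balancing" family is not what is proved or needed: the unbalancing theorem (the generalized Galvin statement) gives a bound of order $n/\tau = \Theta(n/\log n)$, not $n^2$. If you only run the argument you sketch --- decompose the circuit into $\sum_i g_i h_i$, find a partition unbalancing all $X_{h_i}$, and compare ranks --- you obtain a lower bound of $\Omega(n/\log n)$ on the number of summands, hence only a \emph{near-linear} circuit lower bound. That alone cannot reach $n^2$.

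The missing idea is the syntactically multilinear Baur--Strassen theorem (Theorem 3.1 of \cite{RSY08}), which converts a size-$s$ circuit for $f$ into a size-$O(s)$ circuit $\Psi'$ with $n$ output gates computing $\partial f/\partial x_1,\ldots,\partial f/\partial x_n$, with the extra property that $x_i$ does not appear in the subcircuit $\Psi'_i$ rooted at the $i$-th output. One then defines the ``lower-leveled'' gates $\cL_i$ of each subcircuit $\Psi'_i$ (gates with intermediate variable-support whose parent has very large support), obtains a decomposition of $\partial f/\partial x_i$ of the form $\sum_j g_j h_j + g$ from those gates, and applies the $\Omega(n/\log n)$ unbalancing bound to conclude $|\cL_i| = \Omega(n/\log n)$ for \emph{each} $i$. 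Finally, because every corresponding upper-leveled gate $u$ has $|X_u| \ge n - O(\log n)$, the Baur--Strassen property implies $u$ can appear in at most $O(\log n)$ of the subcircuits; this bounded overcount lets you sum over $i\in[n]$ and get $|\Psi'| = \Omega(n\cdot (n/\log n)/\log n) = \Omega(n^2/\log^2 n)$. Without this differentiation-and-count step, and the careful bookkeeping of how many subcircuits each high-support gate can belong to, the improved unbalancing lemma only yields a near-linear bound. Your ``interface'' worry is real, but the actual interface is between the unbalancing bound and the lower-leveled decomposition of the partial derivatives, not between a random restriction and a random split.
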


For our proof, we follow the strategy in~\cite{RSY08}. Our improvement
comes from an improvement in a key lemma in~\cite{RSY08} which addresses
the following combinatorial problem.

\begin{question}~\label{ques:unbalancing}
What is the minimal integer $m = m(n)$ for which there is a family of
subsets $S_1, S_2, \ldots, S_m \subseteq [n]$, each $S_i$ satisfying
$6 \log n \leq |S_i| \leq n-6 \log n$ such that for every $T\subseteq
[n], |T| = \floor{n/2}$, there exists an $i \in [m]$ with $|T\cap S_i|
\in \{\floor{|S_i|/2} -3\log n, \floor{|S_i|/2}-3\log n + 1, \ldots,
\floor{|S_i|/2} + 3\log n   \}$?
\end{question}

Raz, Shpilka and Yehudayoff~\cite{RSY08} showed
that $m(n) \geq \Omega\left({n^{1/3}}/{\log n}\right)$. For our proof,
we show that $m(n) \geq \Omega\left({n}/{\log n} \right)$.

In addition to its application to the proof
of~\autoref{thm:lower-bound-informal},~\autoref{ques:unbalancing}
seems to be a natural problem in extremal combinatorics and might be
of independent interest, and special cases thereof were studied in
the combinatorics literature. In the next section, we briefly discuss
the state of the art of this question and state our main technical result
about it in~\autoref{thm:intro:unbalancing}.

\subsection{Unbalancing Sets}

The following question, which is of very similar nature to
\autoref{ques:unbalancing}, is known as Galvin's problem (see \cite{FR87,
EFIN87}): What is the minimal integer $m=m(n)$, for which there exists a
family of subsets $S_1, \ldots, S_m \subseteq [4n]$, each of size $2n$,
such that for every subset $T \subseteq [4n]$ of size $2n$ there exists
some $i \in [m]$ such that $|T \cap S_i| = n$?

It is not hard to show that $m(n) \le 2n$. Indeed, let $S_i = \set{i, i+1,
\ldots, i+2n-1}$, for $i \in \set{1,2,\ldots, 2n+1}$, and let $\alpha_i
(T) = | T \cap S_i | - |([4n]\setminus T) \cap S_i|$. Then $\alpha_i(T)$
is always an even integer, $\alpha_1(T) = -\alpha_{2n+1}(T)$, and
$\alpha_{i} - \alpha_{i+1}(T) \in \set{0, \pm 2}$ if $i \le 2n$. By a
discrete version of the intermediate value theorem, it follows there
exists $j \in [2n]$ such that $\alpha_j(T) = 0$, which implies that
exactly $n$ elements of $S_j$ belong to $T$. Thus, the family $\set{S_1,
\ldots, S_{2n}}$ satisfies this property.

As for lower bounds, a counting argument shows that $m(n) =
\Omega(\sqrt{n})$, since for each fixed $S$ of size $[2n]$ and random $T$
of size $2n$,
\[
\Pr[|T \cap S| = n] = \frac{\binom{2n}{n} \cdot 
\binom{2n}{n}}{\binom{4n}{2n}} = \Theta\left(\frac{1}{\sqrt{n}}\right).
\]

Frankl and R\"odl \cite{FR87} were able to show that $m(n) \ge
\varepsilon n$ for some $\varepsilon >0$ if $n$ is odd, and Enomoto,
Frankl, Ito and Nomura \cite{EFIN87} proved that $m(n) \ge 2n$ if $n$
is odd, which implies that even the constant in the construction given
above is optimal. Until this work, the question was still open for even
values of $n$: in fact, Markert and West (unpublished, see \cite{EFIN87})
showed that for $n \in \set{2,4}$, $m(n) < 2n$.

For our purposes, we need to generalize Galvin's problem in two ways. The
first is to lift the restriction on the set sizes. The second is to
ask how small can the size of the family $\cF = \set{S_1, \ldots, S_m}
\subseteq 2^{[n]}$ be if we merely assume each balanced partition $T$
is ``$\tau$-balanced'' on some $S \in \cF$, namely, if $\abs{|T \cap S|
- |S|/2|} \le \tau$ for some $S$ (the main case of interest for us is
$\tau=O(\log n)$). Of course, since $T$ itself is balanced, very small
or very large sets are always $\tau$-balanced, and thus we impose the
(tight) non-triviality condition $2 \tau  \le |S| \le n-2\tau$ for every $S
\in \cF$.

Once again, by defining $S_i = \set{i, i+1, \ldots, i+n/2-1}$ ($n$
is always assumed to be even), the family $\cF = \set{S_1, S_{1+\tau},
S_{1+2\tau}, ..., S_{1+\lfloor n/(2\tau) \rfloor \cdot \tau} }$ gives a
construction of size $O(n/\tau)$ such that every balanced partition $T$
is $\tau$-balanced on some $S \in \cF$.

It is natural to conjecture
that, perhaps up to a constant, this construction is optimal. Indeed,
this is what we prove here.

\begin{theorem}
\label{thm:intro:unbalancing}
Let $n$ be any large enough even number, and 
let $\tau  \geq 1$ be an integer.
Let $S_1, \ldots, S_m \subseteq [n]$ be sets
such that for all $i \in [m]$, $2 \tau \le |S_i| \le n - 2\tau$. Further,
assume that for every $Y \subseteq [n]$ of size $n/2$ there exists $i
\in [m]$ such that $\abs{|Y \cap S_i| - |S_i|/2} < \tau$. Then, $m \ge
\Omega(n/\tau)$.
\end{theorem}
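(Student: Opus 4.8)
The plan is to argue by contradiction: assuming $m \le \eps n/\tau$ for a sufficiently small absolute constant $\eps$, I will construct a set $Y \subseteq [n]$ with $|Y| = n/2$ that is ``$\tau$-unbalanced'' on every $S_i$, meaning $\abs{|Y \cap S_i| - |S_i|/2} \ge \tau$ for all $i$, contradicting the hypothesis. Since $\abs{|Y \cap S| - |S|/2} = \abs{|Y \cap ([n] \setminus S)| - |[n]\setminus S|/2}$ whenever $|Y| = n/2$, one may replace each $S_i$ by its complement and hence assume $|S_i| \le n/2$ throughout. A quick first-moment bound shows why the theorem is plausible but also where the difficulty lies: for a uniformly random $Y$ of size $n/2$, $|Y \cap S_i|$ is hypergeometric with standard deviation $\Theta(\sqrt{|S_i|(n-|S_i|)/n})$, so $\Pr[\abs{|Y \cap S_i| - |S_i|/2} < \tau] = O(\tau\sqrt{n/(|S_i|(n - |S_i|))})$, and a union bound already gives $m = \Omega(\sqrt n/\tau)$ when all the $S_i$ have size $\Theta(n)$. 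Pushing this to the optimal $\Omega(n/\tau)$, however, cannot be done by a union bound over any oblivious distribution: for $|S_i| = \Theta(n)$ no distribution on balanced $Y$ anticoncentrates $|Y \cap S_i|$ below scale $\sqrt n$, while for $|S_i| = \Theta(\tau)$ every distribution catches $S_i$ with probability $1 - O(1/\tau)$. Thus the small sets must be handled by an explicit choice of $Y$, and the large sets by an argument that exploits the combinatorial structure of the family.

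Concretely, I would build $Y$ by a commitment process that maintains a partition $[n] = P \sqcup N \sqcup F$, where $P$ are elements placed in $Y$, $N$ are elements kept out of $Y$, and $F$ are undecided, starting from $P = N = \emptyset$ and keeping the invariant $|P|, |N| \le n/2$ (so the process can always be completed to a balanced $Y$). Call $S_i$ \emph{dead} once $|S_i \cap P| \ge |S_i|/2 + \tau$ or $|S_i \cap N| \ge |S_i|/2 + \tau$; a dead set stays dead and is $\tau$-unbalanced on every completion. The basic move is that a set $S_i$ with $S_i \cap P = \emptyset$ all of whose elements are thrown into $N$ becomes dead, precisely because $|S_i| \ge 2\tau$ forces $|S_i| \ge |S_i|/2 + \tau$; this kills ``small'' sets cheaply. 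The process runs in $O(\log(n/\tau))$ rounds: in round $k$, among the alive sets select those whose free part $S_i \cap F$ has size at most a threshold $K_k$ (at most $m$ of them), commit the union of their free parts --- of total size at most $mK_k$ --- to $N$, and kill them all at once. A geometric choice of the $K_k$ keeps $\sum_k m K_k \le n/2$, so $|N|$ never exceeds $n/2$, while ensuring the free part of any surviving set eventually falls below the threshold; once no alive set remains, complete $F$ to make $|Y| = n/2$.

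\textbf{The main obstacle} is exactly the large sets, of size $\Theta(n)$: they cannot be killed by committing their contents (that costs $\Theta(n)$ per set and the total commitment budget is $n$), nor missed by a random choice on the free coordinates (anticoncentration there is only $\Theta(\tau/\sqrt n)$). The commitment process handles such a set only if the union committed in each round bites substantially into it, which forces one to use the overlap structure: a large set nearly disjoint from the rest is trivially missed by taking $Y \supseteq S_i$, whereas large sets that do overlap the smaller ones get whittled down round by round. Turning this into a clean accounting --- charging the cost of killing each set against a budget \emph{shared} among overlapping sets, and choosing in each round both the threshold $K_k$ and the side ($P$ or $N$) to commit to so that (a) no alive set is ever left with empty free part while undead, (b) the commitments to each side stay below $n/2$, and (c) every set dies within $O(\log(n/\tau))$ rounds --- is the technical crux. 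One natural way to ensure (a) is to process the sets in order of increasing size and maintain that every alive set has had elements committed to at most one of the two sides; with that in hand the counting should close and yield $m = \Omega(n/\tau)$.
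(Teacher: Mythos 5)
Your approach is genuinely different from the paper's and, unfortunately, does not close. The paper proves the theorem by an algebraic argument over $\F_p$: it forms the polynomial $f = \prod_j \prod_{t=-\tau+1}^{\tau}(\ip{x,\ind_{S_j}} - \lfloor |S_j|/2 \rfloor - t)$, observes that the hypothesis forces $f$ to vanish on all indicators of weight $n/2$, uses a carefully constructed unbalanced set $T$ of weight $3p$ (with $4p$ close to $n$) to show $f(\ind_T) \neq 0 \bmod p$, and then invokes the Heged\H{u}s degree lower bound (\autoref{lem:hegedus}) to conclude $\deg f \ge p$, hence $m \ge p/(2\tau) = \Omega(n/\tau)$. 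The Heged\H{u}s lemma is the nontrivial engine, and the paper explicitly remarks that no elementary proof of it is known; this is a warning sign that a fully combinatorial proof of \autoref{thm:intro:unbalancing} is unlikely to be routine.

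The gap in your proposal is exactly where you flag the ``technical crux'': the large sets. Your commitment process kills a set $S_i$ once a $(1/2+\tau/|S_i|)$-fraction of its elements is committed to a single side; for small sets this costs $O(\tau)$ per kill and the budget $m \cdot O(\tau) = O(\eps n)$ is fine, but for sets of size $\Theta(n)$ the per-set demand is $\Theta(n)$ while the total budget is $n$, so you cannot kill more than a constant number of them by direct commitment. The only way out is to argue that committing one set's free part makes large progress on many others, but you give no charging scheme, no potential function, and no argument that the geometric threshold schedule $K_k$ ever catches the large sets; indeed, if the family consists entirely of $m=\eps n/\tau$ sets of size $n/2$ with, say, random-like pairwise intersections of size $\approx n/4$, the free parts stay $\Theta(n)$ until the budget is already exhausted, and neither ``taking $Y \supseteq S_i$'' (impossible for more than two such sets) nor ``whittling down via overlap with smaller sets'' (there are none) applies. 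The sentence ``with that in hand the counting should close'' is a hope, not a step, and it is precisely the step the theorem requires. The moment-method preamble (giving only $\Omega(\sqrt n/\tau)$ via union bound, which you correctly identify as a barrier) and the handling of small sets are both sound, but they do not combine with anything you write to cross the $\sqrt n$ barrier; the paper crosses it by replacing the probabilistic/combinatorial estimate with the algebraic degree bound.
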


In particular, \autoref{thm:intro:unbalancing} proves a linear lower
bound $m=\Omega(n)$ for the original problem of Galvin, even when the
universe size is of the form $4k$ for even $k$.

We remark that the relevance of problems of this form to lower bounds
in algebraic complexity was also observed by Jansen \cite{Jansen08}
who considered the problem of obtaining a lower bound on homogenous
syntactically multilinear algebraic branching program (which is a weaker
model than syntactically multilinear circuits), and essentially proposed
\autoref{thm:intro:unbalancing} as a conjecture. In fact, a special case
of this theorem (see~\autoref{thm:unbalancing}), which has a simpler proof,
is already enough to derive the improved lower bounds for syntactically
multilinear circuits.

Alon, Bergmann, Coppersmith and Odlyzko \cite{ABCO88} considered a very
similar problem of balancing $\pm 1$-vectors: they studied families of
vectors $\cF=\set{v_1, \ldots, v_m}$ such that $v_i \in\set{\pm 1}^n$
for $i \in [m]$, which satisfy the properties that for every $w \in
\set{\pm 1}^n$ (not necessarily balanced), there exists $i \in [m]$
such that $|\ip{v_i,w}| \le d$. They generalized a construction of
Knuth \cite{Knuth86} and proved a matching lower bound which together
showed that $m= \lceil n/(d+1) \rceil$ is both necessary and sufficient
for such a set to exist. Galvin's problem seems like ``the $\set{0,1}$
version'' of the same problem, but, to quote from \cite{ABCO88}, there
does not seem to be any simple dependence between the problems.

\subsection{Proof overview}
In this section, we discuss the main ideas and give a brief
sketch of the proofs of~\autoref{thm:lower-bound-informal}
and~\autoref{thm:intro:unbalancing}. Since our proof heavily depends
on the proof in~\cite{RSY08} and follows the same strategy, we start by
revisiting the main steps in their proof and noting the key differences
between the proof in~\cite{RSY08} and our proof. We also outline the
reduction to the combinatorial problem of unbalancing set families
in~\autoref{ques:unbalancing}.

\subsubsection*{Proof sketch of~\cite{RSY08}}
The proof in~\cite{RSY08} starts by proving a syntactically multilinear
analog of a classical result of Baur and Strassen~\cite{BS83}, where
it was shown that if an $n$ variate polynomial $f$ is computable by an
arithmetic circuit $\Psi$ of size $s(n)$, then there is an arithmetic
circuit $\Psi'$ of size at most $5s(n)$  with $n$ outputs such that
the $i$-th output gate of $\Psi'$ computes $f_i = \frac{\partial
f}{\partial x_i}$. Raz, Shpilka and Yehudayoff show that if $\Psi$
is syntactically multilinear, then the circuit $\Psi'$ continues to
be syntactically multilinear. Additionally, there is no directed
path from a leaf labeled by $x_i$ to the output gate computing
$f_i$.\footnote{See~\autoref{thm:BS-multilinear} for a formal statement.}

Once we have this structural result, it would suffice to prove a lower
bound on the size of $\Psi'$. For brevity, we denote the subcircuit of
$\Psi'$ rooted at the output gate computing $f_i$ by $\Psi_i'$. As a
key step of the proof in~\cite{RSY08}, the authors identify certain sets
of vertices $\cU_1,\cU_2, \ldots, \cU_n$ in $\Psi'$ with the following
properties.
\begin{itemize}
\item For every $i\in [n]$,  $\cU_i$ is a subset of  vertices in  $\Psi_i'$.
\item For every $i \in [n]$ and $v\in \cU_i$, the number of $j \neq i$ such that $v \in \cU_j$ is not too large (at most $O(\log n)$). 
\end{itemize}
Observe that at this point, showing a lower bound of $s'(n)$ on the
size of each $\cU_i$ implies a lower bound of 
$\Omega(ns'(n)/{\log n})$ on the
size of $\Psi'$ and hence $\Psi$. In~\cite{RSY08}, the authors show
that there is an explicit $f$ such that each $\cU_i$ must have size
at least $\Omega(n^{1/3}/\log n)$, thereby getting a lower bound of
$\Omega(n^{4/3}/\log^2 n)$ on the size of $\Psi$.

For our proof, we  follow precisely this high level strategy. Our
improvement in the lower bound comes from showing that each $\cU_i$ must
be of size at least $\Omega(n/\log n)$ and not just $\Omega(n^{1/3}/\log
n)$ as shown in~\cite{RSY08}. We now elaborate further on the main
ideas in this step in~\cite{RSY08} and the differences with the proofs
in this paper.

We start with some intuition into the  definition of the sets $\cU_i$
in~\cite{RSY08}. Consider a vertex $v$ in $\Psi'$ which depends on at
least $k$ variables. Without loss of generality, let these variables
be $\{x_1, x_2, \ldots, x_k\}$. From~\autoref{item:key-property}
in~\autoref{thm:BS-multilinear}, we know that the variable $x_i$ does
not appear in the subcircuit $\Psi_i'$. Therefore, the vertex $v$ cannot
appear in the subcircuits $\Psi_1', \Psi_2', \ldots, \Psi_k'$. So, if we
define the set $\cU_i$ as the set of vertices in $\Psi_i'$ which depend on
at least $k$ variables, then $\cU_i$ must be disjoint from vertices in at
least $k$ of the subcircuits $\Psi_1', \Psi_2', \ldots, \Psi_n'$. Picking
$k \geq n-O(\log n)$ would give us the desired property. So, if we
can prove a lower bound on the size of the set $\cU_i$, we would be
done. However, the definition of the set $\cU_i$ so far turns out to be
too general, and we do not know a way of directly proving a lower bound
on its size.\footnote{Indeed, it is not even immediately clear if the
$\cU_i$ has any other gates apart from the output gate of $\Psi_i'$.}

To circumvent this obstacle,~\cite{RSY08} define the set $\cU_i$ (called
the \emph{upper leveled} gates in $\Psi_i'$) as the set of all vertices in
$\Psi_i'$ which depend on at least $n-6\log n$ variables and have a child
which depends on more than $6\log n$ variables and less than $n-6\log n$
variables. This additional structure is helpful in proving a lower bound
on the size of $\cU_i$. We now discuss this in some more detail.

For every $i \in [n]$, let $\cL_i$ be the set of vertices $u$ in
$\Psi_i'$, such that $6 \log n < |X_u| < n-6 \log n$, and $u$ has a
parent in $\cU_i$. These gates are referred to as \emph{lower leveled}
gates. Observe that $|\cU_i| \geq \frac{|\cL_i|}{2}$, since the in-degree
of every vertex in $\psi_i'$ is at most $2$. The key structural property
of the set $\cL_i$ is the following (see Proposition 5.5 in~\cite{RSY08}).

\begin{lemma}[\cite{RSY08}]~\label{lem:lower-level-decomposition}
Let $i \in [n]$, and let $h_1, h_2, \ldots, h_{\ell}$ be the polynomials computed by the gates in $\cL_i$. Then, there exist multilinear polynomials $g_1, g_2, \ldots, g_{\ell}, g$ such that   
\begin{equation}\label{eqn:lower-level-decomposition}
f_i = \sum_{j \in [\ell]} g_j\cdot h_j  + g 
\end{equation}
where 
\begin{itemize}
\item For every $j \in [\ell]$, $h_j$ and $g_j$ are variable disjoint.
\item The degree of $g$ is at most $O(\log n)$.
\end{itemize}
\end{lemma}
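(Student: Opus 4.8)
The plan is to prove a refined statement by induction on the structure of $\Psi_i'$ and then read off the lemma from the output gate; write $P_v$ for the polynomial computed at a gate $v$. Call a gate $v$ \emph{large} if $|X_v| \ge n - 6\log n$, \emph{small} if $|X_v| \le 6\log n$, and \emph{medium} otherwise, and assume the output gate of $\Psi_i'$ is large (this holds for the hard polynomial $f$, since $f_i$ depends on all but at most $6\log n$ variables). Two structural facts drive everything. First, if $v$ is large and has a medium child $u$, then $v \in \cU_i$ by definition of $\cU_i$, and hence $u \in \cL_i$ by definition of $\cL_i$ --- so every medium child of a large gate lies in $\cL_i$. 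Second, by syntactic multilinearity a product gate $v = a \times b$ has $|X_a| + |X_b| = |X_v| \le n$, so at most one of its children is large, and if one child is small then the other has at least $|X_v| - 6\log n \ge n - 12\log n$ variables and is therefore not small. The refined claim is: for every large gate $v$ one can write $P_v = \sum_{u \in \cL_i} q^v_u \cdot P_u + r^v$, where each $q^v_u$ is a multilinear polynomial all of whose variables lie in $X_v \setminus X_u$ (in particular $q^v_u$ and $P_u$ are variable-disjoint), $r^v$ is a multilinear polynomial in the variables $X_v$ of degree at most $12\log n$, and $q^v_u = 0$ unless $u$ occurs in the subcircuit rooted at $v$.

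I would prove this by induction along a reverse topological order, using for a child $c$ of a large gate the ``interpreted decomposition'' $\widehat D(c)$: the decomposition $D(c)$ if $c$ is large; the single term $1 \cdot P_c$ (valid since then $c \in \cL_i$) if $c$ is medium; and the pure remainder $P_c$, of degree $\le 6\log n$, if $c$ is small. For a large sum gate $v = a + b$ one adds $\widehat D(a)$ and $\widehat D(b)$ coordinatewise; the new remainder is a sum of remainders, so its degree does not grow. For a large product gate $v = a \times b$ there are two cases. If some child, say $b$, is medium (hence in $\cL_i$), then $X_b$ is disjoint from $X_a$ and hence from every variable set appearing in $D(a)$; expanding $P_v = P_a \cdot P_b$ and distributing, one \emph{absorbs} the factor $P_b$ into the coefficients --- the coefficient of the term $P_b$ becomes $r^a$ (now a coefficient, so its degree is immaterial), each other coefficient $q^a_u$ is replaced by $q^a_u \cdot P_b$, and the new remainder is $0$; variable-disjointness survives because $X_b \cap X_a = \emptyset$, and no degree is accumulated. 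Otherwise, by the second fact, $v$ has one large child $a$ and one small child $b$; one recurses into $a$ and multiplies all of $D(a)$ through by $P_b$, which preserves multilinearity and the disjointness of each coefficient from its $P_u$ (again since $X_b \cap X_a = \emptyset$), at the price of increasing $\deg r^v$ by $|X_b| \le 6\log n$. The cases of two small or two large children cannot occur at a large product gate.

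The crux is the degree bound on $r^v$. The only step that raises the degree of a remainder is multiplication by $P_b$ for a small sibling $b$, which costs $|X_b| \le 6\log n$. Along any root-to-leaf branch of the recursion --- which always descends from a large gate to one of its children until it hits a small gate or a gate in $\cL_i$ (note that medium and small gates have only medium and small children, so the large gates on a branch form a prefix) --- the count $|X_v|$ is non-increasing, starts at most $n$, and at the last large gate is still at least $n - 6\log n$; since it drops by exactly $|X_b|$ at each product step with a small sibling $b$, these $|X_b|$'s sum to at most $6\log n$, so the accumulated small-sibling multiplier along the branch is a product of pairwise variable-disjoint multilinear polynomials of total degree at most $6\log n$. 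Since every contribution to a remainder is $P_c$ for a small gate $c$ times such an accumulated multiplier, its degree is at most $12\log n$, whence $\deg r^v \le 12\log n = O(\log n)$. Applying the refined claim to the output gate of $\Psi_i'$ and relabelling $\cL_i = \{u_1,\dots,u_\ell\}$, $h_j = P_{u_j}$, $g_j = q^{\mathrm{root}}_{u_j}$ and $g = r^{\mathrm{root}}$ gives the lemma. I expect the main difficulty to be exactly this degree control, which is what forces the bookkeeping above: one must consistently steer every potentially high-degree polynomial into a coefficient $g_j$, where there is no degree constraint, rather than into $g$, while using syntactic multilinearity at each product gate to keep every coefficient $g_j$ variable-disjoint from the corresponding $h_j$.
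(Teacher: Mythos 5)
Your proof is correct, and the approach --- a structural induction on $\Psi_i'$ with the large/medium/small trichotomy determined by $|X_v|$, together with degree bookkeeping along root-to-leaf branches --- is the standard one. The paper does not reprove this lemma; it cites it as Proposition 5.5 (and, for the degree bound, Proposition 5.8) of Raz, Shpilka and Yehudayoff, and the argument there is of the same inductive shape, so there is no genuinely different route for me to contrast. Two aspects of your write-up are worth flagging as the load-bearing ideas, since they are where a na\"ive attempt would go wrong. First, the ``interpreted decomposition'' $\widehat D(c)$ is exactly what makes the induction close: it correctly routes a medium child into a $g_j\cdot h_j$ term with coefficient $1$, a small child into the low-degree remainder, and a large child into the recursion, and at a product gate it steers the potentially high-degree $r^a$ into the \emph{coefficient} of $P_b$ rather than into $g$, where there is no degree constraint. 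Second, and this is the crux, your degree argument is correctly path-based rather than gate-based: a per-gate invariant of the form $\deg r^v \le 6\log n + (n-|X_v|)$ is \emph{not} preserved by sum gates (a large sum gate $v$ can have a large child $a$ with $|X_a| < |X_v|$, so $\deg r^a$ may exceed the would-be bound for $v$), and one really does need to trace each contribution to the remainder along its branch and charge the accumulated small-sibling degrees against the total drop in $|X_\cdot|$ from the root down to the last large gate, which is at most $6\log n$. Your observation that at a product gate $|X_v|$ drops by \emph{exactly} $|X_b|$ (by syntactic multilinearity) while at a sum gate the drop is merely nonnegative is what makes this charging argument tight. One small imprecision: in the product case you write $D(a)$ for the child, whereas when $a$ is not large you should say $\widehat D(a)$; the argument goes through unchanged under this reading, as your own earlier sentence defining $\widehat D$ makes clear.
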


Observe that~\autoref{eqn:lower-level-decomposition} is basically
a decomposition of a potentially-hard polynomial $f_i$ in terms of
the sum of products of  multilinear polynomials in an intermediate
number of variables. The goal is to show that for an appropriate
explicit $f_i$, the number of summands on the right hand side
of~\autoref{eqn:lower-level-decomposition} cannot be too small. A
similar scenario also appears in the multilinear formula lower bounds
and bounded depth multilinear formula lower bounds of \cite{raz2004,
Raz06, raz-yehudayoff} (albeit with some key differences). Hence, a
natural approach at this point would be to use the tools in~\cite{raz2004,
Raz06, raz-yehudayoff}, namely  the rank of the \emph{partial derivative
matrix}, to attempt to prove this lower bound. We refer the reader
to~\autoref{sec:partial-derivative-matrix} for the definitions
and properties of the partial derivative matrix and proceed with
the overview. For each $j \in [\ell]$, let the polynomial $h_j$
in~\autoref{lem:lower-level-decomposition} depend on the variables $S_j
\subseteq X$. The key technical step in the rest of the proof is to
show that there is a partition of the set of variables $X = \{x_1, x_2,
\ldots, x_n\}$ into $Y$ and $Z$ such that $|Y| = |Z|$ and for every
$j \in [\ell]$, $\abs{|S_j \cap Y| - |S_j \cap Z|} \geq \Omega(\log
n)$. In~\cite{RSY08}, the authors show that there is an absolute constant
$\epsilon > 0$ such that if $\ell \leq \epsilon n^{1/3}/\log n$, then
there is an equipartition of $X$ which \emph{unbalances} all the sets
$\lbrace S_j : j \in [\ell] \rbrace$ by at least $\Omega(\log n)$. Our key
technical contribution (\autoref{thm:intro:unbalancing}) in this paper
is to show that as long as $\ell \leq \epsilon n/\log n$, there is an
equipartition which unbalances all the $S_j$'s by at least $\Omega(\log
n)$. This implies an $\Omega(n / \log n)$ on the size of each set $\cU_i$,
and thus an $\Omega(n^2/\log^2 n)$ lower bound on the circuit size.

Before we dive into a more detailed discussion on the overview and
main ideas in the proof of~\autoref{thm:intro:unbalancing} in the
next section, we would like to remark that  the lower bound question
in~\autoref{eqn:lower-level-decomposition} seems to be a trickier
question than what is encountered  while proving multilinear formula lower
bounds~\cite{raz2004, Raz06} or bounded depth syntactically multilinear
circuit lower bounds~\cite{raz-yehudayoff}. The main differences are
that in the proofs in~\cite{raz2004, Raz06, raz-yehudayoff}, the sets
$S_j$ have a stronger guarantee on their size (at least $n^{\Omega(1)}$
and at most $n-n^{\Omega(1)}$), and each of the summands on the right
has \emph{many}  variable disjoint factors and not just two factors as
in~\autoref{eqn:lower-level-decomposition}. For instance, in the formula
lower bound proofs the number of variable disjoint factors in each summand
on the right is $\Omega(\log n)$, and  for constant depth circuit lower
bounds it is $n^{\Omega(1)}$. Together, these properties make it possible
to show much stronger lower bounds on $\ell$. In particular, it is known
that a \emph{random} equipartition works for these two applications, in
the sense that it unbalances sufficiently many factors in each summand,
thereby implying that the rank of the partial derivative matrix of the
polynomial is small. Hence,  for an appropriate\footnote{$f_i$ is chosen
so that the the partial derivative matrix for $f_i$ is of full rank
for \emph{every} equipartition.} $f_i$, the number of summands must
be large. However, since a set of size $O(\log n)$ is balanced under
a random equipartition with probability $\Omega(1/\sqrt{\log n})$ and
the identity in~\autoref{eqn:lower-level-decomposition} involves just
two variable disjoint factors, taking a random equipartition would not
enable us to prove any meaningful bounds.

\subsubsection*{Proof sketch of~\autoref{thm:intro:unbalancing}}

Recall that our task is, given a small collection of subsets of
$[n]$, to find a balanced partition which is unbalanced on each of the
sets. Equivalently, we would like to prove that 
if $\cF$ is a family of subsets
such that every balanced partition balances at least one set in $\cF$,
then $|\cF|$ must be large (of course, $\cF$ must satisfy the conditions
in \autoref{thm:intro:unbalancing}).

We first sketch the proof of a special case (which suffices for the
main application here), when $n=4p$ and $p$ is a prime. 
For the sake of simplicity, suppose also that all subsets $S \in \cF$ are of
even size, and assume further that for every subset $T \subseteq [n]$
of size $n/2$ there exists $S \in \cF$ such that $T$ completely balances
$S$, namely, $|T \cap S| = |S|/2$. One possible approach to obtain lower
bounds on $|\cF|$ is via an application of the polynomial method as
done, for example, in \cite{ABCO88}. Define
the following polynomial over, say, the rationals:
\[
f(x_1, \ldots, x_n) = \prod_{S \in \cF} (\ip{x,\ind_S} - |S|/2).
\]
By the assumption on $\cF$, the polynomial $f$ evaluates to $0$ over
all points in $\set{0,1}^n$ with Hamming weight exactly $n/2$. We can
also argue, using the assumption on the set sizes in $\cF$, that $f$
is not identically zero, and clearly $\deg(f) \le |\cF|$. Thus, a lower
bound on $\deg(f)$ translates to a lower bound on $|\cF|$.

This idea, however, seems like a complete nonstarter, since there exists
a degree $1$ non-zero polynomial which evaluates to 0 over the middle
layer of $\set{0,1}^n$, namely, $\sum_i x_i - n/2$.

A very clever solution to this potential obstacle was found by Heged\H{u}s
\cite{Heg10}. Suppose $n=4p$ for some prime $p$. The main insight in
\cite{Heg10} is to consider the polynomial $f$  over $\F_p$, and
to add the requirement that there exists some $z \in \set{0,1}^{4p}$,
of Hamming weight \emph{exactly} $3p$, such that $f(z) \neq 0$. This
requirement rules out the trivial example $\sum_i x_i - n/2$, and
Heged\H{u}s was able to show that the degree of any polynomial with
these properties must be at least $p=n/4$ (see \autoref{lem:hegedus}
for the complete statement).

We are thus left with the task of proving that our polynomial evaluates
to a non-zero value over some point $z \in \set{0,1}^{4p}$ of Hamming
weight $3p$. This turns out to be not very hard to show, assuming each
set is of size at least, say, $100 \log n$ and at most $n - 100 \log n$, by
choosing a random such vector $z$. Indeed, it is not surprising that it
is much easier to directly show that a highly unbalanced partition of
$[n]$ (into $3n/4$ vs $n/4$) unbalances all the sets $\cF$.\footnote{In
our case, we need to argue that the imbalance is non-zero modulo $p$,
which adds an extra layer of complication, although again, one which is
not hard to solve.}

As mentioned earlier, the case $n=4p$ and $\tau \ge 100 \log n$ in
\autoref{thm:intro:unbalancing} is considerably easier to prove and
suffices for the application to circuit lower bounds. Proving this
theorem for every even $n$ and every $\tau \ge 1$ requires further
technical ideas. We postpone this discussion to \autoref{sec:general}.

Even though~\autoref{lem:hegedus} seems to be a fundamental statement
about polynomials over finite fields and could conceivably have
an elementary proof, the proof in~\cite{Heg10} uses more advanced
techniques. It relies on the description of Gr\"obner basis for ideals of
polynomials in $\F[x_1, x_2, \ldots, x_{n}]$ which vanish on all points
in $\{0,1\}^n$ of weight equal to $n/2$. A complete description of the
reduced Gr\"obner basis for such ideals was given by Heged\H us and
R\'onyai~\cite{HR03} and their proof builds up on a number of earlier
partial results~\cite{ARS02, FG06} on this problem.

To the best of our knowledge, the proof in~\cite{Heg10} is the only known
proof of~\autoref{lem:hegedus}, and giving a self contained elementary
proof of it seems to be an interesting question.

\subsection*{Organization of the paper}
In the rest of the paper, we set up some notation and
discuss some preliminary notions in~\autoref{sec:prelim},
prove~\autoref{thm:intro:unbalancing} in~\autoref{sec:unbalancing}
and complete the proof of~\autoref{thm:lower-bound-informal}
in~\autoref{sec:proof of main theorem}. Throughout the paper we
assume, whenever this is needed, that $n$ is sufficiently large, and
make no attempts to optimize the absolute constants.

\section{Preliminaries}\label{sec:prelim}
For $n \in \N$, we denote $[n]=\set{1,2,\ldots, n}$. For a prime $p$,
we denote by $\F_p$ the finite field with $p$ elements. For
two integers $i, j$ with $i\leq j$, we denote $[i, j] = \set{a \in \Z :
i \leq a \leq j}$.
The characteristic vector of a set $S \subseteq [n]$
is denoted by $\ind_S \in \set{0,1}^n$.

As is standard, $\binom{[n]}{k}$ denotes the family $\setdef{S \subseteq
[n]}{|S|=k}$.

For an even $n \in \N$ and $Y \subseteq [n]$ such that $|Y|=n/2$,
we call $Y$ a \emph{balanced partition} of $[n]$, with the implied
meaning that $Y$ partitions $[n]$ evenly into $Y$ and $[n]\setminus Y$.
The \emph{imbalance} of a set $S \subseteq [n]$ under $Y$ is $d_Y (S) :=
\abs{ |Y \cap S| - |S|/2}$. Observe the useful symmetry $d_Y(S) =d_Y([n]
\setminus [S])$, which follows from the fact that $|Y|=n/2$. We say $S$
is $\tau$-unbalanced under $Y$ if $d_Y(S) \ge \tau$.

We use the following lemma from \cite{Heg10}.

\begin{lemma}[\cite{Heg10}]
\label{lem:hegedus}
Let $p$ be a prime, and let $f \in \F_p[x_1, \ldots, x_{4p}]$ be a polynomial. Suppose that for all $Y \in \binom{[4p]}{2p}$, it holds that $f(\ind_Y) = 0$, and that there exists $T \subseteq [4p]$ such that $|T|=3p$ and $f(\ind_T) \neq 0$. Then $\deg(f) \ge p$.
\end{lemma}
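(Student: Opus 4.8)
The plan is to prove Hegedűs' lemma (\autoref{lem:hegedus}) via the Gröbner basis description for the vanishing ideal of the middle layer of the Boolean cube, following the original argument of Hegedűs and the structural results of Hegedűs--Rónyai. Throughout we work in the polynomial ring $R = \F_p[x_1, \ldots, x_{4p}]$, and since every relevant point lies in $\set{0,1}^{4p}$ we may and do reduce modulo the relations $x_i^2 = x_i$, so it suffices to bound the degree of a multilinear representative of $f$.

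First I would recall the relevant algebraic setup. Let $I = I\bigl(\binom{[4p]}{2p}\bigr)$ be the ideal of polynomials in $R$ vanishing on all characteristic vectors of $2p$-subsets of $[4p]$, together with the ideal generated by $x_i^2 - x_i$. Hegedűs and Rónyai~\cite{HR03} give an explicit reduced Gröbner basis for $I$ with respect to the degree-lexicographic order; the key consequence I need is a description of the \emph{standard monomials} (the monomials not divisible by any leading term of the basis), which form an $\F_p$-basis of $R/I$. The punchline of that description, in the form used by Hegedűs, is that every standard monomial is either of ``low degree'' or has a specific shape governed by the prime $p$; concretely, after reducing $f$ modulo $I$ one obtains a multilinear polynomial $\tilde f$ supported on standard monomials, and $\tilde f$ agrees with $f$ on the entire middle layer, hence $\tilde f \equiv 0$ there, so in fact $\tilde f$ reduces to $0$ — meaning $f \in I$. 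So the content of the lemma is really: any $g \in R$ that vanishes on the middle layer but not at \emph{some} point of weight $3p$ must have degree $\ge p$. Equivalently, the \emph{complement} statement: if $g \in I$ has degree $< p$, then $g$ vanishes on \emph{every} point of weight $3p$ as well (indeed on a much larger set of layers).

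The heart of the argument is therefore a ``layer propagation'' or ``weight-shifting'' lemma: if $g$ vanishes on all weight-$2p$ points and $\deg g < p$, then $g$ vanishes on all weight-$w$ points for a range of $w$ including $w = 3p$. The clean way to run this is via the inclusion-matrix / symmetrization technique. Assume first that $g$ is symmetric (invariant under permuting the $x_i$'s); then $g$ restricted to the cube is a function of the Hamming weight alone, so $g(\ind_S)$ depends only on $|S|$ and can be written as a polynomial $q(|S|)$ of degree $\le \deg g < p$ evaluated at the weight. The hypothesis says $q(2p) = 0$ in $\F_p$; I want to conclude $q(3p) = 0$, i.e. $q(0) = 0$ (working mod $p$). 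This does not follow for an arbitrary polynomial of degree $< p$, so the extra structure must come from the fact that $q$ is the restriction of a low-degree polynomial to $\set{0,1,\ldots,4p\} \cap$ (the weights actually achieved) — more precisely from finite-difference identities: the $r$-th finite difference of the function $S \mapsto g(\ind_S)$, summed appropriately over subsets, is controlled by the degree of $g$. For general (non-symmetric) $g$ one symmetrizes: average $g$ over the action of $S_{4p}$, or better, use that the coefficient of $g$ on the ``down-set'' of weight-$2p$ subsets is forced to vanish by Möbius inversion once $\deg g < p$, and this forces vanishing on all higher layers via the binomial-coefficient divisibility $p \mid \binom{3p}{2p} / \binom{p}{0}$-type relations (Lucas' theorem). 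I would structure this as: (i) reduce to multilinear $g \in I$, $\deg g < p$; (ii) expand $g = \sum_{|A| < p} c_A \prod_{i \in A} x_i$ and note $g(\ind_S) = \sum_{A \subseteq S} c_A$; (iii) use inclusion–exclusion to express $c_A = \sum_{A \subseteq S, |S| = 2p}(-1)^{|S| - |A|}$-weighted evaluations, showing each $c_A$ is a $\Z$-combination of middle-layer values, hence $\equiv 0$; wait — that is the wrong direction. Instead: from $g(\ind_S) = 0$ for all $|S| = 2p$ and $|A| < p$, deduce $c_A = 0$ by a rank argument on the inclusion matrix $M$ with rows indexed by $|S| = 2p$, columns by $|A| < p$, entries $\binom{|S \setminus A|}{0}$-indicators — this matrix has full column rank over $\F_p$ precisely because $p \nmid \binom{2p - |A|}{r}$ fails to be an obstruction when $|A| < p$ (Lucas again). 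Hence $g \equiv 0$ entirely, contradicting $g(\ind_T) \neq 0$.

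The step I expect to be the main obstacle is establishing the exact full-rank / standard-monomial claim over $\F_p$ — i.e. pinning down precisely why the degree bound $< p$ (rather than some other threshold) is what makes the inclusion matrix restricted to low-degree columns injective, and handling the interaction with Lucas' theorem for the binomial coefficients $\binom{2p}{k}$, $\binom{3p}{k}$ modulo $p$. This is exactly where Hegedűs invokes the Hegedűs–Rónyai Gröbner basis rather than giving an elementary argument, and as the paper itself remarks, a self-contained elementary proof is not known; so in the write-up I would cite~\cite{HR03} (and the earlier partial results~\cite{ARS02, FG06}) for the standard-monomial structure and then carry out the short deduction above, rather than attempting to reprove the Gröbner basis description. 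A secondary technical point is making sure the auxiliary point $T$ of weight $3p$ is genuinely incompatible with $\deg f < p$: once we know $f \in I$ forces $f$ to vanish on \emph{all} layers of weight $\equiv 2p \pmod{p}$ that are reachable by the propagation — in particular weight $3p$ — the hypothesis $f(\ind_T) \neq 0$ gives the contradiction, completing the proof.
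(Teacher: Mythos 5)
The paper does not prove \autoref{lem:hegedus}; it is cited verbatim from \cite{Heg10} and stated in the preliminaries without proof, and the authors explicitly remark that the only known proof goes through the Gr\"obner-basis description of \cite{HR03} and that an elementary self-contained proof remains open. So the yardstick here is not a proof in the paper but whether your sketch is sound, and it is not: the ``short deduction'' you layer on top of the cited structural results has a genuine error.

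The decisive bug is the rank claim. You assert that the inclusion matrix $M$ (rows indexed by $S \in \binom{[4p]}{2p}$, columns by $A$ with $|A|<p$, entries $[A \subseteq S]$) has full column rank over $\F_p$, and conclude ``Hence $g\equiv 0$ entirely.'' This is false, and falsity is immediate: the multilinear polynomial $g = \sum_{i=1}^{4p} x_i$ has degree $1 < p$, vanishes on every weight-$2p$ point (since $2p\equiv 0 \bmod p$), yet is not the zero polynomial. Its coefficient vector $(c_A)_{|A|<p}$ is a nonzero vector in the kernel of $M$, so $M$ does not have full column rank. Consequently your conclusion is strictly stronger than the lemma, and being strictly stronger than a true statement while being false, it cannot be repaired by Lucas-type divisibility facts; something else must be doing the work. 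What the actual proof in \cite{Heg10} establishes is weaker and more delicate: degree-$<p$ multilinear polynomials in the vanishing ideal of the $2p$-layer are not all zero, but their restrictions to weight-$3p$ points are forced to vanish, and pinning that down requires precisely the description of the standard monomials from \cite{HR03}, not a generic injectivity argument. Your symmetrization fallback does not close the gap either: averaging $g$ over $S_{4p}$ yields a symmetric polynomial vanishing on the middle layer, but its vanishing at a specific weight-$3p$ point tells you nothing about $g$ itself at that point. The symmetric case you do handle correctly (writing $g$ in elementary symmetrics and invoking Lucas on $\binom{2p}{k}, \binom{3p}{k}$), but this covers only a very small slice of the problem. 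In short, you correctly identify the right toolbox (Hegedűs--Rónyai standard monomials and Lucas), but the final step you supply in place of Hegedűs' argument is wrong, and the lemma would remain unproven as written.
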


\subsection{Hypergeometric distribution}
For parameters $N, M, k$, where $N \geq M$,  by ${\cal H}(M, N, k)$, we denote the distribution of $\abs{S\cap T}$, where $S$ is any fixed subset of $[N]$ of size $M$, and $T$ is a uniformly random subset of $[N]$ of size equal to $k$. Clearly, 
\[
\Pr[\abs{S \cap T} = i] = \frac{\binom{M}{i} \binom{N-M}{k-i}}{\binom{N}{k}} \, .
\]
The expected value of $\abs{S \cap T}$ under this distribution is equal to $kM/N$. We need the following tail bound of hypergeometric distribution for our proof.  
\begin{lemma}[\cite{Skala13}] \label{lem : hypergeometric tail}
Let $N, M, k$, and ${\cal H}(M, N, k)$ be as defined above. Then, for every $t$
\[
\Pr[\abs{\abs{S \cap T} - kM/N} \geq tk] \leq \me^{-2t^2 k} \, .
\]
\end{lemma}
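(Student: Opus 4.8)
The plan is to deduce this from the classical Chernoff--Hoeffding bound for sums of independent $\{0,1\}$-valued random variables, via the standard principle that sampling without replacement is no less concentrated than sampling with replacement. Fix $S \subseteq [N]$ with $|S| = M$, set $p := M/N$, and realize the random $k$-subset $T$ by drawing elements $R_1, \dots, R_k$ of $[N]$ uniformly at random \emph{without} replacement, so that $X := \abs{S \cap T} = \sum_{j=1}^k \ind[R_j \in S]$. Each summand is marginally a $\mathrm{Bernoulli}(p)$ variable, hence $\E[X] = kp = kM/N$, consistent with the statement of the lemma.

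The one non-elementary ingredient is Hoeffding's majorization: for every convex $\phi : \mathbb{R} \to \mathbb{R}$,
\[
\E[\phi(X)] \;\le\; \E[\phi(\widehat X)], \qquad \widehat X := \sum_{j=1}^k \widehat Z_j,
\]
where $\widehat Z_1, \dots, \widehat Z_k$ are i.i.d.\ $\mathrm{Bernoulli}(p)$ (equivalently, $\widehat X$ counts the elements of $S$ among $k$ samples drawn \emph{with} replacement). I would simply quote this classical fact; if a self-contained argument is wanted, it follows by expressing $\E[\phi(\widehat X)]$ as the average of $\phi\bigl(\sum_j \ind[u_j\in S]\bigr)$ over all $u \in [N]^k$ and $\E[\phi(X)]$ as the same average restricted to injective $u$, then comparing the two via a M\"obius-type identity on the partition lattice of $[k]$ (grouping tuples by which coordinates coincide) together with convexity of $\phi$.

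Granting the majorization, I would finish via the textbook Chernoff computation. For $\lambda > 0$, applying it with $\phi(x) = \me^{\lambda x}$ and using independence gives
\[
\E\bigl[\me^{\lambda(X - kp)}\bigr] \;\le\; \prod_{j=1}^k \E\bigl[\me^{\lambda(\widehat Z_j - p)}\bigr] \;\le\; \me^{\lambda^2 k / 8},
\]
the last step being Hoeffding's lemma applied to each centered variable $\widehat Z_j - p$, which lies in an interval of length $1$. Markov's inequality then yields $\Pr[X - kp \ge tk] \le \me^{-\lambda t k + \lambda^2 k/8}$ for every $\lambda > 0$, and taking $\lambda = 4t$ gives $\Pr[X - kp \ge tk] \le \me^{-2t^2 k}$. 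Running the same computation with $\phi(x) = \me^{-\lambda x}$ produces the matching lower tail $\Pr[X - kp \le -tk] \le \me^{-2t^2 k}$, and adding the two estimates gives $\Pr\bigl[\abs{X - kM/N} \ge tk\bigr] \le 2\,\me^{-2t^2 k}$ --- the asserted bound up to the harmless constant factor $2$, which is immaterial for every application of the lemma in this paper (and may be absorbed into the exponent at the cost of shrinking $t$ by a constant).

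I do not anticipate a genuine obstacle here: once Hoeffding's majorization is in place everything is routine, and even that step is classical. If one prefers to avoid it, the same bound follows from Chv\'atal's elementary estimate on ratios of binomial coefficients, which gives the sharper relative-entropy tail $\me^{-k\, D(p+t\,\|\,p)}$ with $D(\cdot\,\|\,\cdot)$ the binary Kullback--Leibler divergence, combined with Pinsker's inequality $D(p+t\,\|\,p) \ge 2t^2$; alternatively one can use a Doob-martingale argument (the bounded-differences method in its version for sampling without replacement).
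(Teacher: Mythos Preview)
The paper does not prove this lemma at all: it is stated with a citation to \cite{Skala13} and used as a black box. So there is no ``paper's own proof'' to compare against; your write-up supplies what the paper simply imports.

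Your argument is the standard one and is correct. Hoeffding's 1963 theorem (sampling without replacement is at least as concentrated as sampling with replacement, in the sense that $\E[\phi(X)] \le \E[\phi(\widehat X)]$ for convex $\phi$) reduces the problem to the i.i.d.\ Bernoulli case, and from there the Chernoff--Hoeffding computation you give is routine and accurate. The only quibble is the one you already flag: your two-sided bound carries a factor of $2$ that the lemma as stated omits. This is genuinely immaterial for the paper's single use of the lemma (in the proof of \autoref{lem : reducing to 4p}, where the bound is applied with exponent $\Omega(n^{0.02})$ and then fed into a union bound), and in any case the one-sided form $\Pr[X - kp \ge tk] \le \me^{-2t^2k}$ matches the stated constant exactly. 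Your aside about deriving the majorization via a partition-lattice identity is not the textbook route (Hoeffding's original proof proceeds by a clever symmetrization over permutations), but since you correctly propose to cite the result rather than reprove it, this does not affect the soundness of your argument.
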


\begin{lemma}[Hoeffding's inequality, \cite{AlonSpencer}]\label{lem : hoeffding 1}
Let $X_1, X_2, \ldots, X_n$ be independent random variables taking values in $\{0,1\}$. Then, 
\[
\Pr\left[\abs{\sum_{i =1}^n X_i - \E[\sum_{i = 1}^n X_i]} \geq t\right] \leq 2\exp(-2t^2/n) \, .
\]
\end{lemma}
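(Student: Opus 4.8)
\textbf{Proof plan for Hoeffding's inequality (Lemma~\ref{lem : hoeffding 1}).}
The plan is to prove the standard one-sided tail bound
$\Pr[\sum_i X_i - \E[\sum_i X_i] \ge t] \le \exp(-2t^2/n)$
via the exponential moment (Chernoff--Cram\'er) method, then apply it to $-X_i$ and take a union bound to get the two-sided statement with the factor $2$. Write $S = \sum_{i=1}^n X_i$, $\mu = \E[S] = \sum_i p_i$ where $p_i = \Pr[X_i = 1]$, and center the variables as $Z_i = X_i - p_i$, so $\E[Z_i] = 0$ and $Z_i \in [-p_i, 1-p_i]$, an interval of length $1$. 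For any $\lambda > 0$, Markov's inequality applied to $e^{\lambda \sum_i Z_i}$ gives
$\Pr[S - \mu \ge t] \le e^{-\lambda t} \prod_{i=1}^n \E[e^{\lambda Z_i}]$,
using independence to split the expectation of the product into a product of expectations.

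The key step is Hoeffding's lemma: if $W$ is a mean-zero random variable supported on an interval $[a,b]$, then $\E[e^{\lambda W}] \le e^{\lambda^2 (b-a)^2 / 8}$. I would prove this by convexity: since $w \mapsto e^{\lambda w}$ is convex, for $w \in [a,b]$ we have $e^{\lambda w} \le \frac{b-w}{b-a} e^{\lambda a} + \frac{w-a}{b-a} e^{\lambda b}$; taking expectations and using $\E[W]=0$ yields $\E[e^{\lambda W}] \le \frac{b}{b-a} e^{\lambda a} - \frac{a}{b-a} e^{\lambda b} =: e^{\varphi(\lambda)}$. A direct computation shows $\varphi(0) = \varphi'(0) = 0$ and $\varphi''(\lambda) \le (b-a)^2/4$ for all $\lambda$ (the second derivative is the variance of a Bernoulli-type random variable, hence at most $1/4$ times $(b-a)^2$), so by Taylor expansion $\varphi(\lambda) \le \lambda^2 (b-a)^2/8$. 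Applying this with $W = Z_i$, $b - a = 1$ gives $\E[e^{\lambda Z_i}] \le e^{\lambda^2/8}$.

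Combining, $\Pr[S - \mu \ge t] \le e^{-\lambda t + n\lambda^2/8}$ for every $\lambda > 0$. Optimizing the exponent over $\lambda$ (the quadratic $-\lambda t + n\lambda^2/8$ is minimized at $\lambda = 4t/n$, where it equals $-2t^2/n$) yields $\Pr[S - \mu \ge t] \le e^{-2t^2/n}$. Since $-X_i$ are also independent and $\{0,1\}$-valued up to translation (equivalently, apply the same argument to the centered variables $-Z_i$, which lie in an interval of length $1$), the same bound holds for $\Pr[S - \mu \le -t]$. A union bound over the two events gives
$\Pr[|S - \mu| \ge t] \le 2 e^{-2t^2/n}$,
which is the claimed inequality. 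The only mildly delicate point is the bound $\varphi''(\lambda) \le (b-a)^2/4$, which follows from recognizing $\varphi''(\lambda)$ as the variance of a random variable taking two values $a, b$ and invoking the fact that a random variable supported on an interval of length $\ell$ has variance at most $\ell^2/4$; everything else is routine.
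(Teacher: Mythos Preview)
Your proof is correct and is precisely the standard Chernoff--Cram\'er argument via Hoeffding's lemma. Note, however, that the paper does not actually give a proof of this statement: it is quoted as a known result with a citation to \cite{AlonSpencer}, so there is no ``paper's own proof'' to compare against---your write-up simply supplies the textbook argument that the paper omits.
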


\subsection{Partial derivative matrix}\label{sec:partial-derivative-matrix}
For a circuit $\Psi$, we denote by $|\Psi|$ the size of $\Psi$, namely, the number of gates in it. For a gate $v$, we denote by $X_v$ the set of variables that occur in the subcircuit rooted at $v$.

Let $X=\set{x_1, \ldots, x_n}$ be a set of variables, $Y \subseteq X$ (not necessarily of size $n/2$) and let $Z=X \setminus Y$. For a multilinear polynomial $f(X) \in \F[X]$, we define the \emph{partial derivative matrix} of $f$ with respect to $Y,Z$, denoted $M_{Y,Z}(f)$, as follows: the rows of $M$ are indexed by multilinear monomials in $Y$. the columns of $M$ are indexed by multilinear monomials in $Z$. The entry which corresponds to $(m_1, m_2)$ is the coefficient of the monomial $m_1 \cdot m_2$ in $f$. We define $\rank_{Y,Z}(f) = \rank(M_{Y,Z}(f))$.

The following properties of the partial derivative matrix are easy to prove and well-documented (see, e.g., \cite{RSY08}).

\begin{proposition}
\label{prop:pdmatrix}
The following properties hold:
\begin{enumerate}
\item \label{item:trivial-upper-bound} For every multilinear polynomial $f(X) \in \F[X]$, $Y \subseteq X$ and $Z=X \setminus Y$, $\rank_{Y,Z}(f) \le \min\set{2^{|Y|}, 2^{|Z|}}$.
\item \label{item:rank-additive} For every two multilinear polynomials $f_1(X), f_2(X) \in \F[X]$ and for every partition $X=Y \sqcup Z$, $\rank_{Y,Z}(f_1 + f_2) \le \rank_{Y,Z}(f_1) + \rank_{Y,Z}(f_2)$.
\item \label{item:rank-multiplicative}
Let $f_1 \in \F[X_1]$ and $f_2 \in \F[X_2]$ be multilinear polynomials such that $X_1 \cap X_2 = \emptyset$. Let $Y_i \subseteq X_i$ and $Z_i = X_i \setminus Y_i$ for $i \in \set{1,2}$. Set $Y = Y_1 \cup Y_2, Z=Z_1 \cup Z_2$. Then $\rank_{Y,Z}(f_1 \cdot f_2) = \rank_{Y_1,Z_1} (f_1) \cdot \rank_{Y_2,Z_2} (f_2)$.
\item \label{item:full-rank-derivative} Let $f(X) \in \F[X]$ be a multilinear polynomial such that $X=Y \sqcup Z$ and $|Y|=|Z|=n/2$. Suppose $\rank_{Y,Z}(f) = 2^{n/2}$, and let $g = \partial f / \partial x$ for some $x \in X$. Then $\rank_{Y,Z}(g) = 2^{n/2 - 1}$.
\item \label{item:low-degree-rank} Let $f(X) \in \F[X]$ be a multilinear polynomial of total degree $d$. Then for every partition $X=Y \sqcup Z$ such that $|Y|=|Z|=n/2$, $\rank_{Y,Z}(f) \le 2^{(d+1) \log (n/2)}$.
\end{enumerate}

\end{proposition}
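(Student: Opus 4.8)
\textbf{Proof plan for Proposition~\ref{prop:pdmatrix}.}
These are all standard linear-algebra facts about the matrix $M_{Y,Z}(f)$, whose rows and columns are indexed by multilinear monomials in $Y$ and $Z$ respectively. I would dispatch each item directly from the definition, in the order listed.

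For item~\ref{item:trivial-upper-bound}: the rank of any matrix is at most the number of its rows and also at most the number of its columns; here there are $2^{|Y|}$ rows (one per multilinear monomial in $Y$) and $2^{|Z|}$ columns, giving the bound. For item~\ref{item:rank-additive}: since every multilinear monomial $m$ in $X$ factors uniquely as $m_1 m_2$ with $m_1$ a monomial in $Y$ and $m_2$ a monomial in $Z$, the coefficient-extraction map $f \mapsto M_{Y,Z}(f)$ is $\F$-linear, so $M_{Y,Z}(f_1+f_2) = M_{Y,Z}(f_1) + M_{Y,Z}(f_2)$, and rank is subadditive under matrix addition. For item~\ref{item:rank-multiplicative}: with $X_1 \cap X_2 = \emptyset$, a multilinear monomial in $X_1 \cup X_2$ factors uniquely as (monomial in $X_1$)$\cdot$(monomial in $X_2$), and the coefficient of $m_1 m_2$ in $f_1 f_2$ is (coeff.\ of $m_1$ in $f_1$)$\cdot$(coeff.\ of $m_2$ in $f_2$); refining this factorization through $Y_i, Z_i$ shows that $M_{Y,Z}(f_1 f_2)$ is exactly the Kronecker (tensor) product $M_{Y_1,Z_1}(f_1) \otimes M_{Y_2,Z_2}(f_2)$, and $\rank(A \otimes B) = \rank(A)\rank(B)$.

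For item~\ref{item:full-rank-derivative}: write $f = x \cdot g + h$ where neither $g$ nor $h$ involves $x$, so $g = \partial f/\partial x$. Say $x \in Y$ (the case $x \in Z$ is symmetric). Then the rows of $M_{Y,Z}(f)$ split into those indexed by monomials divisible by $x$ and those not; the former block is exactly $M_{Y\setminus\{x\}, Z}(g)$ (up to relabeling rows by $m/x$), and the latter is $M_{Y\setminus\{x\},Z}(h)$. Hence $\rank_{Y,Z}(f) \le \rank_{Y,Z}(g) + \rank_{Y,Z}(h)$; since $h$ does not involve $x \in Y$, by item~\ref{item:trivial-upper-bound} applied within the smaller universe we get $\rank_{Y,Z}(h) \le 2^{|Y|-1} = 2^{n/2-1}$, and likewise $\rank_{Y,Z}(g) \le 2^{n/2-1}$. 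Therefore $2^{n/2} = \rank_{Y,Z}(f) \le \rank_{Y,Z}(g) + 2^{n/2-1}$ forces $\rank_{Y,Z}(g) \ge 2^{n/2-1}$, and combined with the upper bound we get equality. For item~\ref{item:low-degree-rank}: if $\deg(f) \le d$, every monomial appearing in $f$ uses at most $d$ variables, so at most $d$ variables from $Y$; the number of multilinear monomials in $Y$ of degree at most $d$ is $\sum_{j=0}^{d}\binom{n/2}{j} \le (d+1)\binom{n/2}{d} \le (d+1)(n/2)^d \le 2^{(d+1)\log(n/2)}$ (for $n$ large), and all other rows of $M_{Y,Z}(f)$ are zero, so the rank is at most this many.

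None of these steps presents a genuine obstacle; the only point requiring a little care is item~\ref{item:full-rank-derivative}, where one must correctly identify the sub-blocks of $M_{Y,Z}(f)$ coming from the variable $x$ and apply the trivial bound inside the reduced variable set — but this is routine once the row partition by divisibility-by-$x$ is set up.
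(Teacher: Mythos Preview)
Your proof is correct. The paper itself does not prove this proposition; it simply states that these properties ``are easy to prove and well-documented'' and refers to \cite{RSY08}. Your arguments are exactly the standard ones one would expect: the dimension bound for item~\ref{item:trivial-upper-bound}, linearity of $f\mapsto M_{Y,Z}(f)$ plus subadditivity of rank for item~\ref{item:rank-additive}, the Kronecker-product identification for item~\ref{item:rank-multiplicative}, the row-splitting by divisibility by $x$ for item~\ref{item:full-rank-derivative}, and counting nonzero rows for item~\ref{item:low-degree-rank}.

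One small remark on item~\ref{item:low-degree-rank}: the chain $(d+1)\binom{n/2}{d}\le (d+1)(n/2)^d \le (n/2)^{d+1}$ uses $d+1\le n/2$ in the last step. You should note (or it is implicit in your ``for $n$ large'') that when $d+1>n/2$ the claimed bound $2^{(d+1)\log(n/2)}\ge 2^{n/2}$ already dominates the trivial bound from item~\ref{item:trivial-upper-bound}, so the inequality holds in that regime as well. This is cosmetic and does not affect correctness.
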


\section{Unbalancing sets under a balanced partition}\label{sec:unbalancing}
In this section, we prove~\autoref{thm:intro:unbalancing}. We start by
proving a special case (see~\autoref{thm:unbalancing} below) when $n$
equals $4p$ for some prime $p$, and $\tau \geq \Omega(\log n)$. This
special case already suffices for the application to the proof
of~\autoref{thm:lower-bound-informal} (for infinitely many values of
$n$), and has a somewhat simpler
proof. We then move on to prove the case for general $n$ and $\tau$,
which while being similar to the proof of~\autoref{thm:unbalancing},
needs some additional ideas and care.
\subsection{Special case : $n = 4p$ and $\tau \geq \Omega(\log n)$}
\begin{theorem}
\label{thm:unbalancing}
Let $p$ be a large enough prime, and let $\log p \le \tau \le p/1000$. Let
$S_1, \ldots, S_m \subseteq [4p]$ be sets such that for all $i \in [m]$,
$100 \tau \le |S_i| \le 4p - 100\tau$. Further, assume that for every
balanced partition $Y$ of $[4p]$ there exists $i \in [m]$ such that
$d_Y(S_i) < \tau$. Then, $m \ge \frac{1}{2} \cdot p/\tau$.
\end{theorem}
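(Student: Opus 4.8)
The plan is to follow the polynomial-method strategy outlined in the proof sketch, built around Heged\H{u}s' lemma (\autoref{lem:hegedus}). Assume for contradiction that $m < \frac12 \cdot p/\tau$, i.e. $m\tau < p/2$. For each $S_i$, the hypothesis says that whenever a balanced partition $Y$ satisfies $d_Y(S_i)\ge \tau$ for \emph{all} $i$, we get a contradiction; so we want to build a single polynomial that encodes ``balanced on some $S_i$'' and then argue it must have large degree. The natural object is
\[
f(x_1,\ldots,x_{4p}) = \prod_{i\in[m]} \prod_{r=-(\tau-1)}^{\tau-1}\bigl(\ip{x,\ind_{S_i}} - |S_i|/2 - r\bigr)\in\F_p[x_1,\ldots,x_{4p}],
\]
whose degree is at most $(2\tau-1)m < p$. (If some $|S_i|$ is odd one shifts the product of linear factors accordingly so that the root set is exactly the integers within distance $<\tau$ of $|S_i|/2$; this is a cosmetic change. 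One should also check $2\tau-1 < p$ so the distinct residues $|S_i|/2+r$ are genuinely distinct mod $p$ — this follows from $\tau\le p/1000$.) By construction, for every $Y\in\binom{[4p]}{2p}$ there is some $i$ with $d_Y(S_i)<\tau$, i.e. $|Y\cap S_i|$ is an integer within distance $<\tau$ of $|S_i|/2$, so one of the linear factors vanishes and $f(\ind_Y)=0$. Thus $f$ satisfies the first hypothesis of \autoref{lem:hegedus}.

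The crux is the second hypothesis of \autoref{lem:hegedus}: I must exhibit $T\subseteq[4p]$ with $|T|=3p$ and $f(\ind_T)\neq 0$. Equivalently, I need a $3p$-vs-$p$ partition $T$ such that for \emph{every} $i\in[m]$, $\ip{\ind_T,\ind_{S_i}} = |T\cap S_i|$ avoids the residue class of each of $|S_i|/2 + r$ modulo $p$, for $r\in\{-(\tau-1),\ldots,\tau-1\}$. I will find such a $T$ by the probabilistic method: let $T$ be a uniformly random subset of $[4p]$ of size $3p$. For a fixed $S_i$ of size $M_i$ with $100\tau\le M_i\le 4p-100\tau$, the random variable $|T\cap S_i|$ is hypergeometric with mean $3M_i/4$, which is at least $75\tau$ away from $M_i/2$; by the tail bound \autoref{lem : hypergeometric tail}, $|T\cap S_i|$ lies within $\tau$ of $M_i/2$ (the ``bad integer window'') with probability at most $e^{-2t^2\cdot 3p}$ for a suitable $t=\Omega(\tau/p)$ — small, but not obviously small enough to union-bound over $m$ sets. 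The cleaner route, and the one I'd actually carry out, is a two-step argument: first, the integer window $\{|S_i|/2 - (\tau-1),\ldots,|S_i|/2+(\tau-1)\}$ has only $2\tau-1$ elements, but we need to avoid it \emph{modulo $p$}, so the forbidden set for $|T\cap S_i|\in\{0,\ldots,4p\}$ is a union of a few arithmetic-progression-like windows of total size $O(\tau)$; second, by the local behavior of the hypergeometric distribution (its probability mass function is $O(1/\sqrt{p})$ pointwise in the bulk, via standard binomial estimates), each forbidden window is hit with probability $O(\tau/\sqrt p)$. That still only gives a union bound when $m\tau = o(\sqrt p)$, which is weaker than what we want.

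To get the full range $m\tau < p/2$, I expect one needs to \emph{not} use a uniformly random $T$ but rather exploit the structure more carefully — this is the main obstacle. The fix, matching the proof sketch's footnote about ``imbalance non-zero modulo $p$,'' is to pick $T$ deterministically or semi-deterministically so that the imbalance $|T\cap S_i| - |S_i|/2$ is pushed to be, say, as close as possible to $p$ (or $-p$, or $\pm 2p$) — i.e., concentrate the count near a value where it is provably bounded away, modulo $p$, from all the small windows, \emph{simultaneously} for all $i$. Concretely: because every $S_i$ satisfies $100\tau \le |S_i|\le 4p - 100\tau$, a random $T$ of size $3p$ already makes $|T\cap S_i|$ concentrated within $O(\sqrt{p\log p})$ of $3|S_i|/4$; one then argues that $3|S_i|/4 - |S_i|/2 = |S_i|/4 \in [25\tau, p - 25\tau]$, so $3|S_i|/4$ reduced mod $p$ is already at distance $\ge 25\tau > \tau$ from $|S_i|/2 \bmod p$, from $|S_i|/2 - p$, etc. — provided the concentration radius $O(\sqrt{p\log p})$ is smaller than the gap. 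Here we use $\tau\ge\log p$: the gap is $\Omega(\tau)$ while deviations are $O(\sqrt{p\log p})$... which is \emph{larger}, so this still fails, and one genuinely needs $\tau = \Omega(\sqrt{p\log p})$ unless one is cleverer. The resolution (and I'd flag this as the delicate step to get right) is to allow $T$ to range over sizes $\{3p\}$ but average over a cleverly chosen distribution, or to invoke a sharper anticoncentration/concentration statement: since we only need \emph{existence} of one good $T$, and the bad event for each $S_i$ (hitting an $O(\tau)$-window around a point at distance $\asymp p$ from the mean) has probability $\le e^{-\Omega(p/\,?)}$-type bound from \autoref{lem : hypergeometric tail} with $t$ a constant, the union bound over $m\le p$ sets costs only a polynomial factor and is absorbed. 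I would organize the final writeup as: (1) define $f$, bound its degree; (2) verify vanishing on the middle layer; (3) the probabilistic construction of $T$ with the hypergeometric tail bound, handling the mod-$p$ windows; (4) conclude via \autoref{lem:hegedus} that $\deg f \ge p$, contradicting $\deg f \le (2\tau-1)m < p$, hence $m \ge \frac12 p/\tau$.
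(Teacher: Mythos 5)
Your overall plan --- define $f=\prod_i\prod_{t}(\ip{x,\ind_{S_i}}-\lfloor|S_i|/2\rfloor-t)$ over $\F_p$, observe it vanishes on the middle layer, exhibit a point of Hamming weight $3p$ where it is non-zero, and invoke Heged\H{u}s' lemma to bound $\deg(f)\le (2\tau)m$ from below by $p$ --- is exactly the paper's strategy, and steps (1), (2), and (4) of your outline are fine (modulo the small WLOG the paper makes: replace each $S_i$ by its complement so that $|S_i|\le 2p$, which controls which residues $\lfloor|S_i|/2\rfloor+t+kp$ can actually occur as $|T\cap S_i|$). Where your proposal genuinely stalls is step (3), the existence of a good $T\in\binom{[4p]}{3p}$, and you yourself flag this. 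Two points. First, your pessimistic reading of \autoref{lem : hypergeometric tail} is an artifact of applying it with sample size $k=3p$: by the symmetry $\cH(M,N,k)=\cH(k,N,M)$ you may instead take the ``sample'' to have size $|S_i|$, which turns the tail into $e^{-\Omega(|S_i|)}\le p^{-8}$ for $|S_i|\ge 100\tau\ge 100\log p$, and then a straight union bound over $m\le p$ sets does close. (Your side remark that the mean is at distance $\asymp p$ from the bad window is also off: the distance is $|S_i|/4-\tau=\Theta(|S_i|)$, which for the smallest sets is only $\Theta(\tau)$, not $\Theta(p)$; what saves you is that the deviation is then a \emph{constant fraction} of $|S_i|$, which is the scale at which the concentration bite kicks in.) Second, and this is what the paper actually does, you can sidestep the conditioning on $|T|=3p$ entirely: sample $T$ by including each of the $4p$ elements independently with probability $3/4$. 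Then $|T\cap S_i|\sim\mathrm{Bin}(|S_i|,3/4)$, and a one-line Chernoff bound gives $\Pr[\text{$T$ bad for $S_i$}]\le 2^{-|S_i|/20}\le p^{-5}$, so $\Pr[\text{$T$ bad for some $i$}]\le m/p^5\le p^{-4}$. Since $\Pr[|T|=3p]=\Theta(1/\sqrt p)$ by Stirling, and $p^{-4}\ll 1/\sqrt p$, with positive probability both $|T|=3p$ and $T$ is good for every $S_i$, which is all Heged\H{u}s' lemma requires. Either repair is short; the long speculative middle of your writeup should be replaced by one of them.
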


We start with the following lemma, which shows that a small collection
of sets can be unbalanced (modulo $p$) by a partition which is very
unbalanced.

\begin{lemma}
\label{lem:3p}
Let $p$ be a large enough prime, and let $\log p \le \tau \le p/1000$. Let
$S_1, \ldots, S_m \subseteq [4p]$ be sets such that for all $i \in [m]$,
$100 \tau \le |S_i| \le 2p$. Assume further $m \le p$. Then, there exists
$T \subseteq [4p]$, $|T|=3p$ such that for all $i \in [m]$ and for all
$-\tau +1 \le t \le \tau $, $|S_i \cap T| \not\equiv \floor{|S_i|/2} +
t \bmod p$.
\end{lemma}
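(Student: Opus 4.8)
We want to find a set $T$ of size exactly $3p$ in $[4p]$ such that for every $S_i$, the intersection size $|S_i \cap T|$ avoids the whole "bad window" $\{\floor{|S_i|/2} - \tau + 1, \ldots, \floor{|S_i|/2} + \tau\}$ *modulo $p$*. Note the modular condition is actually a bit subtle: $|S_i \cap T|$ ranges in $[0, |S_i|] \subseteq [0, 2p]$, so "avoiding the window mod $p$" means avoiding at most two genuine windows inside $[0,2p]$ (one around $\floor{|S_i|/2}$ and one around $\floor{|S_i|/2}+p$ or $-p$). Since $|S_i| \le 2p$, the relevant shifted windows to avoid are around $\floor{|S_i|/2}$ and around $\floor{|S_i|/2} \pm p$.

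**Plan: random $T$ and a union bound.** The natural approach is probabilistic: pick $T$ uniformly at random among the $\binom{4p}{3p}$ subsets of size $3p$. For a fixed $S_i$ of size $s := |S_i|$, the random variable $|S_i \cap T|$ follows the hypergeometric distribution $\cH(s, 4p, 3p)$, with mean $3s/4$. I want to show that $\Pr[|S_i \cap T| \text{ lands in one of the bad windows}]$ is small — ideally $o(1/p)$ — so that a union bound over $m \le p$ sets gives a positive probability of a good $T$. The key geometric point is that the *expected* intersection $3s/4$ is far from $s/2$: indeed $3s/4 - s/2 = s/4 \ge 25\tau$ by the hypothesis $s \ge 100\tau$. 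So the "main" bad window around $\floor{s/2}$ is at distance roughly $\ge 25\tau - \tau = \Omega(\tau)$ from the mean, and the hypergeometric tail bound (\autoref{lem : hypergeometric tail}) with deviation parameter $t \approx \tau/(3p)$, giving a bound like $\me^{-2t^2 k} = \me^{-\Omega(\tau^2/p)}$ — hmm, that's not small enough. Let me reconsider: the deviation is an *absolute* distance of $\Omega(\tau)$ from a mean, over $k = 3p$ draws, so in the lemma's notation $tk = \Omega(\tau)$, i.e. $t = \Omega(\tau/p)$, and the bound is $\me^{-2t^2 k} = \me^{-\Omega(\tau^2/p^2 \cdot p)} = \me^{-\Omega(\tau^2/p)}$. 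Since $\tau$ can be as small as $\log p$, this is only $\me^{-\Omega(\log^2 p / p)} = 1 - o(1)$, which is useless.

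**The main obstacle and the fix.** So a *single* random $T$ of size $3p$ does not concentrate enough when $\tau$ is small, because $s/4$ can be as small as $25\log p$, far below $\sqrt{p}$. The hypergeometric standard deviation is $\Theta(\sqrt{s})$, which can exceed $\tau$. This is the crux. The resolution, I expect, is to *not* unbalance every set "by the mean of a uniform $T$", but rather to be cleverer: split $[4p]$ into $4$ blocks of size $p$ each, and choose $T$ to be three full blocks plus... no — instead, I think the right move is to *build* $T$ greedily or to use a more structured random choice. Actually, rereading: the windows to avoid have total width $\le 2\tau$ (times two for the mod-$p$ wraparound, so $\le 4\tau$ integers). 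For a *fixed* small window of width $O(\tau)$ and the hypergeometric distribution, the probability of landing in it is at most $O(\tau) \cdot \max_i \Pr[|S_i \cap T| = i] = O(\tau/\sqrt{s}) = O(\tau/\sqrt{\tau}) = O(\sqrt{\tau})$ — still not $o(1/p)$. So pure union bound fails regardless. The genuine fix must be to choose $T$ in correlated stages: e.g., first pick a *uniformly random* balanced $Y$ on $[4p]$, observe $|S_i \cap Y| \approx s/2$, then *augment* $Y$ by adding a uniformly random $p$-subset of $[4p]\setminus Y$ to form $T$ of size $3p$; now $|S_i \cap T| = |S_i \cap Y| + (\text{Hypergeometric on } [2p])$, and crucially we can *condition* on $|S_i\cap Y|$ and argue the added part shifts the value by roughly $s/4$ but — same concentration issue.

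I think the actual argument in the paper handles the mod-$p$ subtlety by a parity/counting trick rather than concentration: choose $T$ so that $|S_i \cap T| - |S_i \cap ([4p]\setminus T)/\ldots|$ — or perhaps it iterates over candidate sets $T$ and uses that there are $p$ "rotations" $T_j$ with $|S_i \cap T_j|$ changing by $0,\pm 1$ at each step, covering a range of length $\Omega(p)$, so only $O(m\tau/p) < 1$ fraction of rotations are bad for *some* $S_i$ — that *would* work: with $m \le p$ sets each forbidding $O(\tau)$ values of $|S_i \cap T_j|$ out of $\Omega(p)$ distinct achievable values as $j$ ranges over $\Omega(p)$ rotations, the total bad count is $O(m\tau) = O(p\tau)$, which exceeds $\Omega(p)$ — still fails unless we have *more* rotations. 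So the plan I would actually pursue: take a random $T$ that is a union of whole "residue classes" mod something, or exploit that $p$ is prime to get $4p$ rotations $T_j = \{j, j+1, \ldots, j+3p-1\} \bmod 4p$; as $j$ ranges over $[4p]$, $|S_i \cap T_j|$ takes each value in a contiguous range and returns, and by a discrete IVT / pigeonhole the number of $j$ for which $|S_i \cap T_j| \equiv \floor{s/2}+t \pmod p$ for the bad $t$'s is $O(\tau)$ *per period* — with $4p$ values of $j$ and period issues, I'd bound bad $j$'s by $O(\tau)$ and union bound over $m \le p$ to get $O(p\tau)$ bad $j$'s; since $\tau \le p/1000$ this is $\le 4p/1000 \cdot (\text{const})$... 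I'd need the per-set bad count to be $O(\tau)$ not $O(p\tau/\text{something})$. So the clean statement I'd aim to prove is: \emph{for each $i$, the number of $j \in \Z_{4p}$ with $|S_i \cap T_j|$ in a bad window mod $p$ is at most $C\tau$}, then union-bound: $m \cdot C\tau \le p \cdot C\tau$, and I need this $< 4p$, forcing $C\tau < 4$, false. Hence even this needs $m \le 4p/(C\tau)$, not $m\le p$. I am clearly missing the structural idea, so my honest plan is: \textbf{first} reduce to showing a random $T$ of size $3p$ works via union bound, \textbf{then} when that fails for small $\tau$, fix it by choosing $T$ in two stages — a random balanced $Y$ plus structured completion — or by the rotation/IVT argument above combined with the observation that we only need to avoid the window \emph{modulo $p$} which, since $|S_i\cap T| \le 2p$, means avoiding $\le 3$ actual windows, and the bad $j$'s for each set form $O(\tau)$ intervals of controlled length; I expect the decisive estimate to be a hypergeometric \emph{anticoncentration} bound (the probability of hitting any single value is at most $O(1/\sqrt{\tau})$, hence a window of width $\tau$ is hit with probability $O(\sqrt\tau)$... which \emph{still} isn't $o(1/p)$ — so ultimately the lemma must \emph{not} be proved by choosing one $T$ from a distribution independent of $S_i$, and the real proof probably chooses $T$ adaptively, set by set, using that adding/removing a single element changes each $|S_i\cap T|$ by $1$ and there is enough room). \textbf{The main obstacle}, then, is precisely this: reconciling the weak hypergeometric concentration at scale $\tau = \Theta(\log p)$ with the need to beat $m$ sets, which I believe forces an adaptive/greedy construction of $T$ rather than a one-shot probabilistic one.
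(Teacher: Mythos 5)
Your diagnosis of the obstacle is accurate but your conclusion — that the proof must be adaptive or greedy — is wrong, and you missed the one idea that fixes the whole thing. The paper's proof \emph{is} a one-shot probabilistic argument, but it does not draw $T$ uniformly from $\binom{[4p]}{3p}$; instead it uses the product measure $\mu_{3/4}$ in which each element of $[4p]$ is placed in $T$ independently with probability $3/4$. Under $\mu_{3/4}$, $|T\cap S|$ is a sum of $|S|$ independent Bernoulli$(3/4)$ variables, so Chernoff gives a tail bound of the form $\Pr\bigl[\bigl||T\cap S|-\tfrac{3}{4}|S|\bigr|\ge |S|/5\bigr]\le 2^{-|S|/20}$ — the exponent scales with $|S|\ge 100\tau\ge 100\log p$, giving $\le p^{-5}$. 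This is exactly what your attempted bounds could not achieve: the hypergeometric tail lemma in the paper has an exponent of order $t^2 k$ with $k=3p$, so expressing a deviation of $\Theta(\tau)$ as $tk$ forces $t=\Theta(\tau/p)$ and gives the useless $\me^{-\Omega(\tau^2/p)}$ you computed. Switching to the product measure makes the concentration scale be $|S|$ rather than $p$, which is the whole point.

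The remaining worry — that under $\mu_{3/4}$ the set $T$ is not of size exactly $3p$ — is handled by a simple comparison of probabilities rather than conditioning: $\Pr[|T|=3p]=\Theta(1/\sqrt{p})$ (Stirling), while the probability that $T$ is bad for some $S_i$ is at most $m\cdot p^{-5}\le p^{-4}$ by the union bound. Since $p^{-4}=o(1/\sqrt{p})$, there must exist an outcome with $|T|=3p$ for which no $S_i$ is bad. One also has to check that all the ``bad'' target values $\floor{|S_i|/2}+t+kp$ (for $k\in\{-1,0,1\}$) are at distance at least $|S_i|/5$ from the mean $3|S_i|/4$; your observation that $3s/4-s/2=s/4\ge 25\tau$ is the right start, and the paper carries out the short case analysis over $k$ using $|S_i|\le 2p$ and $\tau\le |S_i|/100$. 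So the gap in your proposal is not a missing structural or greedy idea but a missing distributional choice, and once $\mu_{3/4}$ is in hand the rest is a routine Chernoff plus union bound.
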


To prove \autoref{lem:3p}, we use the following two technical claims. Let
$\mu_{3/4}$ denote the probability distribution on subsets of $[4p]$
obtained by putting each $j \in [4p]$ in $T$ with probability $3/4$,
independently of all other elements.

\begin{claim}
\label{cl:fine-tuned}
For a random set $T \sim \mu_{3/4}$, $\Pr[|T|=3p] = \Theta(1/\sqrt{p})$.
\end{claim}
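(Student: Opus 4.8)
Write $|T| = \sum_{j=1}^{4p} X_j$, where $X_j = \ind[j \in T]$ are independent $\mathrm{Bernoulli}(3/4)$ random variables, so $|T|$ is distributed as $\mathrm{Binomial}(4p, 3/4)$. Since $3p$ is an integer and equals the mean (indeed the mode) of this distribution, the claim is an instance of the standard estimate that a binomial distribution attains its mean with probability $\Theta(1/\sigma)$, where here $\sigma = \sqrt{4p \cdot \tfrac34 \cdot \tfrac14} = \sqrt{3p/4} = \Theta(\sqrt p)$. Concretely,
\[
\Pr_{T \sim \mu_{3/4}}[\,|T| = 3p\,] = \binom{4p}{3p}\left(\frac34\right)^{3p}\left(\frac14\right)^{p},
\]
and it remains to show this quantity is $\Theta(1/\sqrt p)$.

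The plan is to evaluate this directly using Stirling's approximation $n! = \sqrt{2\pi n}\,(n/e)^n(1+o(1))$. Applying it to each of the three factorials in $\binom{4p}{3p} = \frac{(4p)!}{(3p)!\,p!}$, the $e$-powers contribute $e^{-4p}/(e^{-3p}e^{-p}) = 1$, the $p$-powers contribute $(4p)^{4p}/((3p)^{3p}p^p) = 4^{4p}/3^{3p}$, and the $\sqrt{\phantom{n}}$-factors contribute $\sqrt{8\pi p}/(\sqrt{6\pi p}\sqrt{2\pi p}) = \Theta(1/\sqrt p)$. Hence $\binom{4p}{3p} = \Theta(1/\sqrt p)\cdot 4^{4p}/3^{3p}$, and multiplying by $(3/4)^{3p}(1/4)^{p} = 3^{3p}/4^{4p}$ cancels the exponential factor exactly, leaving $\Pr[\,|T| = 3p\,] = \Theta(1/\sqrt p)$. (If one prefers to avoid the $o(1)$ bookkeeping, one can instead invoke Robbins' explicit two-sided form of Stirling's formula, or a local central limit theorem for sums of i.i.d.\ bounded random variables, to get the same conclusion with concrete constants.)

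There is essentially no genuine obstacle here: the only thing to be careful about is tracking the polynomial-in-$p$ prefactors through Stirling's formula and checking that the exponential terms cancel, which they do because $3p$ is exactly the mean of $\mathrm{Binomial}(4p,3/4)$. For the downstream application in \autoref{lem:3p} only the lower bound $\Pr[\,|T|=3p\,] = \Omega(1/\sqrt p)$ is actually needed (to pass from an estimate under $\mu_{3/4}$ to one conditioned on $|T| = 3p$), but the computation above yields the matching upper bound at no extra cost.
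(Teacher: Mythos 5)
Your proposal is correct and follows exactly the same route as the paper: write $\Pr[|T|=3p]=\binom{4p}{3p}(3/4)^{3p}(1/4)^p$ and apply Stirling's approximation. The paper simply states this in one line without carrying out the cancellation, which you do explicitly; the two proofs are the same.
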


\begin{proof}
The probability that $|T|=3p$ is given by $\binom{4p}{3p} \cdot
(3/4)^{3p} \cdot (1/4)^p$, which is $\Theta(1/\sqrt{p})$, by Stirling's
approximation.
\end{proof}

\begin{claim}
\label{cl:unbalances-set}
Let $\log p \le \tau \le p/1000$ and let $S \subseteq [4p]$ such that
$100 \tau \le |S| \le 2p$. For a random set $T \sim \mu_{3/4}$, the
probability that for some integer $-\tau + 1 \le t \le \tau$ it holds
that $|T \cap S_i| = \floor{|S_i|/2} + t \bmod p$ is at most $1/p^5$.
\end{claim}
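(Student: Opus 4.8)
The plan is to use that, under $T \sim \mu_{3/4}$, the intersection $|T \cap S|$ is a sum of $|S|$ i.i.d.\ Bernoulli$(3/4)$ random variables, hence binomially distributed with mean $\mu := 3|S|/4$. The event in the claim asks that $|T\cap S|$ hit one of at most $2\tau$ residue classes modulo $p$, i.e.\ one of at most $6\tau$ actual values in $\{0,1,\dots,|S|\}$ (since this range has length $|S|\le 2p$). A direct union bound over these values is hopeless, because binomial point probabilities near $\mu$ are of order $1/\sqrt{|S|}$. The crucial observation is that the heavily unbalanced split ($3p$ versus $p$) forces $\mu = 3|S|/4$, and since $|S|\le 2p$ the relevant residue classes cannot land anywhere near $\mu$: they can only cluster near $0$, near $|S|/2$, and near $|S|$, each of which is a constant fraction of $|S|$ away from $\mu$.

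To make this precise, let $V\subseteq\{0,\dots,|S|\}$ consist of the integers $v$ in this range with $v\equiv\floor{|S|/2}+t\pmod p$ for some $t\in\{-\tau+1,\dots,\tau\}$. Writing $v=\floor{|S|/2}+t+jp$ with $j\in\Z$, and using the hypotheses $100\tau\le|S|\le 2p$ (so $\lceil|S|/2\rceil\le p$ and $\tau\le|S|/100$), a short case analysis on $j$ shows: if $j\notin\{-1,0,1\}$ then $v$ falls outside $[0,|S|]$; if $j=-1$ then $v\le\tau$; if $j=1$ then $v\ge|S|-\tau$; and if $j=0$ then $\abs{v-\floor{|S|/2}}\le\tau$. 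Hence
\[
V\ \subseteq\ [0,\tau]\ \cup\ \bigl[\floor{|S|/2}-\tau,\ \floor{|S|/2}+\tau\bigr]\ \cup\ \bigl[|S|-\tau,\ |S|\bigr],
\]
so by a union bound the probability in the claim is at most $\Pr[|T\cap S|\le\tau]+\Pr[\abs{|T\cap S|-\floor{|S|/2}}\le\tau]+\Pr[|T\cap S|\ge|S|-\tau]$.

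Each of these three events forces $|T\cap S|$ to deviate from $\mu=3|S|/4$ by at least a fixed constant fraction of $|S|$: by at least $3|S|/4-\tau$ in the first case, and by at least $|S|/4-\tau$ in the other two, and both of these exceed $|S|/5$ since $\tau\le|S|/100$. Applying Hoeffding's inequality (\autoref{lem : hoeffding 1}) with $n=|S|$ bounds each of the three probabilities by $2\exp(-c|S|)$ for an absolute constant $c>0$. Since $|S|\ge100\tau\ge100\log p$, the total is at most $6\exp(-100c\log p)$, which is well below $1/p^5$ once $p$ is large enough.

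The step that needs the most care is the interval containment for $V$: one must handle the regime where $|S|$ is close to $2p$, in which the $j=\pm1$ lifts of the residue classes do enter $[0,|S|]$ — but, as the case analysis shows, only within distance $\tau$ of the two endpoints $0$ and $|S|$, which is exactly what keeps them far from $\mu$. Everything after that is a routine Chernoff/Hoeffding estimate, and the bound one actually obtains is of the form $p^{-\Omega(\log p)}$, comfortably stronger than the claimed $1/p^5$.
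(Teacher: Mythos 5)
Your proof is correct and follows essentially the same route as the paper's: both observe that since $\E[|T\cap S|]=3|S|/4$ while the bad residues (lifted to $[0,|S|]$) can land only near $0$, $\floor{|S|/2}$, and $|S|$, any bad value deviates from the mean by at least $|S|/5$, and then a Chernoff/Hoeffding tail bound together with $|S|\ge 100\tau\ge 100\log p$ gives the claim; the paper simply states this as a single deviation event while you split it into a union bound over three intervals, a purely cosmetic difference. One small inaccuracy in your closing remark: the worst-case bound you actually get is $p^{-\Theta(1)}$, not $p^{-\Omega(\log p)}$, since $|S|$ can be as small as $100\log p$ — but this does not affect the validity of the argument.
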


\begin{proof}
Denote $s = |S|$. Then $\E[|T \cap S|] = 3s/4$. We say $T$ is bad for $S$
if $|T \cap S| = \floor{s/2} + t + kp$ for some $-\tau  \le t \le \tau +
1$ and $k \in \Z$. We claim this in particular implies that $\abs{ |T
\cap S_i| - 3s/4} \ge s/5$. Indeed, since $|T \cap S|$ is an integer in
the interval $[0,2p]$, and by the bounds on $s$, the only cases needed
to be analyzed are $k=0, \pm 1$.

If $|T \cap S| = \floor{s/2} +t - p$, 
then clearly $|T \cap S| \le \floor{s/2}$ which implies the statement.

If $|T \cap S| = \floor{s/2} +t + p$, 
then, as $s \le 2p$ and $\tau \le s/100$,
\[
|T \cap S| - 3s / 4  \ge -s/4 -1 + t + p 
\ge p/2 + t - 1 \ge s/4 + t - 1 \ge s/5
\]
(The ``$-1$'' accounts for the fact that $s/2$ might not be an integer).

Finally, if $|T \cap S| = \floor{s/2} +t$, it holds that \[
|T \cap S| \le s/2 + \tau \le s/2 + 2s/100,
\]
which again implies the statement.

By Chernoff Bound (see, e.g., \cite{AlonSpencer}), $\Pr[\abs{ |T \cap S_i|
- 3s/4} \ge s/5] \le 2^{-|S|/20} \le 1/p^5$, hence $T$ is bad for $S$
with at most that probability.
\end{proof}

The proof of \autoref{lem:3p} is now fairly immediate.

\begin{proof}[Proof of \autoref{lem:3p}]
Pick $T \sim \mu_{3/4}$.  By \autoref{cl:fine-tuned}, $|T|=3p$ with
probability $\Theta(1/\sqrt{p})$.  Recall that $T$ is bad for $S_i$ if $|T
\cap S_i| = \floor{|S_i|/2} + t \bmod p$ for $t \in \set{-\tau+1, \ldots, \tau}$.  By \autoref{cl:fine-tuned},
for each $S_i$, $T$ is bad for $S_i$ with probability at most $1/p^5$.
Hence, the probability that there exists $i \in [m]$ such that $T$
is bad for $S_i$ is at most $m /p^5 \le 1/p^4$.

It follows that with probability at most $1-\Theta(1/\sqrt{p}) + 1/p^4 <
1$, either $|T| \neq 3p$ or $T$ is bad for some $S_i$, and hence there
exists a selection of $T$ such that $|T|=3p$ and $T$ is good for all
$S_i$'s.
\end{proof}

We are now ready to prove \autoref{thm:unbalancing}.

\begin{proof}[Proof of \autoref{thm:unbalancing}]
Let $S_1, \ldots, S_m$ be a collection of sets as stated in the
theorem. Since $d_Y(S_j) = d_Y([n]\setminus S_j)$, we can assume without
loss of generality, by possibly replacing a set with its complement,
that $|S_j| \le 2p$ for all $j \in [m]$. We may further assume $m \le p$
as otherwise the statement directly follows.  For $j \in [m]$, define
the following polynomials over $\F_p$:
\[
B_j (x_1, \ldots, x_{4p}) 
= \prod_{t=-\tau+1}^{\tau} (\ip{x,\ind_{S_j}} - \floor{|S_j|/2} - t),
\]
where $x = (x_1, \ldots, x_{4p})$ 
and $\ip{u,v} = \sum u_i v_i$ is the usual inner product. Further, define
\[
f(x_1, \ldots, x_{4p}) = \prod_{j=1}^m B_j (x_1, \ldots, x_{4p}),
\]
as a polynomial over $\F_p$.

By assumption, for every $Y \in \binom{[4p]}{2p}$, $f(\ind_Y) =
0$. This follows because $\ip{\ind_Y, \ind_{S_j}} = |Y \cap S_j|$,
and by assumption, for some $j$ is holds that $d_Y(S_j) < \tau$, so it
must be that $|Y \cap S_j| - \floor{|S_j|/2} \in \set{-\tau+1, \ldots, 0,
\ldots, \tau}$, so that $B_j(\ind_Y) = 0$.

Furthermore, \autoref{lem:3p} guarantees the existence of a set $T \in
\binom{[4p]}{3p}$ such that $f(\ind_T) \neq 0$, as the set $T$ from
\autoref{lem:3p} satisfies the property that $(\ip{\ind_T,\ind_{S_j}} -
\floor{|S_j|/2} - t) \neq 0 \bmod p$ for all $-\tau + 1 \le t \le \tau $
and for all $j \in [m]$.

By \autoref{lem:hegedus}, $\deg(f) \ge p$, and by construction, $\deg(f)
\le  2 \tau \cdot m$, which implies the desired lower bound on $m$.
\end{proof}

\subsection{General $n$ and $\tau$}\label{sec:general}

In this section, we extend \autoref{thm:unbalancing} for a more general
range of parameters, by proving the following.

\begin{theorem}\label{thm:unbalancing-general}
Let $n$ be a large enough even natural number, and let $\tau \in \{1, 2,
\ldots, n/10^6\}$ be a parameter. Let $S_1, S_2, \ldots, S_m \subseteq
[n]$ be sets such that for each $i \in [m]$, $2\tau \leq |S_i| \leq
n-2\tau$. Furthermore, assume that for every balanced partition $Y$ of
$[n]$, there exists an $i$ such that $d_Y(S_i) < \tau$. Then, $m \geq
\frac{1}{10^5}\cdot n/\tau$.
\end{theorem}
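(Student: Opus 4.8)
The plan is to reduce \autoref{thm:unbalancing-general} to the prime case \autoref{thm:unbalancing} (or rather, to re-run its proof strategy) by handling the two obstructions that the general setting introduces: (i) $n$ need not be of the form $4p$ with $p$ prime, and (ii) the sets $S_i$ need not have even size, so the target value $\floor{|S_i|/2}$ is only an approximation to $|S_i|/2$. The backbone is unchanged: build, over a suitable field $\F_p$, a product polynomial $f = \prod_j B_j$ with $\deg(f) \le 2\tau m$ (up to constants), show $f$ vanishes on the entire middle layer $\binom{[N]}{N/2}$ of the relevant cube using the hypothesis, exhibit one point of Hamming weight $3N/4$ (for an appropriate $N$) where $f \ne 0$, and invoke \autoref{lem:hegedus} to conclude $\deg(f) \ge p = \Omega(N) = \Omega(n)$.

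First I would dispense with the primality/divisibility issue by passing to a sub-universe. By Bertrand's postulate there is a prime $p$ with $n/8 \le p \le n/4$ (say), and I restrict attention to a fixed subset $U \subseteq [n]$ of size $4p$; note $4p = \Theta(n)$. The catch is that a balanced partition of $[n]$ does not restrict to a balanced partition of $U$. To fix this I would split the complement $[n] \setminus U$ (of size $n - 4p$, which is even since both are even) into two halves $A, B$ of equal size, and from a balanced partition $Y$ of $[n]$ produce a set $Y' = (Y \cap U) \cup (\text{correction})$ of size exactly $2p$ inside $U$; alternatively, and more cleanly, I would only ever feed the construction balanced partitions of $[n]$ of the special form $Y = Y_U \sqcup A$ where $Y_U \in \binom{U}{2p}$ and $A$ is a fixed half of $[n]\setminus U$, and define $B_j$ using only the coordinates in $U$ via $\ip{x,\ind_{S_j \cap U}}$. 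Since the hypothesis of the theorem quantifies over \emph{all} balanced $Y$, it in particular holds for these special ones, and for them $|Y \cap S_j| = |Y_U \cap (S_j \cap U)| + |A \cap S_j|$; the second term is a fixed constant $c_j$, so the imbalance condition $d_Y(S_j) < \tau$ translates into $|Y_U \cap (S_j\cap U)|$ lying in an interval of length $2\tau$ around a fixed integer. That gives the vanishing of $f$ on $\binom{U}{2p}$, with $f$ a product over $j$ of $B_j$ each of degree $2\tau$, so $\deg f \le 2\tau m$. The one thing to check here is that the relevant sets $S_j \cap U$ are neither too small nor too large inside $U$; this needs the hypothesis $2\tau \le |S_j| \le n - 2\tau$ together with an averaging/probabilistic choice of $U$ (or a pigeonhole argument) ensuring $|S_j \cap U|$ is bounded away from $0$ and $4p$ by, say, $\Omega(\tau)$ — this is where the slackness in the constant $10^6$ versus $1000$ is spent. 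Actually, to make this clean I would choose $U$ at random among $4p$-subsets of $[n]$ and, using \autoref{lem : hypergeometric tail} and a union bound over the $m \le n$ sets, guarantee $100\tau \le |S_j \cap U| \le 4p - 100\tau$ for all $j$ simultaneously (if $m > n$ we are already done). The small sets among the $S_j \cap U$ that might still be problematic can be discarded, since if $d_Y(S_j)<\tau$ is ever witnessed by a set that becomes tiny inside $U$ we just don't rely on it — but I must be careful that \emph{some} witness survives for every $Y$; the cleanest route is to observe that a set of size $< 100\tau$ inside $U$ is $\tau$-unbalanced by the unbalanced $3p$-vs-$p$ partition anyway, so it never obstructs the final step, and conversely on the middle layer such sets only help $f$ vanish.

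Next, the existence of the weight-$3p$ point where $f \ne 0$: I would re-run \autoref{lem:3p} and \autoref{cl:unbalances-set} verbatim for the sets $\{S_j \cap U\}$ inside $U$ (which now satisfy $100\tau \le |S_j \cap U| \le 4p - 100\tau$; after possibly replacing each by its complement in $U$ we may assume $\le 2p$). Pick $T \sim \mu_{3/4}$ on $U$; with probability $\Theta(1/\sqrt p)$ we have $|T| = 3p$, and for each $j$ the probability that $|T \cap (S_j\cap U)| \equiv \floor{|S_j\cap U|/2} + t \pmod p$ for some $t$ in the window is $\le 1/p^5$ by Chernoff, so a union bound over $m \le p$ sets leaves a positive-probability event. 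This produces $T \in \binom{U}{3p}$ with $f(\ind_T) \ne 0$, and extending $T$ by $A$ (or any fixed set) gives a weight-$(3p + |A|)$ point of $\{0,1\}^n$; restricting $f$ to live only on $U$-coordinates, what \autoref{lem:hegedus} actually needs is a polynomial on $4p$ variables vanishing on $\binom{[4p]}{2p}$ and nonzero on some weight-$3p$ point, which is exactly what we have. Hence $\deg f \ge p$, so $2\tau m \ge p \ge n/8$, giving $m \ge n/(16\tau)$, comfortably better than $n/(10^5\tau)$ once one accounts for the crude constants introduced along the way.

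The main obstacle I anticipate is the bookkeeping around the sizes of $S_j \cap U$: one must simultaneously (a) keep every $S_j\cap U$ bounded away from $0$ and $4p$ so that both \autoref{lem:hegedus}'s nontriviality input and \autoref{cl:unbalances-set}'s Chernoff bound go through, and (b) make sure that for every balanced partition $Y$ of $[n]$ the \emph{surviving} set $S_j$ that witnesses $d_Y(S_j) < \tau$ still yields a genuinely vanishing factor $B_j(\ind_Y)$ after passing to the sub-universe — i.e., the correction term $|A \cap S_j|$ must be a well-defined constant independent of $Y$, which is why I insist on partitions of the form $Y_U \sqcup A$ rather than arbitrary $Y$ and why I need the hypothesis in its "for every balanced $Y$" strength. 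The secondary nuisance is the non-even $|S_j|$: working modulo $p$ with the floor function introduces a possible off-by-one, but since the window $[-\tau+1,\tau]$ has length $2\tau$ and $p \gg \tau$, no wraparound collisions occur, exactly as in \autoref{cl:unbalances-set}; this is handled by the same "$-1$" slack already present there. Everything else is a faithful replay of \autoref{sec:unbalancing}'s special case.
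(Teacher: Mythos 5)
Your high-level strategy matches the paper's: pick a prime $p$, pass to a sub-universe of size $4p$, build a degree-$O(\tau m)$ product polynomial that vanishes on the middle layer, exhibit a weight-$3p$ point where it does not vanish, and invoke \autoref{lem:hegedus}. But there are two genuine gaps in your execution, and the second one is flagged but not closed by your own proposal.

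\textbf{The prime choice is too weak.} You invoke Bertrand's postulate to get a prime $p$ with $n/8 \le p \le n/4$, so $|U| = 4p$ can be as small as $n/2$ and the discarded piece $[n]\setminus U$ has size $\Theta(n)$. This destroys the quantitative claims you rely on. First, the condition $100\tau \le |S_j \cap U| \le 4p - 100\tau$ that you ``guarantee via hypergeometric concentration'' is simply \emph{impossible} when $|S_j|$ is close to $2\tau$: such a set has $|S_j \cap U| \le 2\tau < 100\tau$ no matter how $U$ is chosen. Second, for larger sets, a random $U$ of size $\approx n/2$ leaves $|S_j \setminus U| \approx |S_j|/2$, so the correction constant $c_j = |A \cap S_j|$ can be $\Theta(|S_j|)$, which wrecks the ``target is far from $0$ mod $p$'' calculation in the analogue of \autoref{cl:unbalances-set}. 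The paper avoids both problems by using a sharper result on prime gaps (citing \cite{BHP01}) to find $p$ with $n - 4p \le n^{0.6}$; then the removed set $A$ has size only $n^{0.6}$, which makes it possible to choose $A$ disjoint from the union of all sets of size $\le 10^4\tau$ (that union has size $\le 0.1 n \ll n - n^{0.6}$) and simultaneously satisfying $|A \cap S_j| \le 0.01|S_j|$ for the larger sets (\autoref{lem : reducing to 4p}). With Bertrand's $p$ neither property is attainable, and no amount of constant-tuning ($10^6$ vs.\ $1000$) fixes this — it is a polynomial-vs.-linear issue, not a constants issue.

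\textbf{Small sets are not handled.} Even with the paper's sharper prime, a set of size $2\tau$ still has $|S_j \cap U| \approx 2\tau \ll 100\tau$, so ``re-run \autoref{lem:3p} verbatim'' fails: the Chernoff bound in \autoref{cl:unbalances-set} gives failure probability $\approx 2^{-\Omega(|S_j \cap U|)} = 2^{-\Omega(\tau)}$, which is nowhere near the $1/p^5$ needed to union-bound over $m \approx n/\tau$ sets (indeed for $\tau = O(1)$ it is a constant). You notice this and propose ``discarding'' small sets, hedging that you ``must be careful that some witness survives for every $Y$,'' but you do not resolve the tension: a discarded $S_j$ might be the unique near-balanced witness for some balanced $Y$, in which case the restricted polynomial $f$ fails to vanish on the middle layer and \autoref{lem:hegedus} does not apply. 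The paper's \autoref{lem:3p general} resolves this with a multi-stage construction of $T$: a \emph{deterministic} first stage $T_1$ that absorbs every small set $S_j$ (size $\le 6000\tau$) entirely and grabs $6000\tau$ elements from each larger set, so that $|T \cap S_j|$ is bounded away from zero by a margin that dominates the mod-$p$ window regardless of $|S_j|$; then a random stage $T_2$ (at rate $0.65$, not $3/4$, precisely to leave room for the deterministic stages); then a repair stage $T_3$ for under-covered sets and a padding stage $T_4$. This construction is where most of the technical work of the general case lives, and it is not a ``faithful replay'' of the special case as you claim.

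The rest of your plan — restricting to partitions of the special form $Y_U \sqcup A$, absorbing the fixed constant $|A \cap S_j|$ into the affine factor $B_j$, bounding $\deg f \le 2\tau m$, and treating the non-even $|S_j|$ floor/off-by-one by shrinking the window by one (which the paper does explicitly for the edge case $|S_j| = 2\tau$) — is correct and does match the paper's execution.
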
 

We remark that \autoref{thm:unbalancing} suffices for the application
to circuit lower bounds, and thus, a reader who is more interested in
that aspect of this work may safely skip to \autoref{sec:proof of main
theorem}.

Recall that in \autoref{thm:unbalancing} we have required the universe
size $n$ to be of the form $4p$ for a prime $p$, and the sets $S_1,
\ldots, S_m$ to be of size at least logarithmic in $n$ (as commented
earlier, we may assume $|S_i| \le n/2$ for every $i$, by possibly
replacing $S_i$ with its complement).

Our strategy for general even\footnote{In order to talk about balanced
partitions of the universe, $n$ clearly must be even. However, our
techniques can be easily extended to odd integers, if one is willing
to replace balanced partitions by almost-balanced partitions, that is,
partitions $[n]=Y \sqcup Z$ such that $|\abs{Y}-\abs{Z}| = 1$. We omit
the straightforward details.} $n$ and general $\tau$ will be very
similar for the previous special case. In order to apply the useful
\autoref{lem:hegedus}, we start by ``forcing'' the universe size to be
of the form $4p$. This is done by picking the largest number of the form
$4p$ which is smaller than $n$ (known results about the distribution
of prime numbers guarantee the existence of such a prime such that $n-4p
\le n^{0.6}$). We then randomly pick a subset of $A \subset [n]$ of size
$n-4p$ avoiding all the small sets 
and partition $A$ in an arbitrary balanced manner. Such a subset
is guaranteed, with high probability, to have a small intersection
with every $S_i$, and thus for every such set the values of very few
elements have been determined.  Again, this intersection property is
easier to show, by standard concentration bounds, when the sets $S_i$
are somewhat large, whereas in our case they can be small. However, the
fact that $|A|$ itself is sublinear in $n$ enables us to handle all cases.

We now denote $\tilde{S}_i = S_i \setminus A$ and $\widetilde{[n]} = [n]
\setminus A$, and, as before, we would like to find a set $T \subseteq
\widetilde{[n]}$ of size exactly $3p$ that is unbalanced, modulo $p$,
on every $\tilde{S}_i$ (and since $\tilde{S}_i$ is a very large subset
of $S_i$, this property will extend to $S_i$ itself). A na\"{i}ve random
choice, as is done in the proof of \autoref{thm:unbalancing}, will not
work, since the probability of failure for very small sets will be too
large to apply a union bound over all sets. Thus, we pick $T$ using a
different, and slightly more complicated, random procedure.

Given such $T$ and $A$, the proof follows from a similar construction
of a polynomial in a similar application of \autoref{lem:hegedus}. We
now provide the details.

We start by proving the existence of a set $A$ as described above.

\begin{lemma}\label{lem : reducing to 4p}
Let $\tau \geq 1$ be an integer and $S_1, S_2, \ldots, S_m$ be subsets of
$[n]$, such that $m \leq 10^{-5}n/\tau$. Then, for every integer $a \le
n^{0.6}$, there exists an $A \subseteq [n]$ of size exactly $a$ such that
for every $i \in [m]$, $\abs{A\cap S_i} \leq  0.01\abs{S_i}$. Moreover,
for each $i \in [m]$, if $\abs{S_i} \leq 10^4\tau$, then $A \cap S_i
=\emptyset$.
\end{lemma}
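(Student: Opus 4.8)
The plan is to take $A$ to be a uniformly random subset of a suitably restricted ground set and to show, by a union bound, that all the required intersection conditions hold simultaneously with positive probability. The first step disposes of the ``small'' sets. Let $U := \bigcup_{i\,:\,|S_i| \le 10^4\tau} S_i$; since there are at most $m \le 10^{-5}n/\tau$ indices in all, and each set appearing in this union has size at most $10^4\tau$, we get $|U| \le m\cdot 10^4\tau \le 0.1\,n$, so that $|[n]\setminus U| \ge 0.9\,n \ge a$ for $n$ large. It therefore suffices to choose $A$ to be a uniformly random subset of $[n]\setminus U$ of size exactly $a$: such an $A$ is automatically disjoint from every $S_i$ with $|S_i|\le 10^4\tau$ (so $|A\cap S_i|=0\le 0.01|S_i|$ for these), and it remains only to check that with positive probability $|A\cap S_i| \le 0.01|S_i|$ for every ``large'' set, i.e. $|S_i| > 10^4\tau$.

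Fix a large set $S_i$ and observe that $|A\cap S_i| = |A\cap(S_i\setminus U)|$ is hypergeometrically distributed: we draw $a$ elements without replacement from the $(\ge 0.9n)$-element universe $[n]\setminus U$ and count those landing in the fixed subset $S_i\setminus U$. Its expectation is at most $a|S_i|/(0.9n) \le |S_i|\,n^{-0.3}$ once $n$ is large, using $a \le n^{0.6}$. As $0.01|S_i|$ exceeds this expectation by a super-constant factor, a multiplicative Chernoff bound (see, e.g., \cite{AlonSpencer}, as used in the proof of \autoref{cl:unbalances-set}) in its Poisson form yields
\[
\Pr\bigl[\,|A\cap S_i| \ge 0.01|S_i|\,\bigr] \;\le\; \Bigl(\frac{\me\,|S_i|\,n^{-0.3}}{0.01\,|S_i|}\Bigr)^{0.01|S_i|} \;=\; \bigl(100\,\me\,n^{-0.3}\bigr)^{0.01|S_i|} \;\le\; n^{-20},
\]
the final step using $100\,\me\,n^{-0.3}\le n^{-0.2}$ for $n$ large together with $|S_i| > 10^4\tau \ge 10^4$, so that the exponent $0.01|S_i|$ is at least $100$. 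A union bound over the at most $m \le 10^{-5}n/\tau \le n$ large sets then bounds the probability that some large $S_i$ fails its constraint by $n\cdot n^{-20} = n^{-19} < 1$; hence a good $A$ exists, and being contained in $[n]\setminus U$ it simultaneously handles the small sets, so it satisfies both assertions of the lemma.

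The delicate point — and the reason the additive hypergeometric tail bound \autoref{lem : hypergeometric tail} (or Hoeffding's inequality) will not do — is precisely this concentration estimate for large sets: when $a$ is close to $n^{0.6}$ and $|S_i|$ is near its minimum $\approx 10^4\tau$, the deviation $0.01|S_i|$ we must forbid is tiny compared with $\sqrt a$, so a subgaussian bound is hopelessly lossy; one must exploit that the mean $a|S_i|/n$ is itself $o(1)$, which is exactly what the multiplicative/Poisson Chernoff bound uses. The numerical thresholds in the statement are tuned for this: $10^4\tau$ is large enough that the union $U$ of small sets occupies at most a $1/10$ fraction of $[n]$ (leaving room for $A$ outside it), and at the same time it forces every large set to have size at least $10^4$, which makes the exponent $0.01|S_i|$ exceed $100$ — enough to survive the union bound over all sets (there can be up to $\Theta(n)$ of them, when $\tau = 1$, which is the hard regime; for $\tau = \Omega(\log n)$ far cruder estimates would already suffice).
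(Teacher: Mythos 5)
Your proposal is correct, but it takes a genuinely different and cleaner route than the paper for the sets with $|S_i| > 10^4\tau$. The paper splits these into two further sub-cases: ``large'' sets ($|S_i|\ge n^{0.31}$), handled by the additive hypergeometric tail bound of \autoref{lem : hypergeometric tail}, and ``intermediate'' sets ($10^4\tau \le |S_i| \le n^{0.31}$), for which the additive bound is useless; the latter are handled by an ad hoc argument bounding the probability that $A$ contains any fixed $100$-element subset of $S_i$ and then union-bounding over all $\binom{|S_i|}{100}$ such subsets and all $i$. You instead observe that the expected intersection $a|S_i|/(0.9n)\le |S_i|\,n^{-0.3}$ is a vanishingly small fraction of the threshold $0.01|S_i|$, so a single application of the multiplicative (Poisson-form) Chernoff bound handles every non-small set uniformly, giving $n^{-20}$ per set. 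This collapses the paper's two non-trivial cases into one and avoids the subset-counting argument; your diagnosis of why an additive/subgaussian bound must fail in the intermediate regime is exactly the obstruction the paper's case split is designed to circumvent.

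One small point you should make explicit rather than leave implicit: the concentration bound in \autoref{cl:unbalances-set} that you cite is for i.i.d.\ Bernoulli sampling, whereas $|A\cap S_i|$ is hypergeometric (sampling without replacement from $[n]\setminus U$). The multiplicative Chernoff bound does carry over — by Hoeffding's classical observation, the moment generating function of a sum sampled without replacement is dominated by that of the corresponding i.i.d.\ sum, so Chernoff-type inequalities in both additive and multiplicative form transfer verbatim (this is also in the reference \cite{Skala13} the paper already uses for \autoref{lem : hypergeometric tail}). With that citation fixed, the argument is airtight.
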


\begin{proof}
Let $L = \bigcup_{i : \abs{S_i} \leq 10^4\tau} S_i$ and let $\ell =
\abs{L}$. Since $m \leq 10^{-5}n$, we know that  $ \ell \leq m \cdot 10^4
\leq n/10$. Let $A$ to be a uniformly random subset of $[n]\setminus L$
of size $a$.

We now show that with high probability $A$ satisfies $\abs{A\cap S_i}\leq
0.01\abs{S_i}$ for every $i \in [m]$.  We consider three cases.

\begin{itemize}
\item {\bf Small sets: $ \abs{S_j} \leq 10^4\tau $. } By the choice of $A$, we know that $A$ is disjoint from all subsets of size at most $10^4\tau$. 

\item {\bf Large sets: $\abs{S_j} \geq n^{0.31}$. } 
For any fixed set $S_i$ of size at least $n^{0.31}$, by~\autoref{lem : hypergeometric tail}, we know that
\[
\Pr\left[\abs{A \cap S_i} - \abs{A}\abs{S_i}/(0.9n) \geq 0.009\abs{S_i} \right] \leq \exp(-\Omega(\abs{S_i}^2/|A|)) \, .
\]
Since $\abs{A} \leq n^{0.6}$ and $\abs{S_i} \geq n^{0.31}$, this
probability is at most $\exp(-\Omega(n^{0.02}))$. Thus, by a union bound,
we know that with probability at least $1-\exp(-\Omega(n^{0.02}))$,
for each $S_i$ with $\abs{S_i} \geq n^{0.31}$, $\abs{A \cap S_i} \leq
0.01\abs{S_i}$.

\item {\bf Sets of intermediate size: $10^4\tau  \le \abs{S_j} \leq
n^{0.31}$. }  We now argue that for all such sets, $\abs{A\cap S_i}
\leq 100$, with high probability.

To this end, we first upper bound the probability that the set $A$
contains a fixed set $S$ of size $100$, and then take a union bound
over all sets $S$ of size $s = 100$ which are a subset of some $S_i$
of intermediate size. Let $S$ be a fixed set of size $100$. Then,

\begin{align*}
\Pr[S\subseteq A] &\leq \frac{\binom{n-\ell - s}{a-s}}{\binom{n-\ell}{a}}  \\
&=  \frac{(n-\ell-s)!}{(a-s)!(n-\ell - a)!} \cdot  \frac{a! (n-\ell - a)!}{(n-\ell)!}  \\
&=  \frac{(n-\ell-s)!}{(a-s)!} \cdot  \frac{a! }{(n-\ell)!}  \\
&=  \frac{(n-\ell-s)!}{(n-\ell)!} \cdot  \frac{a! }{(a-s)!}  \\
& \leq  \left( \frac{a}{n-\ell-s} \right)^s   \\
& \leq \left( \frac{n^{0.6}}{n-0.1n - n^{0.6}}\right)^{s} \quad \quad (\text{using bounds on } \ell \text{ and } a)  \\
&\leq   n^{-0.39 s}  \\
&\leq   n^{-39 } \quad \quad (\text{using s = 100})
\end{align*}
For each $S_i$ of size at most $n^{0.31}$ there are at most $(n^{0.31})^{100}$ subsets of size $100$.
Therefore, by a union bound, the probability that $\abs{A\cap S_i}
\geq
100$ for any subset $S_i$ of size at most $n^{0.31}$ is at most $n^{-39}
\cdot n \cdot n^{31}=n^{-7}$.
\end{itemize}
A union bound over all three cases completes the proof of the lemma.
\end{proof}

Having shown the existence of the set $A$ as described in the proof
outline, we turn to show the existence of a set $T$.

\begin{lemma}\label{lem:3p general}
Let $n$ be a natural number, $p$ be a prime satisfying $n - n^{0.6}
\leq 4p \leq n $ and let $\tau$ be an integer satisfying $1 \leq \tau \leq
p/10^5$. Let $S_1, S_2, \ldots, S_m$ be subsets of $[n]$, such that $m
\leq 10^{-5}n/\tau$ and for every $j \in [m]$, $2\tau \leq \abs{S_j} \leq
n/2$. Let $A \subseteq [n]$ be a set of size $n-4p$ such that for every
$j \in [m]$, $\abs{A\cap S_j} \leq 0.01\abs{S_j}$ and $A$ is disjoint
from all sets $S_i$ of size at most $10^4\tau$. Let $B$ be an arbitrary
subset of $A$. Then, there exists a set $T \subseteq [n]\setminus A$
of size exactly $3p$, such that for every $j \in [m]$,  if $|S_j| > 2\tau$ then
for every integer $t$ with $-\tau < t \leq \tau $, it holds that $\abs{\left(T\cup B\right)
\cap S_j} \neq  \floor{\abs{S_j}/2} + t  \bmod p$. If $|S_j| = 2\tau$, the same holds
for $-\tau < t < \tau$.
\end{lemma}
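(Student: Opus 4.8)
The plan is to mimic the proof of Lemma \ref{lem:3p}, but replace the single random choice $T \sim \mu_{3/4}$ by a two-stage procedure that handles small and intermediate sets more delicately. Write $\widetilde{[n]} = [n] \setminus A$, which has size exactly $4p$, and $\widetilde S_j = S_j \setminus A$; by hypothesis $|\widetilde S_j| \ge 0.99 |S_j| \ge 1.98 \tau$, and for small sets ($|S_j| \le 10^4 \tau$) we even have $\widetilde S_j = S_j$. Note $\abs{(T \cup B) \cap S_j} = \abs{T \cap \widetilde S_j} + \abs{B \cap S_j}$, and $\abs{B \cap S_j}$ is a fixed number $b_j$ (determined by $A$ and $B$, not by $T$), with $0 \le b_j \le 0.01|S_j|$. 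So the target we must avoid, namely $\abs{T \cap \widetilde S_j} \equiv \floor{|S_j|/2} + t - b_j \pmod p$ for the relevant range of $t$, is again a window of at most $2\tau$ residues mod $p$ centered near $\floor{|\widetilde S_j|/2}$ (the shift by $b_j$ and the difference between $\floor{|S_j|/2}$ and $\floor{|\widetilde S_j|/2}$ is absorbed into a window of size $O(\tau) + O(\tau) = O(\tau)$, which is still $\le p/100$ after adjusting constants). The point of the special treatment of small sets (the degenerate ``$|S_j|=2\tau$'' case losing one value of $t$) is exactly the usual parity issue: when $|\widetilde S_j|$ is very close to $2\tau$ one endpoint of the window coincides with $0$ or with $|\widetilde S_j|$ itself, which the random $T$ cannot avoid; shrinking the window by one on that side fixes it, just as in the construction discussion in the introduction.

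The random procedure for $T$: first I would partition the intermediate-size sets so that each is handled ``atomically.'' Concretely, let $I = \{ j : 10^4 \tau \le |S_j| \le n^{0.31}\}$; by Lemma \ref{lem : reducing to 4p} we may assume $\abs{A \cap S_i} \le 100$ for $i \in I$, so $\widetilde S_i$ still has size $\ge 10^4 \tau - 100$. For each such $i$ I would reserve a small ``control block'' $C_i \subseteq \widetilde S_i$ of size, say, $10 \tau$, chosen greedily so the $C_i$ are pairwise disjoint (this is possible since $\sum_{i \in I} |C_i| \le m \cdot 10 \tau \le 10^{-4} n \le 2p$, comfortably inside $\widetilde{[n]}$). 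Then: put every element outside $\bigcup_{i \in I} C_i$ into $T$ independently with probability $3/4$; and for each control block $C_i$, choose uniformly at random a subset of $C_i$ whose size is congruent to a ``safe'' residue mod $p$ — since $|C_i| = 10\tau$ and the forbidden window has width $\le p/100$, there is always at least one residue in $[0, 10\tau]$ that, added to the (already-concentrated) contribution of $\widetilde S_i \setminus C_i$, lands outside the forbidden window with probability $1$. Actually the cleaner way: condition on the contribution of $\widetilde S_i \setminus C_i$ (which by Chernoff is within its mean $\pm |S_i|/20$ except with tiny probability), then show that as the size of the chosen subset of $C_i$ ranges over $\{0,1,\dots,10\tau\}$ the total intersection $\abs{T \cap \widetilde S_i}$ takes $10\tau+1 > 2\tau$ consecutive integer values, so at least one choice avoids the window, and a uniformly random subset of $C_i$ of that good size works. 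This makes the failure probability for each intermediate set essentially $0$ (or at most $\exp(-\Omega(|S_i|))$ from the conditioning event), rather than $\exp(-\Omega(\tau))$, which is the whole point.

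With the intermediate sets neutralized, the remaining cases are exactly as in Lemma \ref{lem:3p}: for large sets ($|S_j| \ge n^{0.31}$, so $|\widetilde S_j| \ge n^{0.31}/2$) the probability that $\abs{T \cap \widetilde S_j}$ lands in the width-$O(\tau)$ window mod $p$ is at most $\exp(-\Omega(|\widetilde S_j|/20)) \le \exp(-\Omega(n^{0.015}))$ — here I would repeat the argument of Claim \ref{cl:unbalances-set}, showing that hitting the window mod $p$ forces $\abs{\abs{T \cap \widetilde S_j} - \tfrac34 |\widetilde S_j|}$ to be at least $|\widetilde S_j|/5$ since $|\widetilde S_j| \le |S_j| \le n/2 < 2p$ and $\tau \le |\widetilde S_j|/100$, then apply Chernoff — and there are at most $m \le n$ such sets, so a union bound costs $n \exp(-\Omega(n^{0.015})) = o(1)$; for the small sets ($|S_j| \le 10^4\tau$, handled neither by control blocks nor known to be large) the same Chernoff estimate gives failure probability $\le \exp(-\Omega(|\widetilde S_j|/20)) \le \exp(-\Omega(\tau))$, and here I must be more careful — but since $\widetilde S_j = S_j$ for these and there are at most $m \le 10^{-5} n/\tau$ of them, the union bound is $\le 10^{-5}(n/\tau)\exp(-\Omega(\tau))$, which is $o(1)$ once $\tau \ge C\log(n/\tau)$; for the genuinely tiny regime where $\tau$ is a constant I would instead absorb all small sets into the control-block mechanism as well (their union $L$ has size $\le n/10$, but I only need disjoint blocks of size $O(\tau)$ inside each, and there are few of them, so this still fits). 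Finally, $\Pr[|T| = 3p] = \Theta(1/\sqrt p)$: the control-block contribution to $|T|$ is a fixed-size-mod-$p$ but otherwise flexible quantity, and the free part is a sum of independent bits with mean $\tfrac34(4p - \sum|C_i|)$; a local central limit / Stirling estimate shows $|T|$ hits any particular value in its range with probability $\Theta(1/\sqrt p)$, and $3p$ is well inside the range. Since $\Theta(1/\sqrt p) - o(1) > 0$, there is a choice of $T$ that is simultaneously of size $3p$ and good for every $S_j$, which is what we wanted. The main obstacle is the small-set union bound — the honest fix is the control-block device, and getting its bookkeeping right (disjointness of blocks, the conditioning on the complement's contribution, and the parity bookkeeping for the degenerate $|S_j|=2\tau$ case) is where the real work lies.
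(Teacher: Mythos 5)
Your proposal takes a genuinely different route from the paper. The paper's randomized construction proceeds in four stages: a deterministic greedy stage $T_1$ that includes \emph{all} of every set $S_j$ with $\abs{S_j}\le 6000\tau$ and $6000\tau$ arbitrary elements from each remaining $\tilde S_j$; a random stage $T_2$ that includes each remaining element independently with probability $0.65$; a correction stage $T_3$ that, for any $S_j$ with too small an intersection with $T_1\cup T_2\cup B$, throws in \emph{all} remaining elements of $\tilde S_j$; and a padding stage $T_4$ that adds arbitrary elements to bring $\abs{T}$ up to exactly $3p$. The key point is that the paper never tries to control $\abs{T\cap S_j}\bmod p$ per set: it shows $\abs{(T\cup B)\cap S_j}-\left(\floor{\abs{S_j}/2}+t\right)$ is always a positive integer strictly below $p$, essentially because $T$ contains a constant fraction (at least $0.52$) of every $S_j$ and is not much larger than $0.75\abs{S_j}+0.12n$. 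Your ``control block'' mechanism aims instead for per-set modular control, which is a more delicate and essentially harder task.

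There are two concrete gaps. First, the smallest sets are not handled. Your control blocks are of size $10\tau$, which does not fit inside a set of size near $2\tau$; and for $\abs{S_j}=2\tau$ the forbidden window $\left[1,\,2\tau-1\right]$ leaves only $\abs{T\cap S_j}\in\set{0,2\tau}$, so no combination of a partial control block plus a probability-$3/4$ coin flip on the rest of $S_j$ can avoid it: the randomized remainder centers $\abs{T\cap S_j}$ around $\tfrac{3}{4}\bigl(\abs{S_j}-\abs{C_j}\bigr)+\Theta(\abs{C_j})$, squarely inside the window. Your fallback (``disjoint blocks of size $O(\tau)$ inside each'') does not solve this. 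The fix the paper uses is to deterministically include \emph{all} of every very small $S_j$ in $T_1$, so $\abs{T\cap S_j}=\abs{S_j}$; you never quite say this, and it is the correct move here. Second, your argument that $\Pr[\abs{T}=3p]=\Theta(1/\sqrt p)$ is circular: the control-block subset sizes are chosen only after conditioning on the contribution of $\widetilde S_i\setminus C_i$, so the control-block total is a function of the same coins driving the free part, and $\abs{T}$ is not a clean convolution of an independent local-CLT piece with a fixed offset. The paper sidesteps this entirely with the $T_4$ padding step: it shows $\abs{T_1}+\abs{T_2}+\abs{T_3}\le 0.73n<3p$ with constant probability, then pads with arbitrary elements and verifies (using $\abs{T_4}\le 0.05n$) that the padding cannot push any intersection into the forbidden region. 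Your proposal has no analogue of this step, and the local-CLT route, as written, does not close.
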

\begin{proof}
Denote $\widetilde{[n]} = [n] \setminus A$, and $\tilde{S}_i = S_i
\setminus A$ for all $i \in [m]$. We note that if $\abs{S_i} \leq
10^4 \tau$, then $\tilde{S}_i = S_i$. We construct the set $T$ by a
randomized algorithm, which consists of several steps. In the first
step, we greedily select a small number of elements from each set
$\tilde{S}_i$. The purpose of this step is to guarantee that $|T \cap
S_i|$ is sufficiently far from $0$, for every $i$. Next, we pick each
of the remaining elements of $\widetilde{[n]}$ to $T$ with probability
$0.65$. This constant is chosen so that with high probability (assuming
$|\tilde{S}_i|$ is sufficiently large), the intersection $|T \cap
\tilde{S}_i|$ is non-zero modulo $p$ (and since $|\tilde{S}_i|$ and
$|S_i|$ are very close, the same holds for $|T \cap S_i|$), and also
with high probability the number of elements we have picked so far does
not exceed $3p$.

The next step is again a deterministic, greedy step, which adds to $T$
sufficiently many elements from each ``bad'' set $S_i$. Those are the sets
of which too few elements were picked before. By standard concentration
bounds, we do not expect to have many such large sets, and thus again
we can control the number of elements added in this step.

Finally, assuming the number of elements that were picked so far is
less than $3p$ (which happens with high probability), we add arbitrary
elements to our set so that it will be of size exactly $3p$. Of course,
we also have to argue that this step preserves the previous intersection
requirements. This follows from the fact that we do not expect to add
many elements in this step.

We now provide the more formal details. $T$ is constructed using the
following randomized algorithm.

\begin{itemize}
\item 
For every $j\in [m]$ such that $\abs{S_j} \leq 6000\tau$, we add
all elements of $S_j$ to $T_1$. We then take $6000\tau$ arbitrary
elements from the remaining  sets among $\tilde{S}_1, \tilde{S}_2, \ldots,
\tilde{S}_m$. Since $m \leq 10^{-5}n/\tau$, the size of $T_1$ is at most
$0.06 n$. Without loss of generality, we take $T_1$ to be of size equal
to $0.06 n$.
\item 
Let $T_2$ be the set obtained by picking every element in
$\widetilde{[n]}\setminus T_1$  independently with probability $0.65$.
\item 
For every $j \in [m]$, such that $\abs{S_j \cap \left(T_1\cup T_2 \cup
B\right)} \leq 0.52\abs{S_j}$, include all elements in $\tilde{S}_j
\setminus \left(T_1\cup T_2\right)$ in the set $T_3$.
\item 
If $\abs{T_1} + \abs{T_2} + \abs{T_3} > 3p$, abort. Else, we add
$3p-\abs{T_1}- \abs{T_2}  -\abs{T_3}$  arbitrary elements from
$\widetilde{[n]}\setminus \left(  T_1 \cup T_2 \cup T_3\right)$ into
the set $T_4$.
\item 
Let $T = T_1 \cup T_2 \cup T_3 \cup T_4$. 
\end{itemize}
We will now argue that with a high probability, the algorithm above
outputs a set $T$ which satisfies the desired properties. To this end,
we need the following claims, whose proofs we defer to the end of this
section. The probabilities in these claims are all taken over the choice
of $T_2$, which is the only randomized step in the algorithm.
\begin{claim}\label{clm : general n, T2}
With probability at least $1-n^{-5}$, all of the following events happen.
\begin{itemize}
\item $0.64 n \leq \abs{T_2} \leq 0.66 n \, .$
\item $\forall j \in [m], \text{ such that } \abs{S_j}\geq  1000\log n$,  
${\abs{\tilde{S}_j \cap T_2} \in  [0.52 \abs{S_j}}, 0.74 \abs{S_j}] $.
\item For every $j \in [m]$, 
if $\abs{S_j} \leq6000\tau$, then $S_j \subseteq T$.
\item For every $j \in [m]$, if $\abs{S_j} \geq6000\tau$, 
then $\abs{S_j\cap T} \geq \max\{{6000\tau, 0.52\abs{S_j}}\}$. 
\end{itemize}
\end{claim}
\begin{claim}[$T_3$ is typically small]\label{clm : general n, T3}
\[
\Pr[\abs{T_3} \leq 0.01 n] \geq 0.99 \, .
\]
\end{claim}
\begin{claim}[$T_4$ is typically small]\label{clm : general n, T4} 
\[
\Pr[\abs{T_4} \leq 0.05 n] \geq 1-n^{-5} \, .
\]
\end{claim}
\paragraph*{Probability of aborting and size of $T$. } The algorithm
aborts only in the case that $\abs{T_1} + \abs{T_2} + \abs{T_3} >
3p$. We know that with probability $1$, $\abs{T_1} \leq 0.06 n$. It
follows from~\autoref{clm : general n, T2} that with probability at least
$1-n^{-5}$, $\abs{T_2} \leq 0.66 n$ and from~\autoref{clm : general n,
T3} that with probability at least $0.99$, $\abs{T_3} \leq 0.01 n$. Thus,
with probability at least $0.98$, $\abs{T_1} + \abs{T_2} + \abs{T_3}
\leq 0.73 n$. Since $4p \leq n \leq 4p + O(p^{0.6})$,  with probability
at least $0.98$, $\abs{T_1} + \abs{T_2} + \abs{T_3} \leq 3p$. Also,
whenever the algorithm does not abort, the set $T_4$ is picked so that
$T$ output by the algorithm satisfies $\abs{T} = 3p$.
\paragraph*{Intersection properties of $T$. }
For the rest of this argument, we assume that $T_1, T_2, T_3, T_4$ satisfy
the properties in~\autoref{clm : general n, T2},~\autoref{clm : general n,
T3} and~\autoref{clm : general n, T4}.  We now argue that for every $j
\in [m]$ it holds that $\abs{\left(T\cup B\right)
\cap S_j} \neq  \floor{\abs{S_j}/2} + t  \bmod p$ for every integer $t$ in the
range specified in the statement of the Lemma.

We consider some cases based on the size of $S_j$. 
\begin{itemize}
\item {\bf Very small sets : $2\tau \leq \abs{S_j} \leq 6000\tau$. }From~\autoref{clm : general n, T2}, all such sets are completely contained in $T$. Thus,
\[
 \abs{\left(T\cup B\right) \cap S_j} -  (\floor{\abs{S_j}/2} + t) = \lceil{\abs{S_j}/2}\rceil - t \, .
\]
Since $1\leq \tau \leq p/10^5$, this remains non-zero modulo $p$ for every $-\tau < t \le \tau $ if $|S_j| > 2\tau$, and for every $-\tau < t < \tau$ if $|S_j|=2\tau$.
\item {\bf Small sets : $6000\tau < \abs{S_j} \leq 10^4 \tau$. } From~\autoref{clm : general n, T2}, we know that for every $j \in [m]$, $\abs{S_j \cap T} \geq 6000\tau$. We get that for every $-\tau < t \le \tau $, 
\[
 1 \leq \abs{\left(T\cup B\right) \cap S_j} -  (\floor{\abs{S_j}/2} + t) \leq (10^4 + 1)\tau
\]
Since $\tau \leq  p/10^5$, $\abs{\left(T\cup B\right) \cap S_j} -  (\floor{\abs{S_j}/2} + t) $ is non-zero modulo $p$ for each $-\tau < t \le \tau $.

\item {\bf Sets of intermediate size : $10^4\tau < \abs{S_j} \leq  1000\log n$. }Since by \autoref{clm : general n, T2}, $\abs{S_j \cap T} \geq 0.52\abs{S_j}$, we get that for every $-\tau < t \le \tau $,
\[
198 \tau \leq \abs{\left(T\cup B\right) \cap S_j} 
-  (\floor{\abs{S_j}/2} + t) \leq  1000 \log n.
\]
Thus, $\abs{\left(T\cup B\right) \cap S_j} -  (\floor{\abs{S_j}/2} + t)$ remains non-zero modulo $p$.  

\item {\bf Large sets : $ \max\{1000\log n, 10^4\tau\} \leq \abs{S_j} \leq n/2   $. }For such large sets, from~\autoref{clm : general n, T2}, \autoref{clm : general n, T3} and~\autoref{clm : general n, T4}, we know that
\begin{align*}
0.52 \abs{S_j} \leq \abs{\left(T\cup B\right) \cap S_j}  & = \sum_{k=1}^4 \abs{T_k \cap S_j} + |B \cap S_j| \\
&  \leq 0.74\abs{S_j} + \abs{T_1} + \abs{T_3} +\abs{T_4} + 0.01 \abs{S_j} \le 0.75\abs{S_j} + 0.12n,
\end{align*}
where we have also used the assumption that $|A \cap S_j| \le 0.01\abs{S_j}$, which in particular implies this upper bound for $|B \cap S_j|$, as $B \subseteq A$.
Thus, as $\abs{t} \le \tau \le 10^{-4}\abs{S_j}$,
\[
0.02\abs{S_j} - \tau \leq \abs{\left(T\cup B\right) \cap S_j} 
-  (\floor{\abs{S_j}/2} + t) \leq 0.251 \abs{S_j} + 0.12 n \, .
\]
Using $\abs{S_j} \leq n/2$, $4p + n^{0.6} \geq n$ and $\abs{S_j} \geq 10^4\tau$  we get that
\[
0 < \abs{\left(T\cup B\right) \cap S_j} -  (\floor{\abs{S_j}/2} + t) \leq 0.99 p \, .
\]
So, this quantity is also non-zero modulo $p$.
\end{itemize}
These three cases complete the proof of the lemma.
\end{proof}

We can now prove \autoref{thm:unbalancing-general}.

\begin{proof}[Proof of \autoref{thm:unbalancing-general}]
We follow the outline discussed at the beginning of this section.
Without loss of generality, we can assume that each set $S_i$ has size
at most $n/2$, else we work with the complement of $S_i$. Suppose, for
the sake of contradiction, that $m \leq \frac{1}{10^5}\cdot n/\tau$. Let
$p$ be the largest prime such that $4p \le n$. For large enough $n$,
there is such a prime $p$ such that $n-4p \leq n^{0.6}$ (see~\cite{BHP01}).

Let $A \subseteq [n]$ be the set of size $n-4p \le n^{0.6}$ given by
\autoref{lem : reducing to 4p}. Let $B$ be an arbitrary subset of $A$
of size $|A|/2$.

To every element $k \in [n]\setminus A$, we associate a formal variable $x_k$, and let $\vx = \{ x_k : k \in [n]\setminus A\}$ (note that $|\vx| = 4p$).
For each $j \in [m]$ such that $|S_j| > 2\tau$, define the following polynomials over $\F_p$:
\[
B_j (\vx) = \prod_{t=-\tau + 1}^{\tau} \left(\sum_{k \in S_j\setminus A} 
x_{k} + \abs{S_j \cap B} - \floor{|S_j|/2} - t\right)\, .
\]
If $|S_j|=2\tau$, define a similar polynomial $B_j$ where $t$ ranges from $\tau + 1$ to $\tau - 1$.
Further, let
\[
f(\vx) = \prod_{j=1}^m B_j (\vx)\, ,
\]
be a polynomial over $\F_p$. From the choice of the 
set $A$ (see \autoref{lem
: reducing to 4p}), we know that for every $j \in [m]$, $B_j$ is a
non-zero polynomial of degree smaller than $2 \tau$.

There is a natural bijection between $[n] \setminus A$ and $[4p]$ (say,
by ordering the elements of $[n]\setminus A$ by increasing order). Thus,
we can naturally associate subsets $Y'$ of $[4p]$ with subsets of
$[n] \setminus A$, and indicator vector $\ind_{Y'}$ with elements of
$\set{0,1}^{\vx}$.

We would like first to argue that $f$ vanishes over all vectors of the
form $\ind_{Y'}$ for $Y' \in \binom{[4p]}{2p}$. Indeed, let $Y'$ be such
a set, and extend it to a balanced partition of $[n]$ by considering $Y =
Y' \cup B$.

By the assumption, there is an index $j$ such that
$|Y \cap S_j| - \floor{S_j}/2 \in \set{-\tau+1,
\ldots, \tau}$, and since $|Y \cap S_j| = |Y' \cap S_j| + |B \cap
S_j|$, it follows that $B_j(\ind_{Y'}) = 0$ and thus $f(\ind_{Y'}) =
0$, as required.

Next, we want to show $f$ does not vanish over a vector $\ind_T$ for
some $T \in \binom{[4p]}{3p}$.

Indeed, \autoref{lem:3p general} precisely guarantees the existence
of such a set $T \subseteq [n]\setminus A$, of size equal to $3p$, so
that for all $j \in [m]$, $B_j(\ind_T) \not\equiv 0 \bmod p$, and thus
$f(\ind_T) \neq 0$..

By~\autoref{lem:hegedus}, $\deg(f) \ge p$, and by construction, $\deg(f)
\le  2 \tau \cdot m$, contradicting the assumed lower bound on $m$.
\end{proof}

\subsubsection*{Proofs of~\autoref{clm : general n, T2},~\autoref{clm : general n, T3} and~\autoref{clm : general n, T4}}
We now prove the claims needed in the proof of~\autoref{lem:3p general}. The arguments are based on standard concentration bounds.

\begin{proof}[Proof of~\autoref{clm : general n, T2}]
The expected size of the set $T_2$ is equal to  $0.65 \abs{[n]\setminus A}$. Using the fact that $\abs{A} \leq n^{0.6}$ and by~\autoref{lem : hoeffding 1},  we get that with probability at least $1-\exp({-\Omega(n)})$, 
\[
0.64 n \leq \abs{T_2} \leq 0.66 n \, .
\]
For the second item, observe that for any fixed  $j \in [m]$, by~\autoref{lem : hoeffding 1}, we have 
\[
\Pr\left[\abs{\abs{\tilde{S}_j \cap T_2} - 0.65 \abs{\tilde{S_j}}} \geq 0.09 \abs{\tilde{S_j}}\right] \leq 2\exp\left(-0.0162\abs{\tilde{S_j}}\right) \, .
\]
We know that $\tilde{S}_j \subseteq {S}_j$ and $\abs{\tilde{S}_j} \geq 0.99\abs{S}$. 
Thus, 
\[
\Pr\left[{\abs{\tilde{S}_j \cap T_2} \in  [0.52 \abs{S_j}}, 0.74 \abs{S_j}]\right] \geq 1-2\exp\left(-0.015\abs{S_j}\right)\, .
\] 
For sets $S_j$ of size at least $1000\log n$, this probability is high enough to take a union bound over all sets. So, we have the following. 
\[
\Pr\Big[\forall j \in [m] \text{ such that } \abs{S_j}\geq 1000\log n, {\abs{\tilde{S}_j \cap T_2} \in  [0.52 \abs{S_j}}, 0.74 \abs{S_j}]\Big] \geq 1- n^{-8}\, .
\] 

For the third and fourth items,  observe that by construction, the
set $T_1$ is a superset of all sets of size at most $6000\tau$ and
intersects every $S_j$ on at least $6000\tau$ elements. Moreover, since
$\abs{\tilde{S}_j} \geq 0.99\abs{S_j}$,  it follows that if $\abs{S_j \cap
(T_1\cup T_2)} \leq 0.52\abs{S_j}$, then sufficiently many elements will
be included in the set $T_3$ so that $\abs{S_j \cap (T_1\cup T_2 \cup
T_3)} \geq 0.52\abs{S_j}$.  \end{proof}

\begin{proof}[Proof of~\autoref{clm : general n, T3}]
For $j \in [m]$, we say that the set $S$ is violated if $\abs{S_j \cap
(T_1 \cup T_2 \cup B)} \leq 0.52\abs{S_j}$. Since $T_1$ intersects every
set $S_j$ on at least $6000\tau$ elements, we know that any violated
set $S_j$ must satisfy $\abs{S_j} \geq 10^4\tau$.
So, from the proof of~\autoref{clm : general n, T2}, we get that the
expected size of the set $T_3$ is given by
\[
\E[\abs{T_3}] \leq \sum_{j \in [m], \abs{S_j} \geq 10^4\tau} \frac{2\abs{S_i}}{\exp(0.015\abs{S_i})} \, .
\]
From~\autoref{claim : decreasing fun} below, we know that this expectation can be upper bounded by
\[
\E[\abs{T_3}] \leq m \cdot  \frac{2\cdot \abs{10^4\tau}}{\exp(0.015 \times 10^4\tau)} \, .
\]
Since $\tau$ is at least $1$ and $m \leq n/\tau$, we get
\[
\E[\abs{T_3}] \leq 10^{-10} n \, .
\]
By Markov's inequality, we get the claim.
\end{proof}

\begin{proof}[Proof of~\autoref{clm : general n, T4}]
This immediately follows from~\autoref{clm : general n, T2}. Observe that
\[
\abs{T_4} \leq 3p - \abs{T_1} - \abs{T_2} \, .
\]
$\abs{T_2}  \geq 0.64 n$ with probability at least $1-n^{-5}$, and $\abs{T_1} \geq 0.06 n$ with probability $1$. Thus, with probability at least $1-n^{-5}$, $\abs{T_4} \leq 0.05 n$.
\end{proof}

\begin{claim}\label{claim : decreasing fun}
Let $c$ be any positive constant. Then, for any $y \geq x \geq 1/c$, it holds that $x\cdot \me^{-cx} \geq y\cdot \me^{-cy}$.
\end{claim}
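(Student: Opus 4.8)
The plan is to reduce the claim to the elementary one-variable fact that the function $g(t) = t\,\me^{-ct}$ is non-increasing on the ray $[1/c,\infty)$; once this is established, applying it to the two points $1/c \le x \le y$ of that ray gives $g(x) \ge g(y)$, which is exactly the asserted inequality. First I would differentiate: $g'(t) = \me^{-ct} - ct\,\me^{-ct} = \me^{-ct}(1-ct)$. Since $\me^{-ct} > 0$ for every $t$, the sign of $g'(t)$ is that of $1-ct$, which is $\le 0$ precisely when $t \ge 1/c$. Hence $g'(t) \le 0$ on all of $[1/c,\infty)$, so $g$ is non-increasing there, and the claim follows.

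If one prefers an argument that avoids derivatives, here is an alternative route I would take. Writing $y = x+\delta$ with $\delta \ge 0$, the desired inequality $x\,\me^{-cx} \ge (x+\delta)\,\me^{-c(x+\delta)}$ is equivalent, after dividing through by the positive quantity $x\,\me^{-c(x+\delta)}$, to $\me^{c\delta} \ge 1 + \delta/x$. Because $x \ge 1/c$ we have $\delta/x \le c\delta$, so it suffices to verify $\me^{c\delta} \ge 1 + c\delta$, which is the standard inequality $\me^u \ge 1+u$ valid for every real $u$.

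I do not expect any genuine obstacle here; this is a routine estimate. The only minor point to keep in mind is the boundary case $x = 1/c$: there $g'$ vanishes rather than being strictly negative, but $g$ is still non-increasing on the closed interval $[1/c,\infty)$, so the non-strict conclusion $x\,\me^{-cx} \ge y\,\me^{-cy}$ is exactly what one obtains, which is the form in which the claim is invoked in the proof of~\autoref{clm : general n, T3}.
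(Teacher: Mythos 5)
Your first argument (differentiating $g(t)=t\,\me^{-ct}$ and noting the sign of $g'$ flips at $t=1/c$) is exactly the paper's proof. The derivative-free alternative via $\me^{u}\ge 1+u$ is a nice bonus, but the main route is the same.
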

\begin{proof}
Let $f(x)= x\cdot \me^{-cx}$.  The first derivative of $f(x)$ is
\[
f'(x) = \me^{-cx} -cx\me^{-cx} \, .
\] 
It is easy to see that this is positive for $0<x\leq 1/c$ and
negative for $x >1/c$. Therefore, $f(x)$, which vanishes at $0$,
increases as $x$ increases from $0$ to $1/c$, achieves
its maximum at $x = 1/c$ and decreases thereafter. This implies the claim.
\end{proof}

\section{Syntactically Multilinear Arithmetic Circuits}\label{sec:proof of main theorem}
In this section, for the sake of completeness, we review the arguments of Raz, Shpilka and Yehudayoff \cite{RSY08}, and show how \autoref{thm:unbalancing} implies a lower bound of $\Omega(n^2/\log^2 n)$. We mostly refer for \cite{RSY08} for the proofs.

Specifically, we will show the following.
\begin{theorem}
\label{thm:lower-bound}
Let $n$ be an even integer, and $X=\set{x_1, \ldots, x_n}$. Let $f(X) \in \F[X]$ be a multilinear polynomial such that for every balanced partition $X=Y \sqcup Z$, $\rank_{Y,Z}(f) = 2^{n/2}$. Let $\Psi$ be a syntactically multilinear circuit computing $f$. Then $|\Psi| = \Omega(n^2/\log^2 n)$.
\end{theorem}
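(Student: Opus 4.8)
The plan is to follow the argument of Raz, Shpilka and Yehudayoff~\cite{RSY08}, substituting their set‑unbalancing estimate by \autoref{thm:unbalancing-general} (or by \autoref{thm:unbalancing} when $n$ has the form $4p$, after harmlessly widening the constant $6$ in the definitions below). First I would pass to the derivative circuit: by the syntactically multilinear version of the Baur--Strassen theorem (\autoref{thm:BS-multilinear}, from~\cite{RSY08}, building on~\cite{BS83}), from $\Psi$ one obtains a syntactically multilinear circuit $\Psi'$ with $|\Psi'| \le 5|\Psi|$ and designated output gates $o_1, \ldots, o_n$ such that $o_i$ computes $f_i := \partial f/\partial x_i$ and no leaf labelled $x_i$ reaches $o_i$ along a directed path. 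Writing $\Psi_i'$ for the subcircuit rooted at $o_i$, every gate $v$ of $\Psi_i'$ satisfies $x_i \notin X_v$. As in~\cite{RSY08}, let $\cU_i$ be the \emph{upper leveled} gates of $\Psi_i'$ — those $v$ with $|X_v| \ge n - 6\log n$ having a child $w$ with $6\log n < |X_w| < n - 6\log n$ — and let $\cL_i$ be the \emph{lower leveled} gates — those $u$ with $6\log n < |X_u| < n - 6\log n$ having a parent in $\cU_i$. Since in‑degrees are at most two, $|\cU_i| \ge |\cL_i|/2$.

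The heart of the proof is to show $|\cL_i| = \Omega(n/\log n)$ for every $i$. Fix $i$, let $h_1, \ldots, h_\ell$ be the polynomials computed at the gates of $\cL_i$ (so $\ell = |\cL_i|$), and let $S_j := X_{u_j}$, so that $6\log n < |S_j| < n - 6\log n$. Put $\tau := \lfloor 3\log n \rfloor$, so that $2\tau \le |S_j| \le n - 2\tau$ and $\tau \le n/10^6$ for $n$ large. Suppose, for contradiction, that $\ell < \tfrac{1}{10^5}\, n/\tau$. Then the contrapositive of \autoref{thm:unbalancing-general} yields a balanced partition $X = Y \sqcup Z$ with $d_Y(S_j) \ge \tau$ for all $j \in [\ell]$. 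Now apply \autoref{lem:lower-level-decomposition} to write $f_i = \sum_{j=1}^{\ell} g_j h_j + g$, where $g_j$ is variable‑disjoint from $h_j$ and $\deg g = O(\log n)$. Using \autoref{prop:pdmatrix} — namely subadditivity (item~\ref{item:rank-additive}), multiplicativity for variable‑disjoint products (item~\ref{item:rank-multiplicative}), the trivial upper bound (item~\ref{item:trivial-upper-bound}), and the low‑degree bound (item~\ref{item:low-degree-rank}) — one gets
\[
\rank_{Y,Z}(g_j h_j) \le 2^{(n - |S_j|)/2} \cdot 2^{|S_j|/2 - d_Y(S_j)} \le 2^{n/2 - \tau}, \qquad \rank_{Y,Z}(g) \le 2^{O(\log^2 n)},
\]
hence $\rank_{Y,Z}(f_i) \le \ell \cdot 2^{n/2 - \tau} + 2^{O(\log^2 n)}$. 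On the other hand, the hypothesis $\rank_{Y,Z}(f) = 2^{n/2}$ and item~\ref{item:full-rank-derivative} of \autoref{prop:pdmatrix} give $\rank_{Y,Z}(f_i) = 2^{n/2 - 1}$. For $n$ large the term $2^{O(\log^2 n)}$ is at most $2^{n/2-2}$, so $\ell \ge 2^{\tau - 2}$, a quantity which for large $n$ far exceeds $\tfrac{1}{10^5}\, n/\tau$ — a contradiction. Therefore $|\cL_i| \ge \tfrac{1}{10^5}\, n/\tau = \Omega(n/\log n)$, and so $|\cU_i| \ge \tfrac12 |\cL_i| = \Omega(n/\log n)$.

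It remains to double count. If $v \in \cU_i$ then $v$ is a gate of $\Psi_i'$, so $x_i \notin X_v$; since $|X_v| \ge n - 6\log n$, this forces $i \in [n] \setminus X_v$, a set of size at most $6\log n$. Thus each gate of $\Psi'$ lies in at most $6\log n$ of the sets $\cU_1, \ldots, \cU_n$, whence $\sum_{i=1}^{n} |\cU_i| \le 6\log n \cdot |\Psi'|$. Combining this with $|\cU_i| = \Omega(n/\log n)$ for every $i$, we obtain $|\Psi'| \ge \tfrac{1}{6\log n}\sum_{i=1}^n |\cU_i| = \Omega(n^2/\log^2 n)$, and hence $|\Psi| = \Omega(n^2/\log^2 n)$ since $|\Psi'| \le 5|\Psi|$.

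The only genuinely new ingredient here is the invocation of \autoref{thm:unbalancing-general}; the remaining steps are a faithful reorganization of~\cite{RSY08}, for which I would mostly cite~\cite{RSY08} (in particular for \autoref{thm:BS-multilinear} and \autoref{lem:lower-level-decomposition}). The points that need care are: checking that the size window $6\log n < |S_j| < n - 6\log n$ for lower‑leveled gates is compatible with the hypotheses of the unbalancing theorem being used (for \autoref{thm:unbalancing} this requires enlarging the constant $6$, which affects \autoref{lem:lower-level-decomposition} only through the $O(\log n)$ bound on $\deg g$); and verifying that the estimate $\rank_{Y,Z}(g_j h_j) \le 2^{n/2 - \tau}$ holds uniformly over the entire window, including the extreme regimes where $|S_j| \approx 6\log n$ or $|S_j| \approx n - 6\log n$. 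I do not expect a serious obstacle here, since the hard combinatorial work has been isolated into \autoref{thm:unbalancing-general}.
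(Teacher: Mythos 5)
Your proposal is correct and follows essentially the same approach as the paper: pass to the Baur–Strassen derivative circuit, isolate the upper/lower‑leveled gates, apply the unbalancing theorem to the supports of the lower‑leveled gates, and double‑count using the constraint that each high‑support gate reaches few outputs. The only cosmetic difference is that you keep the original RSY threshold $6\log n$ while the paper uses $100\tau = 300\log n$ (so that the size window of lower‑leveled gates satisfies the hypotheses of the unbalancing theorem with room to spare), and you inline the rank‑deficiency argument rather than stating it as a separate lemma (\autoref{lem:small-lower-leveled-not-full-rank}).
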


The first step in proof of \autoref{thm:lower-bound} is to show
 that if $f$ is computed by a syntactically mutilinear circuit of size
 $s$, then there exists a syntactically multilinear circuit of size
 $O(s)$ that computes all the first-order partial derivatives of $f$,
 with the additional important property that for each $i$, the variable
 $x_i$ does not appear in the subcircuit rooted at the output gate which
 computes $\partial f / \partial x_i$.

\begin{theorem}[\cite{RSY08}, Theorem 3.1] \label{thm:BS-multilinear}
Let $\Psi$ be a syntactically multilinear circuit over a field $\F$
and the set of variables $X=\set{x_1, \ldots, x_n}$. Then, there
exists a syntactically multilinear circuit $\Psi'$, over $\F$ and
$X$, such that: \begin{enumerate} \item $\Psi'$ computes all $n$
first-order partial derivatives $\partial f / \partial x_i$, $i \in
[n]$.  \item \label{item:BS-size} $|\Psi'| \le 5|\Psi|$.  \item $\Psi'$
is syntactically multilinear.  \item \label{item:key-property}For every
$i \in [n]$, $x_i \not \in X_{v_i}$, where $v_i$ is the gate in $\Psi'$
computing $\partial f / \partial x_i$.  \end{enumerate} In particular,
if $v$ is a gate in $\Psi'$, then it is connected by a directed path to
at most $n-|X_v|$ output gates.  \end{theorem}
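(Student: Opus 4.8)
The plan is to run the reverse-mode (``backward'') differentiation construction underlying the classical Baur--Strassen theorem, and then to check that the extra hypothesis of syntactic multilinearity is what makes the output circuit syntactically multilinear. First I would discard from $\Psi$ every gate not lying on a directed path to the output gate $\rho$ computing $f$; this only shrinks the circuit and leaves $f$ (and syntactic multilinearity) unchanged, so we may assume $\rho$ is the unique sink and every gate has a path to it. For each gate $v$ I introduce a new gate $\partial_v$, meant to compute the formal partial derivative of $f$ with respect to the value at $v$, defined by the usual reverse recursion along a topological order of $\Psi$: $\partial_\rho = 1$, and for any other gate $v$ with parents $w_1, \dots, w_k$ (the gates into which $v$ feeds), $\partial_v = \sum_{j=1}^k \partial_{w_j} \cdot c_j$, where $c_j$ is the appropriate constant if $w_j$ is an addition gate, and $c_j = u_j$ (the \emph{other} child of $w_j$) if $w_j = u_j \times v$ is a multiplication gate. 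The gate $v_i$ computing $\partial f/\partial x_i$ is then $\sum_{\ell} \partial_\ell$ over leaves $\ell$ labelled $x_i$, and $\Psi'$ is $\Psi$ together with all the new gates. That $\Psi'$ is acyclic and computes the claimed polynomials is the routine part of Baur--Strassen (the original gates form an initial segment, the $\partial_v$'s a terminal segment in reverse topological order, with cross-edges only from original gates into new multiplication gates), and a direct count of the new gates --- $O(1)$ per edge of $\Psi$ --- yields item~\ref{item:BS-size} with the constant $5$ exactly as in \cite{RSY08}.

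The heart of the proof, and the only place where syntactic multilinearity of $\Psi$ is used, is the invariant: \emph{for every gate $v$, the new gate $\partial_v$ satisfies $X_{\partial_v} \subseteq X \setminus X_v$}. I would prove this by reverse induction along $\Psi$. The base case $v = \rho$ is immediate since $\partial_\rho$ is a constant. For the inductive step, fix $v$ with parents $w_1, \dots, w_k$; since $v$ feeds into each $w_j$, we have $X_v \subseteq X_{w_j}$, and by the induction hypothesis $X_{\partial_{w_j}} \subseteq X \setminus X_{w_j} \subseteq X \setminus X_v$. If $w_j$ is an addition gate, $c_j$ contributes no variables, so $X_{\partial_{w_j} \cdot c_j} \subseteq X \setminus X_v$. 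If $w_j = u_j \times v$ is a multiplication gate, then $c_j = u_j$, and syntactic multilinearity of $\Psi$ at $w_j$ gives $X_{u_j} \cap X_v = \emptyset$; together with the bound on $X_{\partial_{w_j}}$ this again gives $X_{\partial_{w_j} \cdot u_j} \subseteq X \setminus X_v$. Summing over $j$ yields $X_{\partial_v} \subseteq X \setminus X_v$. (The degenerate gates $w_j = v + v$ and $w_j = v \times v$ are harmless: the former has constant derivative, and the latter forces $X_v = \emptyset$ by multilinearity.)

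Given the invariant, the remaining items fall out. For syntactic multilinearity of $\Psi'$: the product gates of $\Psi'$ not already present in $\Psi$ are exactly those of the form $\partial_{w_j} \times u_j$ with $w_j = u_j \times v$; here $X_{u_j} \subseteq X_{w_j}$ while $X_{\partial_{w_j}} \subseteq X \setminus X_{w_j}$ by the invariant, so their children have disjoint variable sets, and the new addition gates carry no constraint. For item~\ref{item:key-property}: a leaf $\ell$ labelled $x_i$ has $X_\ell = \{x_i\}$, so the invariant gives $x_i \notin X_{\partial_\ell}$, hence $x_i \notin X_{v_i} = \bigcup_{\ell} X_{\partial_\ell}$. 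Finally, for the ``in particular'' clause: if a gate $v$ of $\Psi'$ has a directed path to the output gate $v_i$, then $X_v \subseteq X_{v_i} \subseteq X \setminus \{x_i\}$, so $v$ is connected only to those $v_i$ with $x_i \notin X_v$, of which there are exactly $n - |X_v|$.

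I expect the main obstacle to be simply isolating the correct invariant: once one writes down ``$\partial_v$ depends only on variables outside $X_v$'' and observes that this is precisely the property needed to keep every newly created product gate multilinear, the rest is bookkeeping inherited from the classical construction. A secondary point needing a little care --- but not causing real difficulty --- is handling variables that occur at several leaves and the degenerate gates $v+v$ and $v \times v$ noted above.
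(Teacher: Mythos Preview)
Your proposal is correct and matches the approach the paper points to: the paper does not give its own proof of this theorem but refers to \cite{RSY08}, noting that the argument ``mostly follows the classical proof of Baur and Strassen \cite{BS83} \ldots\ with additional care in order to guarantee the last two properties.'' That additional care is precisely your invariant $X_{\partial_v} \subseteq X \setminus X_v$, proved by reverse induction using syntactic multilinearity at the product gates, from which items~\ref{item:key-property} and the syntactic multilinearity of $\Psi'$ follow exactly as you describe.
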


The proof of \autoref{thm:BS-multilinear} appears in \cite{RSY08}, and mostly follows the classical proof of Baur and Strassen \cite{BS83} of the analogous result for general circuits, with additional care in order to guarantee the last two properties.

Next we define two types of gates in a syntactically multilinear arithmetic circuits.

\begin{definition}
\label{def:upper-lower-levels}
Let $\Phi$ be a syntactically multilinear arithmetic circuit. Define $\cL(\Phi, k)$, the set of lower-leveled gates in $\Phi$, by
\[
\cL(\Phi,k) = \setdef{u}{\text{$u$ is a gate in $\Phi$, $k < |X_u| < n-k$, and $u$ has a parent $v$ with $|X_{v}| \ge n-k$}}.
\]
Define $\cU(\Phi,k)$, the set of upper-leveled gates in $\Phi$, by
\[
\cU(\Phi,k) = \setdef{v}{\text{$v$ is a gate in $\Phi$, $|X_v| \ge n-k$, and $u$ has a child $v \in \cL(\Phi,k)$}}. \qedhere
\]
\end{definition}

The following lemma shows that if the set of lower-leveled gates is small, then there exists a partition $X=Y\sqcup Z$ under which the polynomial computed by the circuit is not of full rank.

\begin{lemma}
\label{lem:small-lower-leveled-not-full-rank}
Let $\Phi$ be a syntactically multilinear arithmetic circuit over $\F$ and $X=\set{x_1, \ldots, x_n}$, for an even integer $n$, computing $f$. Let $\tau=3\log n$ and $\cL = \cL(\Phi, 100\tau)$. If $|\cL| < n / (10^5 \tau)$, then there exists a partition $X= Y \sqcup Z$ such that $\rank_{Y,Z}(f) < 2^{n/2-1}$.
\end{lemma}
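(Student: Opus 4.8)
The plan is to combine the unbalancing theorem with the structural decomposition of $f$ that Raz, Shpilka and Yehudayoff attach to the lower-leveled gates, and then estimate $\rank_{Y,Z}(f)$ term by term using \autoref{prop:pdmatrix}. Throughout, set $k = 100\tau = 300\log n$ and $\cL = \cL(\Phi,k)$, and recall from \autoref{def:upper-lower-levels} that every $u \in \cL$ satisfies $k < |X_u| < n-k$, so in particular $2\tau \le |X_u| \le n-2\tau$. We may assume the output gate of $\Phi$ has full support $[n]$: otherwise $f$ depends on fewer than $n-100\tau$ variables, and taking $Y$ of size $n/2$ containing $100\tau$ variables on which $f$ does not depend already yields $\rank_{Y,Z}(f) \le 2^{n/2-100\tau} < 2^{n/2-1}$.

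\emph{Step 1 (decomposition).} By the structural result of \cite{RSY08} (the general form of \autoref{lem:lower-level-decomposition}, obtained by zeroing out every gate of $\cL$ and tracking how support sizes evolve along $\Phi$), there exist multilinear polynomials $\{g_u\}_{u \in \cL}$ and $g$ with
\[
f \;=\; \sum_{u \in \cL} g_u \cdot h_u \;+\; g,
\]
where $h_u \in \F[X_u]$ is the polynomial computed at $u$, each $g_u$ involves only variables of $X \setminus X_u$ (so $g_u$ and $h_u$ are variable-disjoint), and $\deg(g) = O(k) = O(\log n)$. If $\cL = \emptyset$ then $f = g$ and $\rank_{Y,Z}(f) \le 2^{(\deg(g)+1)\log(n/2)} = 2^{O(\log^2 n)} < 2^{n/2-1}$ for every balanced partition, so assume $\cL \neq \emptyset$.

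\emph{Step 2 (finding the partition).} The family $\set{X_u : u \in \cL}$ consists of at most $|\cL| < n/(10^5\tau) = \tfrac{1}{10^5}\cdot n/\tau$ sets, each of size in $[2\tau, n-2\tau]$, and $\tau = 3\log n \le n/10^6$ for $n$ large. If every balanced partition $Y$ of $[n]$ satisfied $d_Y(X_u) < \tau$ for some $u \in \cL$, then \autoref{thm:unbalancing-general} (or \autoref{thm:unbalancing} when $n=4p$) would force $|\cL| \ge \tfrac{1}{10^5}\cdot n/\tau$, a contradiction. Hence there is a balanced partition $X = Y \sqcup Z$ with $d_Y(X_u) \ge \tau$ for all $u \in \cL$; we show this $Y$ works. \emph{Step 3 (rank estimate).} By subadditivity of rank (\autoref{prop:pdmatrix}), $\rank_{Y,Z}(f) \le \sum_{u \in \cL} \rank_{Y,Z}(g_u h_u) + \rank_{Y,Z}(g)$. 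Fix $u \in \cL$ and write $Y_u = Y \cap X_u$, $Z_u = Z \cap X_u$. Since $h_u \in \F[X_u]$, $g_u \in \F[X \setminus X_u]$, and $X = X_u \sqcup (X\setminus X_u)$, multiplicativity together with the trivial rank bound (both from \autoref{prop:pdmatrix}) give
\[
\rank_{Y,Z}(g_u h_u) \;=\; \rank_{Y_u,Z_u}(h_u)\cdot \rank_{Y\setminus X_u,\,Z\setminus X_u}(g_u) \;\le\; 2^{\min(|Y_u|,|Z_u|)}\cdot 2^{\,n/2-\max(|Y_u|,|Z_u|)}.
\]
As $|Y|=|Z|=n/2$, one has $\abs{|Y_u|-|Z_u|} = 2\,d_Y(X_u) \ge 2\tau$, so the exponent above equals $n/2 - \abs{|Y_u|-|Z_u|} \le n/2 - 2\tau$, i.e.\ $\rank_{Y,Z}(g_u h_u) \le 2^{n/2-2\tau}$ for every $u \in \cL$. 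Also $\rank_{Y,Z}(g) \le 2^{(\deg(g)+1)\log(n/2)} = 2^{O(\log^2 n)}$. Combining, and using $|\cL| \le n$ and $2\tau = 6\log n$,
\[
\rank_{Y,Z}(f) \;\le\; |\cL|\cdot 2^{n/2-6\log n} + 2^{O(\log^2 n)} \;\le\; 2^{n/2-5\log n} + 2^{O(\log^2 n)} \;<\; 2^{n/2-1}
\]
for $n$ large enough, as desired.

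The only non-routine ingredient is the decomposition of Step 1 with $\deg(g) = O(\log n)$ and — crucially — each $g_u$ disjoint from the \emph{full} syntactic support $X_u$, not merely from the variables actually occurring in $h_u$; this is exactly Proposition 5.5 of \cite{RSY08}, which we take as given in keeping with the rest of this section. Everything else is bookkeeping: a single invocation of \autoref{thm:unbalancing-general} and a term-by-term application of \autoref{prop:pdmatrix}. The one subtlety worth flagging in Step 3 is that $h_u$ need not depend on all of $X_u$; but since $h_u \in \F[X_u]$ the bound $\rank_{Y_u,Z_u}(h_u) \le 2^{\min(|Y_u|,|Z_u|)}$ still holds, and this is precisely what converts the imbalance $d_Y(X_u) \ge \tau$ into the saving $2^{-2\tau}$ that defeats the union bound over $\cL$.
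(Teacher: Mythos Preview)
Your proof is correct and follows the same approach as the paper's own sketch: invoke the decomposition of \cite{RSY08} (their Proposition~5.5 and Proposition~5.8), apply \autoref{thm:unbalancing-general} to the supports $\{X_u : u\in\cL\}$ to obtain a balanced partition that $\tau$-unbalances every $X_u$, and then bound $\rank_{Y,Z}(f)$ term by term via \autoref{prop:pdmatrix}. Your per-term bound $2^{n/2-2\tau}$ is in fact slightly sharper than the $2^{n/2-\tau}$ the paper quotes from \cite{RSY08}, and your preliminary sentence about assuming full output support is imprecisely phrased (support $\neq n$ does not imply support $< n-100\tau$) but is harmless, since the decomposition applies regardless.
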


We first sketch how \autoref{thm:lower-bound} follows from \autoref{lem:small-lower-leveled-not-full-rank}. The proof is identical to the proof given in \cite{RSY08} with slightly different parameters.

\begin{proof}[Proof of \autoref{thm:lower-bound} assuming \autoref{lem:small-lower-leveled-not-full-rank}]
Let $\Psi'$ be the arithmetic circuit computing all $n$ first-order partial derivatives of $f$, given by \autoref{thm:BS-multilinear}. Set $\tau = 3\log n$ and let $\cL = \cL(\Psi', 100\tau)$ and $\cU = \cU(\Psi', 100\tau)$ as in \autoref{def:upper-lower-levels}.

Denote $f_i = \partial f / \partial x_i$ and let $v_i$ be the gate in $\Psi'$ computing $f_i$, and $\Psi'_i$ be the subcircuit of $\Psi'$ rooted at $v_i$. Let $\cL_i = \cL(\Psi'_i, 100\tau)$. It is not hard to show (see \cite{RSY08}) that $\cL_i \subseteq \cL$, and by \autoref{lem:small-lower-leveled-not-full-rank} and \autoref{item:full-rank-derivative} in \autoref{prop:pdmatrix}, it follows that $|\cL_i| \ge n/(10^5\tau)$.

For every gate $v$ in $\Psi'$ define $C_v = \setdef{i \in [n]}{\text{$v$ is a gate in $\Psi_i$}}$ to be the set of indices $i$ such that there exists a directed path from $v$ to the output gate computing $f_i$. For $i \in [n]$, let $\cU_i = \setdef{u \in \cU}{\text{$u$ is a gate in $\Psi'_i$}}$, so that $\sum_{u \in \cU} C_u = \sum_{i \in [n]} |\cU_i|$.

Since the fan-in of each gate is at most two, $|\cL_i| \le 2|\cU_i|$, and since every $u \in \cU$ satisfies $|X_u| \ge n-100\tau$, it follows by \autoref{thm:BS-multilinear} that $|C_u| \le 100 \tau$. Thus, we get
\[
n \cdot \frac{n}{10^5\tau} \le \sum_{i \in [n]} |\cL_i| \le 2 \sum_{i \in [n]} |\cU_i| = 2\sum_{u \in U} C_u \le 2|\cU| \cdot 100\tau.
\]
By \autoref{item:BS-size} in \autoref{thm:BS-multilinear}, and $\tau=3 \log n$,
\[
|\Psi| = \Omega(|\Psi'|) = \Omega(|\cU|) = \Omega\left(\frac{n^2}{\log^2 n}\right). \qedhere
\]
\end{proof}

It remains to prove \autoref{lem:small-lower-leveled-not-full-rank}. As the proof mostly appears in \cite{RSY08}, we only sketch the main steps.

\begin{proof}[Proof sketch of \autoref{lem:small-lower-leveled-not-full-rank}]
Suppose $\cL \le n/(10^5\tau)$. By applying \autoref{thm:unbalancing-general} to the family of sets $\setdef{X_v}{v \in \cL}$, it follows that there exists a balanced partition $Y \sqcup Z$ of $X$ such that $X_v$ is $\tau$-unbalanced for every gate $v \in \cL$ (one could get slightly improved constants in the case $n=4p$ by applying \autoref{thm:unbalancing}).

The proof now proceeds in the exact same manner as the proof of Lemma 5.2 in \cite{RSY08}. In Proposition 5.5 of \cite{RSY08}, it is shown that one can write
\[
f = \sum_{i \in [\ell]} g_i h_i + g,
\]
where $\cL = \set{v_1, \ldots, v_{\ell}}$, $h_i$ is the polynomial computed at $v_i$, and the set of variables appearing in $g_i$ is disjoint from $X_{v_i}$.

In Claim 5.7 of \cite{RSY08}, it is shown that for every $i \in [\ell]$, $\rank_{Y,Z} (g_i h_i) \le 2^{n/2-\tau}$. This uses the fact that $X_{v_i}$ is $\tau$-unbalanced, the upper bound in \autoref{item:trivial-upper-bound} in \autoref{prop:pdmatrix}, and \autoref{item:rank-multiplicative} in the same proposition.

In Proposition 5.8 of \cite{RSY08}, it is shown (with the necessary change of parameters) that the degree of $g$ is at most $200\tau$.

Thus, by the fact that $\tau=3\log n$, \autoref{item:low-degree-rank} and \autoref{item:rank-additive} of \autoref{prop:pdmatrix}, it follows that for large enough $n$,
\[
\rank_{Y,Z} (f) \le \ell \cdot 2^{n/2-\tau} + 2^{\tau^3} < 2^{n/2-1}. \qedhere
\]
\end{proof}

\subsection{An explicit full-rank polynomial}

In this section, for the sake of completeness, we give a construction of a polynomial which is full-rank under any partition of the variables. 

\begin{construction}[Full rank polynomial, \cite{RSY08}]
\label{con:full-rank}
Let $n$ be an even integer, and let $\cW = \set{\omega_1, \ldots, \omega_{n}}$ and $X = \set{x_1, \ldots, x_n}$ be sets of variables.
For a set $B \in \binom{[n]}{n/2}$, denote by $i_1 < \cdots <  i_{n/2}$ the elements of $B$ in increasing order, and by $j_1 < \cdots < j_{n/2}$ the elements of $[n] \setminus B$ in increasing order.
Define $r_B = \prod_{\ell \in B} \omega_\ell$, and $g_B = \prod_{\ell \in [n/2]} (x_{i_\ell} + x_{j_\ell})$.

Finally, define \[
f= \sum_{B \in \binom{[n]}{n/2}} r_B g_B. \qedhere
\]
\end{construction}

\begin{claim}[\cite{RSY08}]
\label{cl:full-rank}
For $f$ from \autoref{con:full-rank}, it holds that for every balanced partition of $X = Y \sqcup Z$,
$\rank_{Y,Z} (f) = 2^{n/2}$, where the rank is taken over $\F(\cW)$.
\end{claim}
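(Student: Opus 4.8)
The plan is to show that for every balanced partition $X = Y \sqcup Z$, the partial derivative matrix $M_{Y,Z}(f)$ contains a permuted identity matrix of full size $2^{n/2}$ as a submatrix, which forces its rank to be exactly $2^{n/2}$. The key observation is that distinct choices of the set $B \in \binom{[n]}{n/2}$ contribute ``independent'' information because the coefficients $r_B = \prod_{\ell \in B} \omega_\ell$ are linearly independent (in fact multiplicatively distinct) monomials over the auxiliary variables $\cW$, so no cancellation can occur between different values of $B$ when we work over the field $\F(\cW)$.

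First I would fix a balanced partition $X = Y \sqcup Z$ with $|Y| = |Z| = n/2$. For a fixed $B$, I analyze the factor $g_B = \prod_{\ell \in [n/2]}(x_{i_\ell} + x_{j_\ell})$, where $i_\ell$ ranges over $B$ and $j_\ell$ over $[n]\setminus B$ in increasing order. The crucial point is that each pair $\{x_{i_\ell}, x_{j_\ell}\}$ pairs one element of $B$ with one element of its complement; I would argue (this is the combinatorial heart of the matter, and the place I expect to spend the most care) that because $Y$ is balanced, for \emph{every} $B$ there is a choice within each factor — pick $x_{i_\ell}$ if $i_\ell \in Y$, else pick $x_{j_\ell}$, or rather one must show that the multiset of $Y$-variables obtainable this way as $B$ varies realizes all of $\binom{[n]}{n/2}$ worth of $Y$-monomials, exactly once each. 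Concretely, expanding $g_B$ gives a sum of $2^{n/2}$ multilinear monomials; each monomial picks, for each $\ell$, either $x_{i_\ell}$ or $x_{j_\ell}$, hence splits as $m_Y \cdot m_Z$ according to $Y$ vs. $Z$. I would track how the pair $(m_Y, m_Z)$ obtained from $g_B$ depends on $B$, and show that as $B$ ranges over $\binom{[n]}{n/2}$ one obtains a bijective correspondence between $B$ and a distinguished monomial of $g_B$ — for instance the monomial where from the $\ell$-th factor we select the variable lying in $Y$ is well-defined only when exactly one of $i_\ell, j_\ell$ is in $Y$, which need not hold, so the correct statement is subtler and should be phrased in terms of counting.

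The cleanest route, which I would ultimately follow, is: the entry of $M_{Y,Z}(f)$ indexed by $(m_1, m_2)$ (a $Y$-monomial and a $Z$-monomial) equals $\sum_B r_B \cdot [\text{coeff of } m_1 m_2 \text{ in } g_B]$, and since the $r_B$ are $\F(\cW)$-linearly independent, $M_{Y,Z}(f)$ has full rank $2^{n/2}$ if and only if for each $B$ the matrix $M_{Y,Z}(g_B)$ has a nonzero entry in a row/column pattern that, across all $B$, covers a full set of $2^{n/2}$ rows and $2^{n/2}$ columns forming a generalized permutation. Here I invoke the multiplicativity of rank, \autoref{item:rank-multiplicative} in \autoref{prop:pdmatrix}: partitioning the $n/2$ factors of $g_B$ according to whether each contributes to $Y$ or $Z$, one shows $\rank_{Y,Z}(g_B) = 2^{|B \cap Y|} \cdot$ (something), and careful bookkeeping shows the supports tile correctly. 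I would organize this by choosing, for the row index, the natural bijection $B \mapsto$ (monomial consisting of the $Y$-variables among the chosen ones) — after verifying via the balancedness of $Y$ that a canonical choice exists — and then checking the resulting matrix is triangular (or block-triangular) with nonzero diagonal after a suitable ordering of the rows and columns, e.g. ordering $B$'s lexicographically.

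\textbf{Main obstacle.} The delicate step is verifying that the $2^{n/2} \times 2^{n/2}$ submatrix one extracts is genuinely full rank and not merely of large support: one must exhibit an explicit bijection between $\binom{[n]}{n/2}$ and a set of (row, column) index pairs, and prove that with this bijection the matrix is (generalized) permutation-like, using both that the $r_B$ are linearly independent over $\F(\cW)$ and that each $g_B$, being a product of $n/2$ disjoint binomials $x_{i_\ell} + x_{j_\ell}$, has all $2^{n/2}$ of its monomials with coefficient $1$. I expect the balancedness condition $|Y| = n/2$ to enter precisely in guaranteeing that the canonical monomial selection is consistent — intuitively, a balanced $Y$ ``matches up'' with the pairing structure $\{i_\ell, j_\ell\}$ so that exactly the right count of variables falls on each side — and nailing down this matching cleanly is where the real work lies; everything else (linear independence of the $r_B$, the rank facts) is routine given \autoref{prop:pdmatrix}.
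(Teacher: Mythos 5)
Your opening instinct---that the $r_B$ are $\F(\cW)$-linearly independent and that this independence should let you isolate information from each $B$ separately---is exactly the right observation, and the pairing structure you point to in $g_B$ is indeed the combinatorial heart of the matter. But your plan does not close the gap you yourself flag as the ``main obstacle,'' and in fact that obstacle is fatal to the route you sketch. The clean ``one $Y$-variable, one $Z$-variable per factor'' structure holds \emph{only} when $B = Y$ (or $B = Z$): for that one choice, each factor $x_{i_\ell} + x_{j_\ell}$ has $i_\ell \in Y$ and $j_\ell \in Z$, so $M_{Y,Z}(g_Y)$ is a $2^{n/2}\times 2^{n/2}$ permutation matrix by \autoref{item:rank-multiplicative}. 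For general $B$, some factors have both endpoints in $Y$ and others both in $Z$, the ``select the $Y$-variable in each factor'' rule is undefined, and $g_B$ need not contribute a nonzero entry in any row/column that tiles nicely with the others. You acknowledge this but then say you ``expect'' balancedness to fix the matching; it does not, and trying to build a generalized permutation submatrix by ranging over all $B$ simultaneously is the hard way in.

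The paper's proof sidesteps this entirely with a specialization trick you did not find: view $\det(M_{Y,Z}(f))$ as a polynomial in $\F[\cW]$ and substitute $\omega_i = 1$ for $i \in Y$, $\omega_i = 0$ otherwise. Since $|B| = n/2 = |Y|$, this makes $r_B$ vanish for every $B \neq Y$ and $r_Y = 1$, so the substituted $f$ is exactly $g_Y$, whose partial derivative matrix is a permutation matrix as above. Hence $\det(M_{Y,Z}(f))$ is a nonzero polynomial, i.e.\ the rank over $\F(\cW)$ is $2^{n/2}$. Linear independence of the $r_B$ alone, which is the tool you lean on, is not enough to force full rank of a sum $\sum_B r_B M_{Y,Z}(g_B)$; what you need is a single evaluation point where the determinant is manifestly nonzero, and the substitution picks out precisely the one $B$ for which the pairing structure is trivially good. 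You should rework your argument around that specialization rather than around a global bijection on $\binom{[n]}{n/2}$.
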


We give a proof which is shorter and simpler than the one given in \cite{RSY08}.

\begin{proof}[Proof of \autoref{cl:full-rank}]
Fix a balanced partition $X=Y \sqcup Z$, and consider the matrix $M_{Y,Z}(f)$ where $f$ is interpreted as a polynomial in $f \in (\cF[\cW])[X]$ (that is, the rows and columns of the matrix are indexed by $X$ variables and its entries are polynomials in $\cW$). We want to show that $\det(M_{Y,Z}(f)) \in \F[\cW]$ is a non-zero polynomial. Fix $\omega_i = 1$ if $i \in Y$ and $\omega_i = 0$ otherwise. Under this restriction, $f=g_Y$. It is also not hard to see that $\det(M_{Y,Z}(g_Y)) \neq 0$, since this is a permutation matrix (this also follows from \autoref{item:rank-multiplicative} of \autoref{prop:pdmatrix}).  Thus, $\det(M_{Y,Z}(f))$ evaluates to a non-zero value under this setting of the variables $\cW$, which implies it a non-zero polynomial.
\end{proof}

\begin{corollary}
Every syntactically multilinear circuit computing $f$ has size at least $\Omega(n^2 / \log^2 n)$.
\end{corollary}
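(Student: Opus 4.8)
The plan is to obtain the corollary by simply feeding the explicit polynomial of Construction~\ref{con:full-rank} into Theorem~\ref{thm:lower-bound}. Recall that Theorem~\ref{thm:lower-bound} says: if $g \in \F'[X]$ is multilinear and $\rank_{Y,Z}(g) = 2^{n/2}$ for \emph{every} balanced partition $X = Y \sqcup Z$, then any syntactically multilinear circuit over $\F'$ computing $g$ has size $\Omega(n^2/\log^2 n)$. Hence the only thing to verify is that $f$ from Construction~\ref{con:full-rank} meets the full-rank hypothesis — and that is precisely the content of Claim~\ref{cl:full-rank}.

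First I would invoke Claim~\ref{cl:full-rank}: for $f = \sum_{B \in \binom{[n]}{n/2}} r_B g_B$, and for every balanced partition $X = Y \sqcup Z$, one has $\rank_{Y,Z}(f) = 2^{n/2}$, where the rank is taken over the field $\F(\cW)$ of rational functions in the auxiliary variables $\omega_1, \ldots, \omega_n$. Thus $f$, regarded as a multilinear polynomial in the $X$-variables over the base field $\F' = \F(\cW)$, is exactly of the form required by Theorem~\ref{thm:lower-bound}. Applying that theorem with base field $\F(\cW)$ immediately gives that any syntactically multilinear circuit over $\F(\cW)$ computing $f$ has size $\Omega(n^2/\log^2 n)$, which is the claimed bound; and since $f$ has coefficients that are explicit polynomials in finitely many variables, this is an explicit hard family in the sense of Theorem~\ref{thm:lower-bound-informal}.

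If one instead wants a hard polynomial with coefficients in $\F$ itself (assuming $\F$ is large enough, or passing to a finite extension), I would note that the condition ``$\det(M_{Y,Z})$ does not vanish'' is a Zariski-open condition on the specialization $\omega \mapsto \alpha \in \F^n$, and it is nonempty for each of the finitely many balanced partitions by the proof of Claim~\ref{cl:full-rank}; hence a generic specialization $\alpha$ yields a polynomial $f|_{\omega = \alpha} \in \F[X]$ that is full rank under every balanced partition, and Theorem~\ref{thm:lower-bound} applied over $\F$ gives the same $\Omega(n^2/\log^2 n)$ lower bound for syntactically multilinear circuits over $\F$ computing it. There is no real obstacle in this argument; the only point requiring a moment's care is keeping the field over which rank is computed (namely $\F(\cW)$, or $\F$ after specialization) consistent between the application of Claim~\ref{cl:full-rank} and the application of Theorem~\ref{thm:lower-bound}.
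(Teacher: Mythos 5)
Your proposal is correct and matches the paper's (implicit) argument: \autoref{cl:full-rank} supplies the full-rank hypothesis of \autoref{thm:lower-bound} over the base field $\F(\cW)$, and the corollary follows by applying that theorem with $\F(\cW)$ as the coefficient field. Your second paragraph on generically specializing the $\omega$-variables to land in $\F$ is a sound extra remark, not needed for the corollary as stated but consistent with the paper's subsequent discussion of treating rational functions in $\cW$ as constants.
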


The polynomial $f$ in \autoref{con:full-rank} is in the class $\VNP$
of explicit polynomials, but it is not known whether there exists a
polynomial size multilinear circuit for $f$.

Raz and Yehudayoff \cite{RY08} constructed a full-rank polynomial $g
\in \F[X,\cW']$ that has a  syntactically multilinear circuit of size
$O(n^3)$. Their construction also uses a set of auxiliary variables $\cW'$
of size $O(n^3)$. Thus, if one measures the complexity as a function of
$|X| \cup |\cW'|$, the quadratic lower bound of \autoref{thm:lower-bound}
is meaningless, because a lower bound of $\Omega(n^3)$ holds
trivially. However, we believe that since the rank is taken over
$\F(\cW')$, it is only fair to consider computations over $\F(\cW')$,
where any rational expression in the variables of $\cW'$ is merely a
field constant. Thus, in this setting, an input gate can be labeled by
an arbitrarily complex rational function in the variables of $\cW'$,
and the complexity is measured as a function of $|X|$ alone. In this
model the lower bound of \autoref{thm:lower-bound} \emph{is} meaningful,
and furthermore, this example shows that the partial derivative matrix
technique cannot prove an $\omega(n^3)$ lower bound.

\section*{Acknowledgments}
Part of this work was done while the first author was visiting  Tel Aviv
University. We thank  Amir Shpilka for  the visit, for many insightful
discussions, and for comments on an earlier version of this text. We
are also thankful to Andy Drucker for pointing out a correction in a
previous version of this paper.

\bibliographystyle{customurlbst/alphaurlpp}
\bibliography{references}

\end{document}